\tikzset{
 photon/.style={decorate, decoration={snake}, draw=red},
    electron/.style={draw=blue, postaction={decorate},
        decoration={markings,mark=at position .55 with {\arrow[draw=blue]{>}}}},
    gluon/.style={decorate, draw=magenta,
        decoration={coil,amplitude=4pt, segment length=5pt}},
    sderiv/.style={postaction={decorate},
        decoration={markings,mark=at position .3 with {\arrow{>}}}},
    tderiv/.style={postaction={decorate},
        decoration={markings,mark=at position .7 with {\arrow{<}}}},
    stderiv/.style={postaction={decorate},
        decoration={markings,mark=at position .7 with {\arrow{<}},mark=at position .3 with {\arrow{>}}}}
}
\newcommand\scaleobject[2]{\hstretch{#1}{\vstretch{#1}{#2}}}
\newcommand\circT{\mathbin{\scaleobject{0.6}{^\bigcirc}\mkern-6mu\scaleobject{0.5}{^{\TT}}}}
\definecolor{see}{RGB}{67,75,179}
\definecolor{darksee}{RGB}{42,44,148}
\definecolor{honey}{RGB}{232,180,129}
\definecolor{lighthoney}{RGB}{255,254,220}
\definecolor{citecol}{rgb}{0.5,0,0} 
\definecolor{blue1}{RGB}{130,150,209}
\DeclareSymbolFont{bbold}{U}{bbold}{m}{n}
\DeclareSymbolFontAlphabet{\mathbbold}{bbold}
\definecolor{see}{RGB}{67,75,179}
\newcommand{\fA}{\mathfrak{A}}
\newcommand{\Gcal}{\mathcal{G}}  
\newcommand{\G}{\mathcal{G}}  
\newcommand{\Kcal}{\mathcal{K}}  
\newcommand{\Ccal}{\mathcal{C}}
\newcommand{\Dcal}{\mathcal{D}}
\newcommand{\Ecal}{\mathcal{E}} 
\newcommand{\Fcal}{\mathcal{F}}
\newcommand{\Mcal}{\mathcal{M}}
\newcommand{\Ocal}{\mathcal{O}}
\newcommand{\Scal}{\mathcal{S}}
\newcommand{\Pcal}{\mathcal{P}}
\newcommand{\Qcal}{\mathcal{Q}}
\newcommand{\Rcal}{\mathcal{R}}
\newcommand{\Tcal}{\mathcal{T}}
\newcommand{\Xcal}{\mathcal{X}}
\newcommand{\Ycal}{\mathcal{Y}}
\newcommand{\Ci}{\mathcal{C}^\infty} 
\newcommand{\WF}{\mathrm{WF}}         
\newcommand{\id}{\mathrm{id}}               
\newcommand{\loc}{\mathrm{loc}}
\newcommand{\reg}{\mathrm{reg}}
\newcommand{\mc}{{\mu\mathrm{c}}}
\newcommand{\NN}{\mathbb{N}}          
\newcommand{\ZZ}{\mathbbmss{Z}}     
\newcommand{\RR}{\mathbb{R}}           
\newcommand{\CC}{\mathbb{C}}           
\newcommand{\bet}{\beta}
\newcommand{\la}{\lambda}
\newcommand{\ph}{\varphi}
\newcommand{\T}{\cdot_{{}^\Tcal}}
\newcommand{\TT}{\Tcal}
\newcommand{\DFp}{\Dcal_{\mathrm{F}}}
\newcommand{\eom}{{\textsc{eom}}}
\newcommand{\sst}[1]{\scriptscriptstyle{#1}}  
\newcommand{\minus}{\sst{-1}}   
\newcommand{\be}{\begin{equation}}
\newcommand{\ee}{\end{equation}}
\DeclareMathOperator{\supp}{supp}      
\newcommand{\Pei}[2]{\{#1,#2\}}
\newcommand{\mT}{m_{\sst{\TT}}}
\newcommand{\starb}{\mathbin{\star_{\sst \TT}}}
\DeclareMathOperator{\Aut}{Aut}
\newcommand{\abs}[1]{\lvert#1\rvert}
\newcommand{\acts}[1]{\overset{\twoheadrightarrow}{#1}}
\newcommand{\starz}{\star_{\sst 0}}
\newcommand{\bidis}{K}
\newcommand{\dTH}{\cdot_{{}^{\TT_H}}}
\newcommand{\starbint}{\star_{\sst{\TT,\mathrm{int}}}}
\DeclareMathOperator{\Iso}{Iso}
\newcommand{\starH}{\star_{\sst H}}
\newcommand{\starHint}{\star_{\sst{H,\mathrm{int}}}}
\newcommand{\bint}{\bullet_{\sst{\mathrm{int}}}}
\newcommand{\startr}{\mathbin{\star_{\sst \TT,r}}}
\newcommand{\Weyl}{{\mathrm{Weyl}}}
\DeclareMathOperator{\Had}{Had}
\newcommand{\SolidLine}{
}
\newcommand{\DashedLineOneTwo}{\begin{tikzpicture}[baseline=90mm,x=1.00mm, y=1.00mm, inner xsep=0pt, inner ysep=0pt, outer xsep=0pt, outer ysep=0pt]
	\useasboundingbox  (83.00,89.92) rectangle +(20,3.5);
	\path[line width=0mm] (82.89,87.97) rectangle +(19.63,8.61);
	\definecolor{L}{rgb}{0,0,0}
	\path[line width=0.30mm, draw=L] (93.00,92.90);
	\path[line width=0.30mm, draw=L, dash pattern=on 3pt off 2pt] (93.34,94.58);
	\path[line width=0.21mm, draw=L] (88.17,91.73) -- (89.26,90.98) -- (88.17,90.37);
	\path[line width=0.21mm, draw=L,dash pattern=on 3pt off 2pt] (85.98,91.03) -- (99.00,91.02);
	\draw(84.89,90.40) node[anchor=base west]{\fontsize{5.69}{6.83}\selectfont $\mathrm{1}$};
	\draw(99.52,90.43) node[anchor=base west]{\fontsize{5.69}{6.83}\selectfont $\mathrm{2}$};
	\definecolor{F}{rgb}{0,0,0}
	\path[line width=0.30mm, draw=L, fill=F] (92.53,91.10) circle (0.36mm);
	\path[line width=0.21mm, draw=L] (96.84,91.76) -- (95.75,91.01) -- (96.84,90.40);
	\end{tikzpicture}}
\newcommand{\GraphKOne}{\begin{tikzpicture}[baseline=89.2mm,x=1.00mm, y=1.00mm, inner xsep=0pt, inner ysep=0pt, outer xsep=0pt, outer ysep=0pt]
	\useasboundingbox  (83.5,88) rectangle +(16,11.5);
	\definecolor{L}{rgb}{0,0,0}
	\path[line width=0.30mm, draw=L] (92.30,91.88);
	\path[line width=0.30mm, draw=L, dash pattern=on 2.00mm off 1.00mm] (92.64,93.56);
	\path[line width=0.21mm, draw=L] (85.24,90.46) -- (91.27,96.49) -- (97.19,90.54);
	\definecolor{F}{rgb}{0,0,0}
	\path[line width=0.21mm, draw=L, fill=F] (85.24,90.46) -- (86.73,90.95) -- (85.24,90.46) -- (85.74,91.94) -- (85.24,90.46) -- cycle;
	\path[line width=0.21mm, draw=L, fill=F] (97.19,90.54) -- (96.83,91.89) -- (95.84,90.90) -- (97.19,90.54) -- cycle;
	\draw(83.84,89.46) node[anchor=base west]{\fontsize{5.69}{6.83}\selectfont 1};
	\draw(97.72,89.52) node[anchor=base west]{\fontsize{5.69}{6.83}\selectfont 2};
	\path[line width=0.30mm, draw=L, fill=F] (91.24,96.56) circle (0.38mm);
	\end{tikzpicture}}
\newcommand{\GraphKTwo}{\begin{tikzpicture}[baseline=89.2mm,x=1.00mm, y=1.00mm, inner xsep=0pt, inner ysep=0pt, outer xsep=0pt, outer ysep=0pt]
	\useasboundingbox  (83.5,88) rectangle +(16,16);
	\definecolor{L}{rgb}{0,0,0}
	\path[line width=0.30mm, draw=L] (92.30,91.88);
	\path[line width=0.30mm, draw=L, dash pattern=on 2.00mm off 1.00mm] (92.64,93.56);
	\path[line width=0.21mm, draw=L] (85.24,90.46) -- (91.27,96.49) -- (97.19,90.54);
	\definecolor{F}{rgb}{0,0,0}
	\path[line width=0.21mm, draw=L, fill=F] (85.24,90.46) -- (86.73,90.95) -- (85.24,90.46) -- (85.74,91.94) -- (85.24,90.46) -- cycle;
	\path[line width=0.21mm, draw=L, fill=F] (97.19,90.54) -- (96.83,91.89) -- (95.84,90.90) -- (97.19,90.54) -- cycle;
	\draw(83.84,89.46) node[anchor=base west]{\fontsize{5.69}{6.83}\selectfont 1};
	\draw(97.72,89.52) node[anchor=base west]{\fontsize{5.69}{6.83}\selectfont 2};
	\path[line width=0.30mm, draw=L, fill=F] (91.24,96.56) circle (0.38mm);
	\path[line width=0.21mm, draw=L] (91.58,96.78) -- (97.61,102.80);
	\path[line width=0.21mm, draw=L, fill=F] (91.58,96.78) -- (93.07,97.27) -- (91.58,96.78) -- (92.08,98.26) -- (91.58,96.78) -- cycle;
	\end{tikzpicture}}
\newcommand{\GraphKThree}{\begin{tikzpicture}[baseline=89.2mm,x=1.00mm, y=1.00mm, inner xsep=0pt, inner ysep=0pt, outer xsep=0pt, outer ysep=0pt]
	\useasboundingbox  (83.5,88) rectangle +(16,16);
	\definecolor{L}{rgb}{0,0,0}
	\path[line width=0.30mm, draw=L] (92.30,91.88);
	\path[line width=0.30mm, draw=L, dash pattern=on 2.00mm off 1.00mm] (92.64,93.56);
	\path[line width=0.21mm, draw=L] (85.24,90.46) -- (91.27,96.49) -- (97.19,90.54);
	\definecolor{F}{rgb}{0,0,0}
	\path[line width=0.21mm, draw=L, fill=F] (85.24,90.46) -- (86.73,90.95) -- (85.24,90.46) -- (85.74,91.94) -- (85.24,90.46) -- cycle;
	\path[line width=0.21mm, draw=L, fill=F] (97.19,90.54) -- (96.83,91.89) -- (95.84,90.90) -- (97.19,90.54) -- cycle;
	\draw(83.84,89.46) node[anchor=base west]{\fontsize{5.69}{6.83}\selectfont 1};
	\draw(97.72,89.52) node[anchor=base west]{\fontsize{5.69}{6.83}\selectfont 2};
	\path[line width=0.30mm, draw=L, fill=F] (91.24,96.56) circle (0.38mm);
	\path[line width=0.30mm, draw=L] (97.10,102.21);
	\path[line width=0.30mm, draw=L, dash pattern=on 2.00mm off 1.00mm] (97.44,103.89);
	\path[line width=0.21mm, draw=L] (91.58,96.89) -- (97.61,102.92);
	\path[line width=0.21mm, draw=L, fill=F] (91.58,96.89) -- (92.93,97.25) -- (91.94,98.24) -- (91.58,96.89) -- cycle;
	\path[line width=0.21mm, draw=L] (84.93,102.76) -- (90.85,96.82);
	\path[line width=0.21mm, draw=L, fill=F] (90.85,96.82) -- (90.36,98.30) -- (90.85,96.82) -- (89.37,97.31) -- (90.85,96.82) -- cycle;
	\end{tikzpicture}}
\newcommand{\GraphThirtyTwo}{\begin{tikzpicture}[baseline=90mm,x=1.00mm, y=1.00mm, inner xsep=0pt, inner ysep=0pt, outer xsep=0pt, outer ysep=0pt]
	\useasboundingbox  (83.00,89.92) rectangle +(20,10);
	\definecolor{L}{rgb}{0,0,0}
	\path[line width=0.30mm, draw=L] (93.00,92.90);
	\path[line width=0.30mm, draw=L, dash pattern=on 2.00mm off 1.00mm] (93.34,94.58);
	\path[line width=0.21mm, draw=L] (89.78,91.78) -- (88.69,91.03) -- (89.78,90.42);
	\path[line width=0.21mm, draw=L] (85.01,91.05) -- (101.61,91.05);
	\definecolor{F}{rgb}{0,0,0}
	\path[line width=0.30mm, draw=L, fill=F] (93.35,91.08) circle (0.38mm);
	\path[line width=0.21mm, draw=L] (98.52,91.78) -- (97.43,91.03) -- (98.52,90.42);
	\path[line width=0.30mm, draw=L] (93.28,96.98);
	\path[line width=0.30mm, draw=L, dash pattern=on 2.00mm off 1.00mm] (93.62,98.65);
	\path[line width=0.21mm, draw=L] (93.39,91.44) -- (93.40,98.70);
	\path[line width=0.21mm, draw=L] (92.67,95.61) -- (93.41,94.52) -- (94.03,95.61);
	\draw(83.93,90.42) node[anchor=base west]{\fontsize{5.69}{6.83}\selectfont 1};
	\draw(101.68,90.42) node[anchor=base west]{\fontsize{5.69}{6.83}\selectfont 2};
	\end{tikzpicture}}
\newcommand{\GraphThirtyThree}{\begin{tikzpicture}[baseline=90mm,x=1.00mm, y=1.00mm, inner xsep=0pt, inner ysep=0pt, outer xsep=0pt, outer ysep=0pt]
	\useasboundingbox  (83.00,89.92) rectangle +(20,8);
	\definecolor{L}{rgb}{0,0,0}
	\path[line width=0.30mm, draw=L] (92.06,92.02);
	\path[line width=0.30mm, draw=L, dash pattern=on 2.00mm off 1.00mm] (91.20,93.50);
	\path[line width=0.21mm, draw=L] (89.78,91.78) -- (88.69,91.03) -- (89.78,90.42);
	\path[line width=0.21mm, draw=L] (85.01,91.05) -- (101.61,91.05);
	\definecolor{F}{rgb}{0,0,0}
	\path[line width=0.30mm, draw=L, fill=F] (93.35,91.08) circle (0.38mm);
	\path[line width=0.21mm, draw=L] (98.52,91.78) -- (97.43,91.03) -- (98.52,90.42);
	\path[line width=0.30mm, draw=L] (89.56,95.25);
	\path[line width=0.30mm, draw=L, dash pattern=on 2.00mm off 1.00mm] (88.70,96.73);
	\path[line width=0.21mm, draw=L] (93.32,91.19) -- (88.51,96.62);
	\path[line width=0.21mm, draw=L] (90.01,93.82) -- (91.29,93.50) -- (91.03,94.73);
	\draw(83.93,90.42) node[anchor=base west]{\fontsize{5.69}{6.83}\selectfont 1};
	\draw(101.68,90.42) node[anchor=base west]{\fontsize{5.69}{6.83}\selectfont 2};
	\path[line width=0.30mm, draw=L] (94.74,91.94);
	\path[line width=0.30mm, draw=L, dash pattern=on 2.00mm off 1.00mm] (95.59,93.43);
	\path[line width=0.30mm, draw=L] (97.23,95.18);
	\path[line width=0.30mm, draw=L, dash pattern=on 2.00mm off 1.00mm] (98.09,96.66);
	\path[line width=0.21mm, draw=L] (93.48,91.12) -- (98.28,96.54);
	\path[line width=0.21mm, draw=L] (95.46,94.31) -- (96.70,94.78) -- (96.58,93.53);
	\end{tikzpicture}}
\newcommand{\GraphThirtyFour}{\begin{tikzpicture}[baseline=90mm,x=1.00mm, y=1.00mm, inner xsep=0pt, inner ysep=0pt, outer xsep=0pt, outer ysep=0pt]
	\useasboundingbox  (83.00,89.92) rectangle +(20,10);
	\definecolor{L}{rgb}{0,0,0}
	\path[line width=0.30mm, draw=L] (89.61,92.84);
	\path[line width=0.30mm, draw=L, dash pattern=on 2.00mm off 1.00mm] (89.96,94.52);
	\path[line width=0.21mm, draw=L] (93.77,91.78) -- (92.68,91.03) -- (93.77,90.42);
	\path[line width=0.21mm, draw=L] (85.01,91.05) -- (101.61,91.05);
	\definecolor{F}{rgb}{0,0,0}
	\path[line width=0.30mm, draw=L, fill=F] (89.97,91.02) circle (0.38mm);
	\path[line width=0.21mm, draw=L] (100.02,91.80) -- (98.93,91.05) -- (100.02,90.44);
	\path[line width=0.30mm, draw=L] (89.89,96.92);
	\path[line width=0.30mm, draw=L, dash pattern=on 2.00mm off 1.00mm] (90.23,98.60);
	\path[line width=0.21mm, draw=L] (90.01,91.39) -- (90.01,98.64);
	\path[line width=0.21mm, draw=L] (89.28,95.56) -- (90.03,94.47) -- (90.64,95.56);
	\draw(83.93,90.42) node[anchor=base west]{\fontsize{5.69}{6.83}\selectfont 1};
	\draw(101.68,90.42) node[anchor=base west]{\fontsize{5.69}{6.83}\selectfont 2};
	\path[line width=0.30mm, draw=L] (96.09,92.86);
	\path[line width=0.30mm, draw=L, dash pattern=on 2.00mm off 1.00mm] (96.43,94.54);
	\path[line width=0.30mm, draw=L, fill=F] (96.44,91.04) circle (0.38mm);
	\path[line width=0.30mm, draw=L] (96.37,96.94);
	\path[line width=0.30mm, draw=L, dash pattern=on 2.00mm off 1.00mm] (96.71,98.62);
	\path[line width=0.21mm, draw=L] (96.48,91.41) -- (96.49,98.66);
	\path[line width=0.21mm, draw=L] (95.76,94.48) -- (96.50,95.58) -- (97.12,94.48);
	\path[line width=0.21mm, draw=L] (87.84,91.76) -- (86.75,91.01) -- (87.84,90.40);
	\end{tikzpicture}}
\newcommand{\GraphThirtyFourOpposite}{\begin{tikzpicture}[baseline=90mm,x=1.00mm, y=1.00mm, inner xsep=0pt, inner ysep=0pt, outer xsep=0pt, outer ysep=0pt]
	\useasboundingbox  (83.00,89.92) rectangle +(20,10);
	\definecolor{L}{rgb}{0,0,0}
	\path[line width=0.30mm, draw=L] (89.61,92.84);
	\path[line width=0.30mm, draw=L, dash pattern=on 2.00mm off 1.00mm] (89.96,94.52);
	\path[line width=0.21mm, draw=L] (93.77,91.78) -- (92.68,91.03) -- (93.77,90.42);
	\path[line width=0.21mm, draw=L] (85.01,91.05) -- (101.61,91.05);
	\definecolor{F}{rgb}{0,0,0}
	\path[line width=0.30mm, draw=L, fill=F] (89.97,91.02) circle (0.38mm);
	\path[line width=0.21mm, draw=L] (100.02,91.80) -- (98.93,91.05) -- (100.02,90.44);
	\path[line width=0.30mm, draw=L] (89.89,96.92);
	\path[line width=0.30mm, draw=L, dash pattern=on 2.00mm off 1.00mm] (90.23,98.60);
	\path[line width=0.21mm, draw=L] (90.01,91.39) -- (90.01,98.64);
	\path[line width=0.21mm, draw=L] (89.28,94.47) -- (90.03,95.56) -- (90.64,94.47);
	\draw(83.93,90.42) node[anchor=base west]{\fontsize{5.69}{6.83}\selectfont 1};
	\draw(101.68,90.42) node[anchor=base west]{\fontsize{5.69}{6.83}\selectfont 2};
	\path[line width=0.30mm, draw=L] (96.09,92.86);
	\path[line width=0.30mm, draw=L, dash pattern=on 2.00mm off 1.00mm] (96.43,94.54);
	\path[line width=0.30mm, draw=L, fill=F] (96.44,91.04) circle (0.38mm);
	\path[line width=0.30mm, draw=L] (96.37,96.94);
	\path[line width=0.30mm, draw=L, dash pattern=on 2.00mm off 1.00mm] (96.71,98.62);
	\path[line width=0.21mm, draw=L] (96.48,91.41) -- (96.49,98.66);
	\path[line width=0.21mm, draw=L] (95.76,95.58) -- (96.50,94.48) -- (97.12,95.58);
	\path[line width=0.21mm, draw=L] (87.84,91.76) -- (86.75,91.01) -- (87.84,90.40);
	\end{tikzpicture}}
 \theoremstyle{plain}
  \newtheorem{thm}{Theorem}[section]
  \newtheorem{prop}[thm]{Proposition}
  \newtheorem{cor}[thm]{Corollary}
  \newtheorem{lemma}[thm]{Lemma}
 \theoremstyle{definition}
  \newtheorem{df}[thm]{Definition}
 \theoremstyle{remark}
  \newtheorem{rem}[thm]{Remark}
  \newtheorem{exa}[thm]{Example}
 \numberwithin{equation}{section}
\author{Eli Hawkins, Kasia Rejzner}
\title{The Star Product in\\ Interacting Quantum Field Theory}
\date{}
\begin{document}
 \sloppy
\maketitle
\begin{center}
\vspace{-4ex}
\emph{\small Department of Mathematics}\\
\emph{\small The University of York, United Kingdom}\\
{\small eli.hawkins@york.ac.uk}\\
{\small kasia.rejzner@york.ac.uk}\\\end{center}
\begin{abstract}
	We propose a new formula for the star product in deformation quantization of  Poisson structures  related in a specific way to a variational problem for a function $S$, interpreted as the action functional. Our approach is motivated by perturbative Algebraic Quantum Field Theory (pAQFT). We provide a direct combinatorial formula for the star product and we show that it can be applied to a certain class of infinite dimensional manifolds (e.g., regular observables in pAQFT). This is the first step towards understanding how pAQFT can be formulated such that the only formal parameter is $\hbar$, while the coupling constant can be treated as a number.

	In the introductory part of the paper, apart from reviewing the framework, we make precise several statements present in the pAQFT literature and  recast these in the language of (formal) deformation quantization. 
	Finally, we use our formalism to streamline the proof of perturbative agreement provided by Drago, Hack, and Pinamonti and to generalize some of the results obtained in that work to the case of a non-linear interaction. 
\end{abstract}
\tableofcontents

\section{Introduction}
Constructing interacting quantum field theory (QFT) models in 4 dimensions is one of the most important challenges facing  modern theoretical physics. Even though there is no final consensus on how the actual axiomatic framework underlying QFT should be formulated, most attempts at construction of models try to fit into one of the established axiomatic systems: Wightman-G{\r{a}}rding \cite{StreaterWightman}, Haag-Kastler \cite{HK,Haag} or Osterwalder-Schrader \cite{OS73}. On the other hand, most computations in QFT are done using less rigorous methods and  often rely on a perturbation theory expansion organized in terms of Feynman graphs. 

A new approach that combines the advantages of a mathematically sound axiomatic framework with perturbative methods has emerged in the last two decades; it is called perturbative algebraic quantum field theory (pAQFT). The foundations were laid in  \cite{DF,DF02,DF04,DF05,DFloop,BDF} and further results concerning fermionic fields and gauge theory were obtained in \cite{Rej11a,FR,FR3}. For reviews see \cite{Book} and \cite{Due19}.

One of the ingredients of pAQFT is formal deformation quantization --- the construction of a (quantum) {\it star product} for the algebra of observables \cite{BFFLS1,BFFLS2}. The first application of deformation quantization to QFT appears in the works of Dito \cite{Dit90,Dit93}.

There are two well-known general constructions of star products on finite-dimensional manifolds: The Fedosov construction \cite{Fedosov} starts from a symplectic manifold with some torsion-free connection; the Kontsevich construction  \cite{Kon} starts from a Poisson structure on an affine space. The former has been generalized to the infinite dimensional setting by Collini \cite{Collini}. In this approach (commonly referred to as the ``on-shell approach''), the equations of motion for the fields are imposed before quantization and the space of solutions is a sort of infinite-dimensional symplectic manifold.

Here, we choose to work in the off-shell approach, i.e. the equations of motion are imposed after quantization. The space of field configurations is a sort of infinite-dimensional \textit{Poisson} manifold, which suggests that something in the spirit of Kontsevich is called for. However, the Kontsevich formula does not carry over verbatim to the infinite dimensional case \cite{Dit15}. This motivates the search for an alternative formulation and a new formula for the star product that can be applied to infinite dimensional Poisson manifolds, at least in some class of interesting examples.

A possible way forward has been suggested in \cite{FR3}, where an interacting star product for a sufficiently regular class of functionals has been constructed by an indirect method. There, the theory was formulated in terms of formal power series in $\hbar$ and the coupling constant $\lambda$. In this (perturbative) approach, the starting point is an affine configuration space with an action functional that is split into a preferred quadratic (free) part and the remainder. The free part is quantized with the help of a Hadamard distribution (this notion mimics the properties of the 2-point function of the Minkowski vacuum state). This is then translated into a star product for the full theory by using quantum M{\o}ller operators and the algebraic adiabatic limit. 

As far as the relation to Kontsevich's approach is concerned, we note that the product introduced in \cite{FR3} cannot arise from Kontsevich's construction, because the first order (in $\hbar$) part of the star product is not proportional to the Poisson structure (instead, it is given in terms of the Hadamard function that differs from the Poisson structure by a symmetric bidistribution). A plausible generalization of Kontsevich's construction would be one that constructs the star product from this first order term and the affine structure. We will show here that such a construction cannot give the desired star product.

What then? In field theory, the Poisson structure is not a given structure, but follows from the action, hence it is not surprising that the action is an appropriate ingredient for a construction of the star product. The affine structure is an acceptable ingredient as well. The preferred quadratic part, however, is an undesirable ingredient, because we would like to achieve a nonperturbative construction.

The purpose of the present paper is to work towards a direct construction of the star product for infinite dimensional affine manifolds, where the Poisson structure follows from the given action. The direct formula we propose in this work agrees with the one of \cite{FR3} on the perturbative level, but it goes beyond that. What we achieve here is a construction of the interacting star product that depends only on the full action, the causal structure, and the affine structure of the configuration space. This is a desirable result from the pAQFT perspective, since in many situations the split into free and interacting theory is unnatural and the physical results should not depend on this split. This is the case, for example, in quantum gravity, as indicated in \cite{BFRej13}. Moreover, our construction is nonperturbative in the sense that it does not require a preferred quadratic action and only $\hbar$ is treated as a formal parameter. 

We start our construction by first transforming the free star product to remove the dependence on a Hadamard distribution. This means that we treat the quantization relative to the time-ordered product, rather than an (unphysical in the quantum context) pointwise product. Unfortunately, in the infinite dimensional case, this transformation only works if the interaction and observables are sufficiently regular (for field theories in 3 or more spacetime dimensions, this implies that all their functional derivatives have to be given by smooth compactly supported functions). Nevertheless, we believe that the construction and its combinatorial structure may be relevant for a nonperturbative construction of the star product for a more realistic action.  As a byproduct, we also show how to nonperturbatively construct the quantum M{\o}ller operator for regular observables from the classical M{\o}ller operator. (The existence of classical M{\o}ller operators and of retarded and advanced Green functions for regular interactions is implied by the results in the paper \cite{FV}, which is expected to appear soon.) Our main results are Theorems  \ref{Reduced star V} and \ref{non:pert:H}.

Our construction works in the finite-dimensional case without any restrictions and it would be interesting to investigate the relation between our star product and other known constructions, e.g., in geometric quantization. We will address this point in our future research. Another potential realm of applications is field theories in two spacetime dimensions, where the singularity of the Feynman propagator is logarithmic and we expect that a larger class of observables and interactions can be treated using our methods. Recently, a construction of Sine Gordon model in 2D was achieved by \cite{BR16} using pAQFT methods, so it would be interesting to apply our formula for the star product to that case and to other theories, where exact results are expected to hold.



The paper is organized as follows. In Sections 2--3 we review the construction of classical field theory in the pAQFT framework and provide a more rigorous proof of the result of \cite{DF02,BreDue} that the retarded M{\o}ller map intertwines between the Peierls brackets of the free and interacting theories. In Section 4 we discuss deformation quantization. We introduce several natural quantization maps, useful in pAQFT and prove a theorem that completely characterizes the ambiguity in constructing the time-ordering operator on regular functionals. This result mimics the Main Theorem of Renormalization proven for local functionals in \cite{BDF}. In Section 5 we introduce the formal S-matrix and the quantum M{\o}ller operator and in Proposition~\ref{quantum correction} we show how the latter can be constructed nonperturbatively on regular functionals from the  classical M{\o}ller operator. Next we prove a direct formula, eq.~\eqref{reduced star V}, for the interacting star product and show that it makes sense nonperturbatively in the coupling constant. In this section we also discuss the relation to the formula of Kontsevich, provide some useful formulae for the quantum M{\o}ller operator, and finally we discuss the principle of perturbative agreement. 

\section{Kinematical structure}\label{class}
In the framework of perturbative algebraic quantum field theory (pAQFT) one starts with the classical theory, which is subsequently quantized. We work in the Lagrangian framework, but there are some modifications that we need to make to deal with the infinite dimensional character of field theory. In this section we give an overview of mathematical structures that will be needed later on to construct models of classical and quantum field theories. Since we do not fix the dynamics yet, the content of this section describes the kinematical structure of our model. 
\subsection{The space of field configurations}\label{ConfigSpace}
We start with a globally hyperbolic spacetime $\Mcal=(M,\mathbf g)$. Next we introduce $\Ecal$, the space of field configurations. The choice of $\Ecal$ specifies what kind of objects our model describes (e.g., scalar fields, gauge fields, etc.). 
\begin{df}\label{conf} 
The configuration space $\Ecal$ on the fixed spacetime $\Mcal=(M,\mathbf g)$ is realized as the space of smooth sections $\Gamma(E\rightarrow M)$ of some vector bundle $E\xrightarrow{\pi} M$ over $M$.  \end{df}

\begin{df}
The space of compactly supported sections is denoted $\Ecal_c=\Gamma_c(E\rightarrow M)$. The space of sections of the dual bundle is denoted $\Ecal^*=\Gamma(E^*\rightarrow M)$.
\end{df}

\subsection{Functionals on the configuration space}
We model classical and quantum observables as smooth (in the sense of  \cite{Bas64,Ham,Mil,Neeb}) functionals on $\Ecal$. 

Consider some $F:\Ecal\to\CC$ and $\ph\in\Ecal$. We require $F$ to be smooth in the sense of Bastiani  \cite{Bas64,Ham,Mil,Neeb}, i.e.
\begin{df}
	Let $\Xcal$ and $\Ycal$ be topological vector spaces, $U \subseteq \Xcal$ an open set and $f:U \rightarrow \Ycal$ a map. The derivative of $f$ at $x\in U$ in the direction\index{derivative!on a locally convex vector space} of $h\in\Xcal$ is defined as
	\be\label{de}
	\left<f^{(1)}(x),h\right> \doteq \lim_{t\rightarrow 0}\frac{1}{t}\left(f(x + th) - f(x)\right)
	\ee
	whenever the limit exists. The function $f$ is called differentiable\index{infinite dimensional!calculus} at $x$ if $\left<f^{(1)}(x),h\right>$ exists for all $h \in \Xcal$. It is called continuously differentiable if it is differentiable at all points of $U$ and
	$f^{(1)}:U\times \Xcal\rightarrow \Ycal, (x,h)\mapsto f^{(1)}(x)(h)$
	is a continuous map. It is called a $\Ccal^1$-map if it is continuous and continuously differentiable. Higher derivatives are defined by
	\be
	\left<f^{(k)}(x),v_1\otimes\dots\otimes v_k\right>\doteq\left. \frac{\partial^k}
	{\partial t_1\dots\partial t_k}f(x+t_1 v_1+\dots + t_k v_k)\right|_{t_1=\dots=t_k=0},
	\ee
	and $f$ is $\Ccal^k$ if $f^{(k)}$ is jointly  continuous as a map $U\times \Xcal^k\rightarrow \Ycal$. We say that $f$ is smooth if it is $\Ccal^k$ for all $k\in\NN$. 
\end{df}
For a detailed discussion of Bastiani smoothness in the context of classical field theory, see e.g., \cite{BDGR}.
By definition, if  $F^{(1)}(\ph)$ exists, then it is an element of the complexified dual space ${\Ecal'}^{\sst\CC}\doteq \Ecal'\otimes\CC$. More generally, the $n$'th derivative defines an element $F^{(n)}(\ph)$ of the continuous complex dual of the completed projective tensor product $\Ecal^{\hat{\otimes}_\pi n}\cong \Gamma(E^{\boxtimes n}\rightarrow M^n)$, where $\boxtimes$ is the exterior tensor product of vector bundles, i.e. $F^{(n)}(\ph)\in {(\Ecal^{\hat{\otimes}_\pi n})'}^{\sst\CC}$ (hence it is a distribution with compact support).
\begin{df}\label{SpacetimeSup}
The \emph{spacetime support} of a function, $F:\Ecal\to S$ (where $S$ may be any set)  is defined by
\begin{align}\label{support}
\supp F\doteq\{ x\in M\mid\forall \text{open }U\ni x\ \exists \ph,\psi\in\Ecal, \supp\psi\subset U,F(\ph+\psi)\not= F(\ph)\}\ .\nonumber
\end{align}
Alternatively, one can write this as:
\[
\supp F=\overline{\bigcup\limits_{\ph\in\Ecal}\supp(F^{(1)}(\ph))}\,.
\]
\end{df}
\begin{df}\label{localfunctionals}
A functional $F\in\Ci(\Ecal,\CC)$ is called \emph{local}  if for each $\ph_0\in\Ecal$ there exists an open neighbourhood $O$ in $\Ecal$ and $k\in\NN$ such that for all $\ph\in O$ we have
	\be
	F(\ph)=\int_{M} \alpha(j^k_x(\ph))\ ,
	\ee
	where $j^k_x(\ph)$ is the $k$'th jet prolongation of $\ph$ and $\alpha$ is a map over $M$ from the jet bundle to the volume-form bundle. We denote the space of local functionals by $\Fcal_\loc$.
\end{df}
We equip the space $\Fcal_{\loc}$ of local functionals on the configuration space  with the pointwise product using the prescription
\be\label{pointwise}
(F\cdot G)(\ph)\doteq F(\ph)G(\ph)\,,
\ee
where $\ph\in\Ecal$. $\Fcal_{\loc}$ is not closed under this product, but we can consider instead the space $\Fcal$ of \emph{multilocal functionals}, which is defined as the algebraic closure of $\Fcal_{\loc}$ under the product \eqref{pointwise}. We can also introduce the involution operator $*$ on $\Fcal$ using complex conjugation, i.e.,
\[
F^*(\ph)\doteq \overline{F(\ph)}\,.
\]
In this way we obtain a commutative $*$-algebra. 

Functional derivatives of smooth functionals on $\Ecal$ are compactly supported distributions. We can distinguish certain important classes of functionals by analyzing the wavefront (WF) set properties of their derivatives. 

Local and multilocal functionals satisfy some important regularity properties. Firstly, for local functionals the wavefront set of $F^{(n)}(\ph)$ is orthogonal to the tangent bundle of the \emph{thin diagonal},
\[
\{(x,\dots,x)\in M^n\mid x\in M\} .
\]
In particular, $F^{(1)}(\ph)$ has empty wavefront set, and so is smooth for each fixed  $\ph\in\Ecal$. The latter is true also  for multilocal functionals, i.e., $F\in\Fcal$. Note that using the metric volume form $\mu_{\mathbf g}$ we can therefore identify $F^{(1)}(\ph)$ with an element of ${\Ecal^*}^{\sst\CC}$.

\begin{df}
	A functional $F\in\Ci(\Ecal,\CC)$ is called \emph{regular} ($F\in \Fcal_{\reg}$) if $F^{(n)}(\ph)$ has empty WF set for all $n\in\NN$, $\ph\in\Ecal$.
\end{df}
\begin{rem}
Some authors additionally require that a regular functional have compact support. This definition only implies that each derivative has compact support (the support of $F^{(1)}(\ph)$ can change with $\ph$, so $\supp F$ may not be compact).
\end{rem}
For a regular functional, $F^{(n)}(\ph)$ can be identified with an element of $\Gamma_c((E^*)^{\boxtimes n}\to M^n)^{\sst \CC}$.

\section{Classical theory}\label{ClassicalTheory}
\subsection{Dynamics}
Dynamics is introduced in the Lagrangian framework. We begin with recalling some crucial definitions after \cite{BDF}. Here, $\Dcal(M)$ denotes the space of smooth, compactly supported, real-valued functions (test functions).
\begin{df}\label{Lagr}
	A \textit{generalized Lagrangian} on a fixed spacetime $\Mcal=(M,\mathbf g)$ is a map $L:\Dcal(M)\rightarrow\Fcal_{\loc}$ such that
	\begin{enumerate}[i)]
		\item $L(f+g+h)=L(f+g)-L(g)+L(g+h)$ for $f,g,h\in\Dcal(M)$ with $\supp\,f\cap\supp\,h=\varnothing$ ({Additivity}).
		\item $\supp(L(f))\subseteq \supp(f)$ ({Support}).
		\item Let $\Iso(\Mcal)$ be the oriented and time oriented isometry group of the spacetime $\Mcal$. (For Minkowski spacetime $\Iso(\Mcal)$ is the proper orthochronous Poincar\'e group $\Pcal^\uparrow_+$.) We require that $L(f)(u^*\ph)=L(u_*f)(\ph)$ for every $u\in\Iso(\Mcal)$ ({Covariance}).
	\end{enumerate}
\end{df}
%

\begin{df}\label{ClassAction}
 An \emph{action} is an equivalence class of Lagrangians under the  equivalence relation \cite{BDF}
 \be\label{equi}
L_1\sim L_2 \quad\textrm{iff}\quad\supp ((L_1-L_2)(f))\subset \supp df\,.
 \ee 
 \end{df}
The physical meaning of \eqref{equi} is to identify Lagrangians that ``differ by a total divergence''. 

An action is something like a local functional with (possibly) noncompact support. It is used just as $L(1)$ would be if it existed.
\begin{df}\label{ELDerivative}
 The \emph{Euler-Lagrange derivative} (first variational derivative) of $L$ is a map $L':\Ecal\rightarrow {\Ecal'_c}$ defined by
 \[
 \left<L'(\ph),h\right>\doteq \left<L(f)^{(1)}(\ph),h\right>\,,
 \]
 where $h\in\Ecal_c$ and $f\in\Dcal(M)$ is chosen in such a way that $f= 1$ on $\supp h$.
 \end{df}
 Since $L(f)$ is a local functional, $L'$ doesn't depend on the choice of $f$. 
  Note that two Lagrangians equivalent under the relation \eqref{equi} induce the same Euler-Lagrange derivative, so dynamics is a structure coming from actions rather than Lagrangians. 
\begin{df}
The Euler-Lagrange derivative of an action, $S$, is $S'\doteq L'$ for any Lagrangian $L\in S$.
\end{df}  
  We are now ready to introduce the equations of motion ({\eom}'s).
\begin{df}
The equation of motion ({\eom}) corresponding to the action $S$ is
\be\label{eom}
S'(\ph)= 0\,,
\ee
understood as a condition on $\ph\in\Ecal$.
\end{df}
\begin{rem}
The space of solutions of \eqref{eom} may be pathological. Instead, in this algebraic setting, we should work with the quotient of $\Fcal$ or $\Fcal_\reg$ by the ideal generated by $S'$. This plays the \emph{role} of the algebra of functionals on the space of solutions.

However, in this paper we are concerned with ``off shell'' constructions, i.e., the quotient is not taken.
\end{rem}

 \begin{df}
The second variational derivative $S''$ of the action $S$ is defined by
\[
\left<S''(\ph),\psi_1\otimes \psi_2\right>\doteq \left<L(f)^{(2)}(\ph),\psi_1\otimes \psi_2\right>\,,
\]
where $L\in S$ and $f= 1$ on $\supp \psi_1\cup\supp \psi_2$. 
\end{df}
By definition,  $S''(\ph):\Ecal_c\times\Ecal_c\to\CC$  is a bilinear map, due to locality it can in fact be extended to a map $S''(\ph):\Ecal\times\Ecal_c\to\CC$, which then
induces a continuous linear operator $P_S(\ph):\Ecal^{\sst \CC}\rightarrow{\Ecal^*}^{\sst \CC}$.
Note that if $S$ is quadratic then $P_S\doteq P_S(\ph)$ is the same for all $\ph$ and $S'(\ph)=P_S\ph$. This is the case for the free scalar field, where $P_S=-(\Box+ m^2)$.

The crucial assumption in the pAQFT approach is that $P_S(\ph)$ is a Green-hyperbolic operator \cite{GreenBear}, i.e. that it admits unique retarded and advanced Green's functions (fundamental solutions) $\Delta_S^{\mathrm{R}}(\ph),\ \Delta_S^{\mathrm{A}}(\ph):{\Ecal^*_c}^{\sst \CC}\rightarrow\Ecal^{\sst \CC}$ defined by the requirements
\begin{align*}
P_S(\ph)\circ\Delta^{\mathrm{R}/\mathrm{A}}_S(\ph)&=\id\,,\\
\left.\Delta^{\mathrm{R}/\mathrm{A}}_S(\ph)\circ P_S(\ph)\right|_{\Ecal_c}&=\id\,,
\end{align*}
and the support properties
\begin{align*}
\supp \Delta^{\mathrm{R}}_S(\ph)(\psi)&\subset J^+(\supp \psi)\,,\\
\supp \Delta^{\mathrm{A}}_S(\ph)(\psi)&\subset J^-(\supp \psi)\,,
\end{align*}
where $\psi\in{\Ecal^*}^{\sst \CC}_c$. Note that, by the Schwartz kernel theorem, these operators can be written in terms of their integral kernels, which then satisfy appropriate support properties and
\be\label{AandR}
\Delta^{\mathrm{R}}_S(\ph)(y,x)=\Delta^{\mathrm{A}}_S(\ph)(x,y)\,.
\ee

The \emph{causal propagator} is
\be\label{causalprop}
\Delta_S(\ph)\doteq\Delta_S^{\mathrm{R}}(\ph)-\Delta_S^{\mathrm{A}}(\ph)\,.
\ee
Due to \eqref{AandR} the causal propagator is antisymmetric, i.e., its integral kernel satisfies
\[
\Delta_S(\ph)(y,x)=-\Delta_S(\ph)(x,y)\,.
\]
We equip the algebra of functionals  with a Poisson bracket called the \emph{Peierls bracket} \cite{Pei}. 
\begin{df}\label{PeierlsBracket}
The \emph{Peierls bracket} of $F,G\in\Fcal$ is defined by
\be\label{Pei}
\Pei{F}{G}_S (\ph)\doteq \left<\Delta_S(\ph),F^{(1)}(\ph)\otimes G^{(1)}(\ph)\right>\,,
\ee
where $\Delta_S(\ph)$ is used as a complex bidistribution in $\Gamma'_c((E^*)^{\boxtimes 2}\rightarrow M^2)^{\sst \CC}$ (this can be understood as a completion of $\Ecal_c\otimes\Ecal_c$, i.e. $\Delta_S$ is sort of a section of the an appropriate completion of $T^2\Ecal$, i.e. a bivector field).
\end{df}
\begin{rem}
The  algebra of multi-local functionals, $\Fcal$, is not closed under the Peierls bracket. It works better on the algebra of microcausal functionals, $\Fcal_\mc$, defined later on. For a quadratic action, the Peierls bracket gives a Poisson bracket on the algebra of  regular functionals, $\Fcal_\reg$.
\end{rem}

In this paper, we will mainly consider an ``action'' of the form
\[
S = S_0 + \la V
\]
where $S_0$ is a quadratic (in $\ph$) action and $V\in\Fcal_\reg$ is compactly supported. This is not an action in the sense of Definition~\ref{ClassAction}, because (unless it is quadratic) $V$ cannot be expressed as an equivalence class of Lagrangians, cannot be cut off with a test function, and is nonlocal. Nevertheless,  it can be used in much the same way as an action. For example, the Euler-Lagrange derivative should be understood as $S'=S_0'+\la V^{(1)}$. 
\begin{rem}
The $\la V$ is used as a regularized interaction term. The fact that $V$ has compact support is a kind of infrared regularization, and the fact that it is a regular (rather than local) functional is an ultraviolet (UV) regularization. This type of IR regularization is the usual technique used in Epstein Glaser renormalization \cite{EG}.
\end{rem}

With this, the definitions of the propagators must be modified slightly (this is equivalent to the definition given in Lemma 1 in \cite{BreDue}):
\[
\supp \Delta_S^{\mathrm{R/A}}(\ph)(f) \subseteq J^\pm(\supp f \cup \supp V) \,.
\]

If $\la$ is a formal parameter, it is easy to see that 
\begin{align}
\Delta^{\mathrm A}_S &= \sum_{n=0}^\infty (-\la)^n \Delta^{\mathrm A}_{S_0} \left( V^{(2)} \Delta^{\mathrm A}_{S_0}\right)^n \label{regular advanced}\\
&= \Delta^{\mathrm A}_{S_0}-\la \Delta^{\mathrm A}_{S_0} V^{(2)} \Delta^{\mathrm A}_{S_0} + \la^2 \Delta^{\mathrm A}_{S_0} V^{(2)} \Delta^{\mathrm A}_{S_0} V^{(2)} \Delta^{\mathrm A}_{S_0} - \dots \nonumber \,.
\end{align}
satisfies this definition, and there is an analogous formula for $\Delta_S^{\mathrm R}$.

\begin{rem}
To reduce clutter, we will often omit the $\circ$ symbol when composing linear operators, as in eq.~\eqref{regular advanced}.
\end{rem}

\subsection{Classical M{\o}ller maps off-shell}\label{Moll}
To avoid  functional analytic difficulties we define all the structures only for regular functionals $\Fcal_{\reg}$ or local functionals $\Fcal_{\loc}$. We mostly work perturbatively, i.e., we use formal power series in $\la$, which plays the role of a coupling constant.

From now on, we fix a free (quadratic) action $S_0$ and consider ``actions'' of the form $S=S_0+\la V$ with $V\in\Fcal_\reg$. For two such actions, the retarded and advanced M{\o}ller maps ${\mathtt r}_{S_1,S_2},{\mathtt a}_{S_1,S_2}:\Ecal_{S_2} \to \Ecal_{S_1}$ are defined on shell by the requirements that for any $\ph\in\Ecal_{S_2}$, 
$\ph-{\mathtt r}_{S_1,S_2}(\ph)$
has past-compact support and $\ph-{\mathtt a}_{S_1,S_2}(\ph)$ has future-compact support. We will be using the retarded maps, but everything we say adapts easily to the advanced maps.

We will need the extension of this off-shell. Following \cite{DF02}, the off-shell retarded M{\o}ller map is defined by the conditions
\be\label{compo}
{\mathtt r}_{S_1,S_2}\circ {\mathtt r}_{S_2,S_3}={\mathtt r}_{S_1,S_3}
\ee
and 
\be
\label{first order}
\left.\frac{d}{d\la} {\mathtt r}_{S+\la V,S}(\ph)\right|_{\la=0} = -\Delta^{\mathrm R}_S(\ph) V^{(1)}(\ph) \,.
\ee

These retarded M{\o}ller maps are perturbatively well defined, and restriction to solutions gives the on-shell M{\o}ller maps \cite{DF02}. To simplify the notation we abbreviate ${\mathtt r}_{\la V}\doteq{\mathtt r}_{S_0+\la V,S_0}$.

M{\o}ller maps act on functionals by pullback. 
\be
(r_{\la V}F)(\ph)\doteq F\circ {\mathtt r}_{\la V}(\ph)\,,\\
\ee
where $F\in \Fcal_{\reg}$, $\ph\in\Ecal$. 

\begin{lemma}
\label{YF lemma}
If $\la$ is a formal parameter, then the retarded M{\o}ller map satisfies the \emph{Yang-Feldman} equation
\be
\label{YF equation}
{\mathtt r}_{\la V}(\ph)=\ph-\la\Delta_{S_0}^{\mathrm{R}}V^{(1)}({\mathtt r}_{\la V}(\ph))\,.
\ee
Conversely, a map satisfying the Yang-Feldman equation \eqref{YF equation} must be the retarded M{\o}ller map.
\end{lemma}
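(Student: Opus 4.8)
The plan is to prove the forward implication — that the retarded M{\o}ller map $\mathtt{r}_{\la V}$ satisfies the Yang--Feldmann equation \eqref{YF equation} — by converting the two defining properties \eqref{compo} and \eqref{first order} into a first-order differential equation in $\la$ and integrating it, and then to obtain the converse from the observation that \eqref{YF equation} has a unique solution in $\CC[[\la]]$. For the first part, I would begin by establishing the flow equation
\[
\frac{d}{d\la}\,\mathtt{r}_{\la V}(\ph)\;=\;-\,\Delta^{\mathrm{R}}_{S_0+\la V}\bigl(\mathtt{r}_{\la V}(\ph)\bigr)\,V^{(1)}\bigl(\mathtt{r}_{\la V}(\ph)\bigr)\,.
\]
This follows from the cocycle identity \eqref{compo}: writing $\mathtt{r}_{S_0+(\la+\mu)V,S_0}=\mathtt{r}_{S_0+(\la+\mu)V,\,S_0+\la V}\circ\mathtt{r}_{S_0+\la V,S_0}$, applying both sides to $\ph$, and differentiating at $\mu=0$ with the inner argument $\psi\doteq\mathtt{r}_{\la V}(\ph)$ held fixed, the left-hand side becomes $\tfrac{d}{d\la}\mathtt{r}_{\la V}(\ph)$, while the right-hand side is computed from \eqref{first order} applied with base ``action'' $S_0+\la V$ and perturbation $\mu V$. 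Together with $\mathtt{r}_{S_0,S_0}=\id$, the flow equation pins down $\mathtt{r}_{\la V}$ order by order in $\la$.

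Next I would feed into the flow equation the retarded analogue of \eqref{regular advanced}, in the resolvent form $\Delta^{\mathrm{R}}_{S_0+\la V}(\psi)=\Delta^{\mathrm{R}}_{S_0}-\la\,\Delta^{\mathrm{R}}_{S_0}\,V^{(2)}(\psi)\,\Delta^{\mathrm{R}}_{S_0+\la V}(\psi)$, which is immediate from $P_{S_0+\la V}=P_{S_0}+\la V^{(2)}$ (or from summing the geometric series). Using the flow equation a second time to replace $\Delta^{\mathrm{R}}_{S_0+\la V}(\mathtt{r}_{\la V}(\ph))\,V^{(1)}(\mathtt{r}_{\la V}(\ph))$ by $-\tfrac{d}{d\la}\mathtt{r}_{\la V}(\ph)$, and applying the chain rule, one rewrites it as
\[
\frac{d}{d\la}\,\mathtt{r}_{\la V}(\ph)\;=\;-\,\Delta^{\mathrm{R}}_{S_0}\Bigl(V^{(1)}(\mathtt{r}_{\la V}(\ph))+\la\,V^{(2)}(\mathtt{r}_{\la V}(\ph))\,\tfrac{d}{d\la}\mathtt{r}_{\la V}(\ph)\Bigr)\;=\;-\,\frac{d}{d\la}\Bigl(\la\,\Delta^{\mathrm{R}}_{S_0}\,V^{(1)}(\mathtt{r}_{\la V}(\ph))\Bigr)\,,
\]
the last step because $\Delta^{\mathrm{R}}_{S_0}$ does not depend on $\la$. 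Hence $\tfrac{d}{d\la}\bigl(\mathtt{r}_{\la V}(\ph)+\la\,\Delta^{\mathrm{R}}_{S_0}V^{(1)}(\mathtt{r}_{\la V}(\ph))\bigr)=0$; evaluating at $\la=0$, where $\mathtt{r}_{\la V}(\ph)=\ph$, gives exactly \eqref{YF equation}.

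For the converse, read \eqref{YF equation} as an identity in $\CC[[\la]]$ and write $\mathtt{r}_{\la V}(\ph)=\ph+\sum_{n\ge 1}\la^n g_n(\ph)$. The $\la^n$-coefficient of the right-hand side is $-\Delta^{\mathrm{R}}_{S_0}$ applied to the $\la^{n-1}$-coefficient of $V^{(1)}(\mathtt{r}_{\la V}(\ph))$, which by Taylor-expanding $V^{(1)}$ about $\ph$ depends only on $g_1,\dots,g_{n-1}$; thus each $g_n$ is determined recursively and a solution of \eqref{YF equation} is unique. Since the retarded M{\o}ller map is such a solution by the previous step, any map satisfying \eqref{YF equation} coincides with it.

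The step I expect to be the main obstacle is the derivation of the flow equation: one must be careful that \eqref{first order} is invoked with the \emph{shifted} base action $S_0+\la V$ rather than with $S_0$ alone, and that the cocycle identity is differentiated with the inner argument frozen. Once the flow equation is available, everything else is formal power-series bookkeeping, legitimate on $\Fcal_\reg$ because $V^{(1)}$ and $V^{(2)}$ have smooth, compactly supported kernels that compose with $\Delta^{\mathrm{R}}_{S_0}$; no renormalization or wavefront-set analysis is needed.
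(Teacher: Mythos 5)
Your proposal is correct and follows essentially the same route as the paper's proof: derive the flow equation $\tfrac{d}{d\la}\mathtt{r}_{\la V}(\ph)=-\Delta^{\mathrm R}_{S_0+\la V}V^{(1)}(\mathtt{r}_{\la V}(\ph))$ from the cocycle identity and the first-order condition, rewrite it as a total $\la$-derivative, integrate, and then get the converse from uniqueness of the formal power-series solution. The only cosmetic differences are that the paper eliminates the interacting propagator by applying $P_{S_0}+\la V^{(2)}$ and then $\Delta^{\mathrm R}_{S_0}$ rather than quoting the resolvent identity, and it phrases uniqueness via the explicit inverse map $\mathtt{r}^{-1}_{\la V}(\ph)=\ph+\la\Delta^{\mathrm R}_{S_0}V^{(1)}(\ph)$ instead of an order-by-order recursion.
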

\begin{proof}
Using eq.~\eqref{compo} and \eqref{first order}
\begin{align*}
\frac{d}{d\la}{\mathtt r}_{\la V}(\ph) &= \left.\frac{d}{d\mu}\mathtt r_{(\la+\mu)V}(\ph)\right|_{\mu=0} 
= \left.\frac{d}{d\mu}\mathtt r_{S_0+(\la+\mu)V,S_0+\la V}\circ \mathtt r_{\la V}(\ph)\right|_{\mu=0} \\
&=-\Delta_{S_0+\la V}^{\rm R}V^{(1)}({\mathtt r}_{\la V}(\ph))
\end{align*}
Apply $P_{S_0}+\la V^{(2)}(\ph)$ (where $P_{S_0}$ is the differential operator induced by $S_0''$).
\[
\left(P_{S_0}+\la V^{(2)}({\mathtt r}_{\la V}(\ph))\right)\frac{d}{d\la}{\mathtt r}_{\la V}(\ph) =-V^{(1)}({\mathtt r}_{\la V}(\ph))
\]
Apply $\Delta_{S_0}^{\mathrm R}$ (which is independent of $\ph$).
\[
\left(\id+\la\Delta_{S_0}^{\mathrm{R}} V^{(2)}({\mathtt r}_{\la V}(\ph))\right)\frac{d}{d\la}{\mathtt r}_{\la V}(\ph) =-\Delta_{S_0}^{\mathrm{R}}V^{(1)}({\mathtt r}_{\la V}(\ph))
\]
Rearranging gives
\begin{align*}
\frac{d}{d\la}{\mathtt r}_{\la V}(\ph) 
&=-\Delta_{S_0}^{\mathrm{R}}V^{(1)}({\mathtt r}_{\la V}(\ph))-\la\Delta_{S_0}^{\mathrm{R}} V^{(2)}({\mathtt r}_{\la V}(\ph))\frac{d}{d\la}{\mathtt r}_{\la V}(\ph)\\
&=\frac{d}{d\la}\left(-\la \Delta_{S_0}^{\mathrm{R}}V^{(1)}({\mathtt r}_{\la V}(\ph))\right) \,.
\end{align*}
Since $\mathtt r_0=\id$, integrating gives \eqref{YF equation}.

The Yang-Feldman equation is equivalent to saying that the inverse of the M{\o}ller map is
\be
\label{inverse Moller}
{\mathtt r}^{-1}_{\la V}(\ph) = \ph + \la \Delta_{S_0}^{\mathrm{R}} V^{(1)}(\ph) \,.
\ee
This gives ${\mathtt r}^{-1}_{\la V}$ as a formal power series in $\la$. The constant term is just the identity map. Therefore this is invertible as a formal power series. Its inverse is unique, so the M{\o}ller map is that unique inverse. 

More explicitly, $\mathtt r_{\la V}(\ph) \approx \ph$ to $0$'th order. Applying eq.~\eqref{YF equation} iteratively improves this approximation, so that $\mathtt r_{\la V}(\ph) \approx \ph - \Delta^{\mathrm R}_{S_0} V^{(1)}(\ph)$ to first order and the sequence of approximations converges as a formal power series ($\la$-adically) to $\mathtt r_{\la V}$.
\end{proof}

Nonperturbatively, the Yang-Feldman equation may be taken as a definition of the M{\o}ller map. There is certainly no problem with defining ${\mathtt r}^{-1}_{\la V}$ by eq.~\eqref{inverse Moller}, but its inverse ${\mathtt r}_{\la V}$ may not actually exist. An idea how to use the Nash-Moser inverse function theorem on locally convex topological vector spaces to tackle this problem has been proposed in \cite{BFR}.

We will now provide an alternative proof of the result of \cite{DF02} stating that $r_{\la V}$ intertwines between the free and interacting Poisson brackets. The advantage of our proof is that it is explicitly performed to all orders and can be generalized to the nonperturbative setting, while the argument in \cite{DF02} is essentially a proof of the infinitesimal version of the statement.
\begin{df}
Denote the derivative of the inverse M{\o}ller map by
\[
\rho \doteq \left(\mathtt r^{-1}_{\la V}\right)^{(1)}(\ph) : \Ecal\to\Ecal \,.
\]
The transpose $\rho^{\mathsf T}:\Ecal^* \to \Ecal^*$ is defined by reversing the arguments in the integral kernel for $\rho$.
\end{df}

\begin{lemma}
\label{Propagator Moller}
The derivative of the inverse M{\o}ller map is 
\be
\label{Moller derivative}
\rho = \id + \la \Delta_{S_0}^{\mathrm{R}} V^{(2)}(\ph) \,,
\ee 
and
\be
\label{propagator Moller}
\rho\circ \Delta_S^{\mathrm{R}}(\ph)\circ \rho^{\mathsf T}\big|_{\Ecal_c} 
= \Delta^{\mathrm R}_{S_0} + \la \Delta^{\mathrm R}_{S_0}V^{(2)}(\ph)\Delta_{S_0}^{\mathrm{A}} \,.
\ee
\end{lemma}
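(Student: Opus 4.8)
The statement contains two equations, and the first is almost immediate. The plan is to obtain \eqref{Moller derivative} by differentiating the explicit formula \eqref{inverse Moller} for ${\mathtt r}^{-1}_{\la V}$ with respect to $\ph$: since $\Delta^{\mathrm{R}}_{S_0}$ does not depend on $\ph$ and the $\ph$-derivative of $V^{(1)}(\ph)$ is $V^{(2)}(\ph)$, this gives directly $\rho = \id + \la\Delta^{\mathrm{R}}_{S_0}V^{(2)}(\ph)$. For \eqref{propagator Moller} the key intermediate step I would isolate is the ``Lippmann--Schwinger'' identity $\rho\circ\Delta^{\mathrm{R}}_S(\ph) = \Delta^{\mathrm{R}}_{S_0}$ on $\Ecal_c^*$.

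To prove this identity I would start from $P_S(\ph) = P_{S_0}+\la V^{(2)}(\ph)$ together with $P_S(\ph)\circ\Delta^{\mathrm{R}}_S(\ph)=\id$ on $\Ecal_c^*$, rewrite it as $P_{S_0}\Delta^{\mathrm{R}}_S(\ph) = \id-\la V^{(2)}(\ph)\Delta^{\mathrm{R}}_S(\ph)$, and compose with $\Delta^{\mathrm{R}}_{S_0}$ on the left. The delicate point --- and the reason the lemma carries the restriction to $\Ecal_c$ --- is that $\Delta^{\mathrm{R}}_{S_0}\circ P_{S_0}$ equals the identity literally only on compactly supported sections, whereas $\Delta^{\mathrm{R}}_S(\ph)$ maps into sections of past-compact support. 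One therefore has to observe that $\Delta^{\mathrm{R}}_{S_0}\circ P_{S_0}$ is still the identity on the larger class of past-compactly supported sections (uniqueness of past-compact solutions of $P_{S_0}u=0$), and that $\Delta^{\mathrm{R}}_S(\ph)(f)$ does have past-compact support by the support property $\supp\Delta^{\mathrm{R}}_S(\ph)(f)\subseteq J^+(\supp f\cup\supp V)$. This yields $\Delta^{\mathrm{R}}_S(\ph) = \Delta^{\mathrm{R}}_{S_0} - \la\Delta^{\mathrm{R}}_{S_0}V^{(2)}(\ph)\Delta^{\mathrm{R}}_S(\ph)$, which rearranges to $\rho\circ\Delta^{\mathrm{R}}_S(\ph)=\Delta^{\mathrm{R}}_{S_0}$. (If instead $\la$ is treated as a formal parameter, the same conclusion follows by substituting the geometric series \eqref{regular advanced} for $\Delta^{\mathrm{R}}_S$ and telescoping against $\rho = \id+\la\Delta^{\mathrm{R}}_{S_0}V^{(2)}(\ph)$.)

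Finally I would transpose \eqref{Moller derivative}: using that $V^{(2)}(\ph)$ has a symmetric integral kernel (it is a second functional derivative) and that $\bigl(\Delta^{\mathrm{R}}_{S_0}\bigr)^{\mathsf T}=\Delta^{\mathrm{A}}_{S_0}$, which is precisely \eqref{AandR}, one gets $\rho^{\mathsf T} = \id + \la V^{(2)}(\ph)\Delta^{\mathrm{A}}_{S_0}$; since $V$ is regular, $V^{(2)}(\ph)$ is compactly supported, so $V^{(2)}(\ph)\Delta^{\mathrm{A}}_{S_0}$ maps $\Ecal_c^*$ to itself and the composition below is legitimate. Combining the three ingredients,
\[
\rho\circ\Delta^{\mathrm{R}}_S(\ph)\circ\rho^{\mathsf T}\big|_{\Ecal_c}
= \Delta^{\mathrm{R}}_{S_0}\circ\rho^{\mathsf T}\big|_{\Ecal_c}
= \Delta^{\mathrm{R}}_{S_0}\bigl(\id+\la V^{(2)}(\ph)\Delta^{\mathrm{A}}_{S_0}\bigr)
= \Delta^{\mathrm{R}}_{S_0}+\la\Delta^{\mathrm{R}}_{S_0}V^{(2)}(\ph)\Delta^{\mathrm{A}}_{S_0}\,,
\]
which is \eqref{propagator Moller}. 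The only real work, as noted, is the support bookkeeping in the middle step; everything else is formal manipulation of compositions of operators.
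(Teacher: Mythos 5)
Your proposal is correct and follows essentially the same route as the paper: both rest on the identity $\rho\circ\Delta^{\mathrm R}_S(\ph)=\Delta^{\mathrm R}_{S_0}$, obtained from $P_S=P_{S_0}+\la V^{(2)}(\ph)$ and the fact that $\Delta^{\mathrm R}_{S_0}\circ P_{S_0}$ acts as the identity on the image of $\Delta^{\mathrm R}_S(\ph)$ (which coincides with the image of $\Delta^{\mathrm R}_{S_0}$ by the support bookkeeping you describe), followed by composition with $\rho^{\mathsf T}=\id+\la V^{(2)}(\ph)\Delta^{\mathrm A}_{S_0}$. Your justification via uniqueness of past-compactly supported solutions of $P_{S_0}u=0$ is the same fact the paper invokes by characterizing $\operatorname{Im}\Delta^{\mathrm R}_{S_0}$, just phrased differently.
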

\begin{proof}
Equation \eqref{Moller derivative} follows immediately from eq.~\eqref{inverse Moller}.

The image of $\Delta^{\mathrm R}_{S_0}:\Ecal^*_c\to\Ecal$ is the set of $\psi\in\Ecal$ such that $\supp\psi$ is past-compact and $\supp P_{S_0}\psi$ is compact. Because $P_{S}(\ph)-P_{S_0} = \la V^{(2)}(\ph)$ has compact support, the image of $\operatorname{Im} \Delta^{\mathrm R}_S(\ph)=\operatorname{Im} \Delta^{\mathrm R}_{S_0}$. Also note that $P_{S_0}\circ \Delta^{\mathrm R}_{S_0} = \id_{{\Ecal^*_c}^{\sst \CC}}$ implies $\Delta^{\mathrm R}_{S_0} \circ P_{S_0} = \id_{\operatorname{Im} \Delta^{\mathrm R}_{S_0}}$.

With this in mind --- and hiding the $\ph$ arguments --- we have 
\begin{align*}
\rho\circ\Delta_S^{\mathrm{R}} &= \left(\id_{\operatorname{Im} \Delta^{\mathrm R}_{S_0}} + \la \Delta_{S_0}^{\mathrm{R}} V^{(2)}\right) \Delta_S^{\mathrm{R}} \\
&=\left(\Delta^{\mathrm R}_{S_0} P_{S_0} + \la \Delta_{S_0}^{\mathrm{R}} V^{(2)}\right) \Delta_S^{\mathrm{R}} \\
&= \Delta^{\mathrm R}_{S_0} \left(P_{S_0} + \la V^{(2)}\right) \Delta_S^{\mathrm{R}}
= \Delta^{\mathrm R}_{S_0}P_S \Delta_S^{\mathrm{R}} = \Delta^{\mathrm R}_{S_0} \circ \id_{{\Ecal^*_c}^{\sst \CC}}\\
&= \Delta^{\mathrm R}_{S_0} \,.
\end{align*}
Composing this with the transpose $\rho^{\mathsf T} = \id_{\Ecal^*_c} + \la V^{(2)}(\ph)\Delta_{S_0}^{\mathrm{A}}$ gives
\begin{align*}
\rho\circ \Delta_S^{\mathrm{R}} \circ\rho^{\mathsf T} 
&= \Delta^{\mathrm R}_{S_0} \left(\id_{\Ecal^*_c}  + \la V^{(2)}\Delta_{S_0}^{\mathrm{A}} \right)\\
&= \Delta^{\mathrm R}_{S_0} + \la \Delta^{\mathrm R}_{S_0}V^{(2)}\Delta_{S_0}^{\mathrm{A}} \,.
\end{align*}
\end{proof}

\begin{prop}\label{IntertwiningProp}
\cite[Prop.~2]{DF02}
	Let $F,G,V\in \Fcal_{\loc}$ or  $F,G,V\in \Fcal_{\reg}$.
	The retarded M{\o}ller operator $r_{\la V}$ intertwines the Peierls	brackets, i.e.,
	\[
	\Pei{r_{\la V}F}{r_{\la V}G}_{S_0}=r_{\la V}(\Pei{F}{G}_{S_0+\la V}) \,.
	\]
\end{prop}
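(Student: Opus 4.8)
The plan is to compute $\Pei{r_{\la V}F}{r_{\la V}G}_{S_0}$ directly from the definitions and reduce everything to the operator identity already established in Lemma~\ref{Propagator Moller}. Write $\psi\doteq\mathtt r_{\la V}(\ph)$ and let $\rho_\psi\doteq\id+\la\Delta_{S_0}^{\mathrm R}V^{(2)}(\psi)$ be the operator of eq.~\eqref{Moller derivative} evaluated at the configuration $\psi$ (a formal power series in $\la$ with identity leading term, hence invertible). First I would apply the chain rule to the pullback: from $(r_{\la V}F)(\ph)=F(\mathtt r_{\la V}(\ph))$ one gets $(r_{\la V}F)^{(1)}(\ph)=\bigl[\mathtt r_{\la V}^{(1)}(\ph)\bigr]^{\mathsf T}F^{(1)}(\psi)$, and likewise for $G$. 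Differentiating the identity $\mathtt r^{-1}_{\la V}\circ\mathtt r_{\la V}=\id$ and using eq.~\eqref{Moller derivative} identifies $\mathtt r_{\la V}^{(1)}(\ph)=\rho_\psi^{-1}$, so $(r_{\la V}F)^{(1)}(\ph)=(\rho_\psi^{-1})^{\mathsf T}F^{(1)}(\psi)$.

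Substituting this into Definition~\ref{PeierlsBracket} and using that $\Delta_{S_0}$ is $\ph$-independent (the free action is quadratic), the two copies of $(\rho_\psi^{-1})^{\mathsf T}$ can be moved onto the propagator, giving
\[
\Pei{r_{\la V}F}{r_{\la V}G}_{S_0}(\ph)=\Bigl\langle\,\rho_\psi^{-1}\,\Delta_{S_0}\,(\rho_\psi^{-1})^{\mathsf T},\ F^{(1)}(\psi)\otimes G^{(1)}(\psi)\,\Bigr\rangle\,.
\]
Thus the proposition follows once we show the claim $\rho_\psi\,\Delta_{S_0+\la V}(\psi)\,\rho_\psi^{\mathsf T}\big|_{\Ecal_c}=\Delta_{S_0}$: conjugating by $\rho_\psi^{-1}$ then turns the bracket above into $\langle\Delta_{S_0+\la V}(\psi),\,F^{(1)}(\psi)\otimes G^{(1)}(\psi)\rangle=\Pei{F}{G}_{S_0+\la V}(\psi)=(r_{\la V}\Pei{F}{G}_{S_0+\la V})(\ph)$, which is the assertion.

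To prove the claim I would invoke Lemma~\ref{Propagator Moller} twice. Its statement (with $\ph$ renamed $\psi$ and $S=S_0+\la V$) gives the retarded identity $\rho_\psi\Delta^{\mathrm R}_{S_0+\la V}(\psi)\rho_\psi^{\mathsf T}\big|_{\Ecal_c}=\Delta^{\mathrm R}_{S_0}+\la\Delta^{\mathrm R}_{S_0}V^{(2)}(\psi)\Delta^{\mathrm A}_{S_0}$ at once. Transposing this identity of integral kernels and using $(\Delta^{\mathrm R})^{\mathsf T}=\Delta^{\mathrm A}$ for both $S_0+\la V$ and $S_0$ (eq.~\eqref{AandR}) together with the symmetry of $V^{(2)}(\psi)$ yields the advanced counterpart $\rho_\psi\Delta^{\mathrm A}_{S_0+\la V}(\psi)\rho_\psi^{\mathsf T}\big|_{\Ecal_c}=\Delta^{\mathrm A}_{S_0}+\la\Delta^{\mathrm R}_{S_0}V^{(2)}(\psi)\Delta^{\mathrm A}_{S_0}$. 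Subtracting, the correction terms cancel and $\rho_\psi\Delta_{S_0+\la V}(\psi)\rho_\psi^{\mathsf T}\big|_{\Ecal_c}=\Delta^{\mathrm R}_{S_0}-\Delta^{\mathrm A}_{S_0}=\Delta_{S_0}$, proving the claim.

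The only real obstacle is bookkeeping rather than mathematics: keeping straight at which configuration ($\ph$ versus $\psi=\mathtt r_{\la V}(\ph)$) each operator is evaluated, and handling the domains and support restrictions — the $\big|_{\Ecal_c}$ of Lemma~\ref{Propagator Moller}, and the fact that $(\rho_\psi^{-1})^{\mathsf T}$ preserves compact support because $V^{(2)}(\psi)$ does. For $F,G,V\in\Fcal_{\reg}$ the derivatives $F^{(1)}(\psi),G^{(1)}(\psi)$ are compactly supported, so pairing against the restricted operator identity is legitimate; for $F,G,V\in\Fcal_{\loc}$ the same computation goes through order by order in $\la$. Note that the whole argument is manifestly exact in $\la$ and uses no infinitesimal (derivative-in-$\la$) step, which is precisely the improvement over the proof in \cite{DF02}.
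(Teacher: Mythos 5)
Your proof is correct and follows essentially the same route as the paper: both rest on Lemma~\ref{Propagator Moller} and the observation that the symmetric correction term cancels under antisymmetrization, giving $\rho\circ\Delta_S\circ\rho^{\mathsf T}=\Delta_{S_0}$, combined with the chain rule for the pullback. The only (cosmetic) difference is that the paper proves the equivalent statement for $r_{\la V}^{-1}$, whose derivative is $\rho$ itself, whereas you work directly with $r_{\la V}$ and therefore need to invert $\rho_\psi$ and keep track of the evaluation point $\psi=\mathtt r_{\la V}(\ph)$.
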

\begin{proof}
Note that the last term of eq.~\eqref{propagator Moller} is symmetric, so subtracting the transpose of this equation gives simply
\[
\rho\circ\Delta_S(\ph) \circ\rho^{\mathsf T} 
= \Delta_{S_0} \,.
\]

It is simpler to prove the equivalent property with $r_{\la V}^{-1}$. First note that the derivative of $r_{\la V}^{-1}F = F\circ{\mathtt r}^{-1}_{\la V}$ is
\[
(r_{\la V}^{-1}F)^{(1)}(\ph) = F^{(1)}({\mathtt r}^{-1}_{\la V}\ph) \circ\rho \,.
\]
This gives
\begin{align*}
\Pei{r_{\la V}^{-1}F}{r_{\la V}^{-1}G}_S (\ph) &= \left<\Delta_S(\ph), [F^{(1)}({\mathtt r}^{-1}_{\la V}\ph) \circ\rho] \otimes  [G^{(1)}({\mathtt r}^{-1}_{\la V}\ph) \circ\rho]\right> \\
&= \left<\rho\circ \Delta_S(\ph) \circ\rho^{\mathsf T} , F^{(1)}({\mathtt r}^{-1}_{\la V}\ph) \otimes G^{(1)}({\mathtt r}^{-1}_{\la V}\ph) \right>\\
&= \left <\Delta_{S_0}, F^{(1)}({\mathtt r}^{-1}_{\la V}\ph) \otimes G^{(1)}({\mathtt r}^{-1}_{\la V}\ph) \right>\\
&= \left(r_{\la V}^{-1} \Pei{F}{G}_{S_0}\right)(\ph) \,.
\end{align*}
\end{proof}

\section{Deformation quantization}\label{DeformationQuantization}
Suppose that $\Fcal_*$ is some space of functionals that is closed under the pointwise product and some Poisson bracket. 
%
Formal deformation quantization \cite{BFFLS1} of the  Poisson algebra $(\Fcal_*,\Pei{\cdot}{\cdot})$ means constructing an associative algebra $(\Fcal_*[[\hbar]],\star)$, where the  product $\star$ is given by a power series
\be
\label{generic star product}
F\star G=\sum\limits_{n=0}^\infty\hbar^n B_n(F,G)\,,
\ee
in which each $B_n$ is a bidifferential operator (in the sense of calculus on $\Ecal$) and in particular
\begin{align*}
B_0(F,G)&=F\cdot G\,,\\
B_1(F,G)-B_1(G,F)&=i\Pei{F}{G}\,.
\end{align*}


\subsection{Exponential star products}\label{sec:ExpProd}
The Peierls bracket of a quadratic action such as $S_0$ is a constant bidifferential operator, in the sense that it does not depend on $\ph$. Some of the $\star$-products that we need to consider are simple in a similar way.

\begin{df}
Given a sequence of cones $\Lambda_n\subset T^*M$ for $n=1.2,\dots$, we define a space of functionals by
\[
\Fcal_* \doteq \{F\in\Ci(\Ecal,\CC) \mid \WF(F^{(n)}(\ph))\subset \Lambda_n \forall\ph\in\Ecal,\ n\in\NN\} .
\]
The most important choices here are:
\begin{itemize}
\item
Regular functionals ($*=\reg$) are defined by $\Lambda_n=\varnothing$.
\item
\label{MicrocausalFunctionals}
\emph{Microcausal functionals} ($*=\mc$) are defined by
\be\label{cone}
	\Lambda_n\doteq T^*M^n\setminus\{(x_1,\dots,x_n;k_1,\dots,k_n)\mid (k_1,\dots,k_n)\in (\overline{V}_+^n \cup \overline{V}_-^n)_{(x_1,\dots,x_n)}\}\,.
\ee
\end{itemize}
%
	\end{df}

This has an obvious generalization to functionals of two variables.
\begin{df}\label{manyvar}
Let $\Fcal_*^2$ be the set of functionals $F\in\Ci(\Ecal\times\Ecal,\CC)$
such that for all $\ph_1,\ph_2\in \Ecal$, $n_1,n_2\in\NN_0$,   and for $n\doteq n_1+n_2$,
\[
\frac{\delta^{n_1+n_2}}{\delta\ph_1^{n_1}\delta\ph_2^{n_2}} F(\ph_1,\ph_2)\,,
\]
as a distribution in  $\Gamma'(E^{\boxtimes n}\to M^n)^{\sst \CC}$, has  WF set contained in $\Lambda_n$.
\end{df}

Let $m:\Fcal_*\otimes\Fcal_*\to \Fcal_*$ denote the pointwise product. In this notation, $F\cdot G = m(F\otimes G)$. Note that $\Fcal_*\otimes\Fcal_*\subset \Fcal_*^2$, and $m$ is pullback by the diagonal. It is clear from definition \ref{manyvar} that $m$ extends  to a map  $m:\Fcal_*^2\to\Fcal_*$. 

Suppose $K\in {\Gamma_c'}({E^*}^{\boxtimes 2}\rightarrow M^2)^{\sst\CC}$ is such that the differential operator defined by 
\be
\label{DofK}
D\doteq \left<K,\tfrac{\delta^2}{\delta \ph_1\delta\ph_2}\right>\,,
\ee
is a map $D:\Fcal_*^2\to \Fcal_*^2$. Note that on tensor products
\be
\label{biderivation}
[D(F\otimes G) ](\ph_1,\ph_2)\doteq \left<\bidis,F^{(1)}(\ph_1)\otimes G^{(1)}(\ph_2)\right>\,.
\ee
\begin{rem}
Note the nontrivial compatibility condition implicit here  between the choice  of $\Fcal_*$ (i.e. the choice of $\Lambda_n$) and the singularity structure of $\bidis$.
\end{rem}
\begin{df}\label{def:exponential prod}
The \emph{exponential product} given by $\bidis$ on $\Fcal_*[[\hbar]]$ is defined by
\be
\label{exponential form}
F\star G \doteq m\circ e^{\hbar D}(F\otimes G)  \,.
\ee
\end{df}
\begin{prop}
Any exponential product, $\star$,  is  associative.
\end{prop}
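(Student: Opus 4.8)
The plan is to prove associativity by lifting everything to functionals of three variables, where the bidifferential operator $D$ intertwines with the pointwise product through the Leibniz rule. Let $\Fcal_*^3$ be the space of functionals $F(\ph_1,\ph_2,\ph_3)$ satisfying the evident three-variable analogue of Definition~\ref{manyvar}, and for each pair $i<j$ in $\{1,2,3\}$ let $D_{ij}\doteq\left<\bidis,\tfrac{\delta^2}{\delta\ph_i\,\delta\ph_j}\right>$ be the operator that lets $D$ act on the $i$-th and $j$-th slots. First I would check, using the same compatibility between the cones $\Lambda_n$ and the singularity structure of $\bidis$ invoked in the remark following \eqref{biderivation}, that each $D_{ij}$ is a well-defined map $\Fcal_*^3\to\Fcal_*^3$, and that the partial-diagonal pullbacks $m_{12},m_{23}\colon\Fcal_*^3\to\Fcal_*^2$ --- e.g. $(m_{12}F)(\ph_1,\ph_3)\doteq F(\ph_1,\ph_1,\ph_3)$ --- are well defined just as $m$ is. Since functional derivatives in distinct slots commute and the $D_{ij}$ differ only in which slots carry the two derivatives and the kernel, the operators $D_{12},D_{13},D_{23}$ pairwise commute, so $e^{\hbar D_{12}}e^{\hbar D_{13}}e^{\hbar D_{23}}=e^{\hbar(D_{12}+D_{13}+D_{23})}$ as formal power series in $\hbar$.

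The heart of the argument is the intertwining identity
\be
\label{assoc intertwining}
D\circ m_{12}=m_{12}\circ(D_{13}+D_{23})\,,
\ee
in which $D$ on the left acts on the two-variable functional whose merged argument is labelled $1$. This is immediate from the Leibniz rule $\tfrac{\delta}{\delta\ph}(FG)=\tfrac{\delta F}{\delta\ph}\,G+F\,\tfrac{\delta G}{\delta\ph}$ together with \eqref{biderivation}; the analogous identity $D\circ m_{23}=m_{23}\circ(D_{12}+D_{13})$ holds by symmetry. Iterating \eqref{assoc intertwining} gives $e^{\hbar D}\circ m_{12}=m_{12}\circ e^{\hbar(D_{13}+D_{23})}$, and similarly for $m_{23}$. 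Writing $m^{(2)}\doteq m\circ m_{12}=m\circ m_{23}$ for the threefold pointwise product, one then expands
\begin{align*}
(F\star G)\star H
&=m\circ e^{\hbar D}\bigl([m\circ e^{\hbar D}(F\otimes G)]\otimes H\bigr)
=m\circ e^{\hbar D}\circ m_{12}\circ e^{\hbar D_{12}}(F\otimes G\otimes H)\\
&=m^{(2)}\circ e^{\hbar(D_{13}+D_{23})}e^{\hbar D_{12}}(F\otimes G\otimes H)
=m^{(2)}\circ e^{\hbar(D_{12}+D_{13}+D_{23})}(F\otimes G\otimes H)\,,
\end{align*}
and the same computation with $m_{23}$ in place of $m_{12}$ shows $F\star(G\star H)$ equals the identical expression. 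Associativity follows.

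I expect the only genuine difficulty to lie in the functional-analytic bookkeeping behind \eqref{assoc intertwining}: verifying that each $D_{ij}$ preserves $\Fcal_*^3$ (the three-variable version of the standing hypothesis on $D$), that $m_{12},m_{23}$ land in $\Fcal_*^2$ and compose with $m$ to well-defined maps on $\Fcal_*^3$, and that the Leibniz rule may be applied under the pairing \eqref{biderivation} with the wavefront sets controlled throughout. Everything being a formal power series in $\hbar$, there is no convergence issue: each power of $\hbar$ involves only a finite composition of the operators above, so once well-definedness is established the manipulations are purely algebraic.
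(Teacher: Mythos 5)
Your proof is correct and is precisely the standard Leibniz/intertwining argument (lift to three slots, commute the $D_{ij}$, reduce both bracketings to $m^{(2)}\circ e^{\hbar(D_{12}+D_{13}+D_{23})}$) that the paper does not reproduce but instead cites from Keller's thesis and Gerstenhaber; the paper itself uses the same coproduct identity $\Dcal_K\circ m=m\circ(\Dcal_K\otimes\id+\id\otimes\Dcal_K+2D_K)$ elsewhere. Your closing caveat about the functional-analytic bookkeeping is exactly the content of the compatibility assumption between $\Lambda_n$ and $\WF(\bidis)$ already imposed in the text, so nothing further is needed.
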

This is proven in \cite[Prop.~II.4]{Kai}.  This is a simple generalization of the finite rank case originally proven in \cite[Thm.~8]{Gerst3}. See also \cite{MMOY07,Waldmann}.

The involution ${}^*$ (given by complex conjugation) extends to $\Fcal_*[[\hbar]]$ if we just let $\hbar^*=\hbar$. However, this is an antiautomorphism of $\star$ if and only if $\bidis(y,x)=\overline{\bidis(x,y)}$.

In terms of eq.~\eqref{generic star product}, if we expand \eqref{exponential form} in powers of $\hbar$, then the first term is just $B_0=m$, and the second term is $B_1= m\circ D$, i.e.,
\[
B_1(F,G)(\ph) = \left<\bidis, F^{(1)}(\ph)\otimes G^{(1)}(\ph)\right>\,.
\]
From this, we can see that $\star$ gives a deformation quantization with respect to the Peierls bracket induced by $S_0$ if and only if 
\be
\label{antisymmetric part}
\bidis(x,y)-\bidis(y,x) = i\Delta_{S_0}(x,y) \,.
\ee
So, only the antisymmetric part of $\bidis$ is relevant to compatibility with the Peierls bracket. The freedom in choosing the symmetric part has been exploited in the literature \cite{BF0} to construct QFT models for free fields on globally hyperbolic manifolds.

\begin{df}
\label{Alpha}
Given a bidistribution, $Y \in {\Gamma_c'}({E^*}^{\boxtimes 2}\rightarrow M^2)^{\sst\CC}$, let $\Dcal_Y \doteq \left<Y,\frac{\delta^2}{\delta\ph^2}\right>$ and $\alpha_{Y} \doteq e^{\frac\hbar2\Dcal_Y}$.
\end{df}
\begin{prop}
\label{Star equivalence}
Consider  $\bidis_1,\bidis_2\in{\Gamma_c'}^{\sst\CC}({E^*}^{\boxtimes 2}\rightarrow M^2)$, whose difference, $Y\doteq\bidis_2-\bidis_1$ is symmetric, and which determine products $\star_1$ and $\star_2$.  If $\Fcal_*$ is in the domain of all powers of $\Dcal_Y$, then  
\[
\alpha_{Y} : (\Fcal_*[[\hbar]],\star_1)\to (\Fcal_*[[\hbar]],\star_2)
\]
is an isomorphism.
\end{prop}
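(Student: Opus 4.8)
The plan is to show directly that $\alpha_Y$ intertwines the two exponential products, $\alpha_Y(F\star_1 G)=(\alpha_Y F)\star_2(\alpha_Y G)$, and that it is invertible as a map $\Fcal_*[[\hbar]]\to\Fcal_*[[\hbar]]$; the engine is a Leibniz-type identity relating the one-variable operator $\Dcal_Y$ to operators on $\Fcal_*^2$. First I would fix notation: write $D_1$, $D_2$, $D_Y$ for the biderivations \eqref{biderivation} associated with $\bidis_1$, $\bidis_2$, $Y$ respectively, so that $D_Y=D_2-D_1$ because the $\bidis_i$ enter linearly; and write $\Dcal_Y^{(1)},\Dcal_Y^{(2)}:\Fcal_*^2\to\Fcal_*^2$ for the operators obtained by applying $\Dcal_Y$ in the first, resp.\ second, argument. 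The hypothesis that $\Fcal_*$ lies in the domain of all powers of $\Dcal_Y$, together with Definition \ref{manyvar}, guarantees that all these operators and their exponentials are well defined on $\Fcal_*^2$, and in particular that $\alpha_Y$ preserves $\Fcal_*[[\hbar]]$.

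The key computational step is the identity
\[
\Dcal_Y\circ m = m\circ\bigl(\Dcal_Y^{(1)}+\Dcal_Y^{(2)}+2D_Y\bigr)
\]
on $\Fcal_*^2$, obtained by differentiating $m(F)(\ph)=F(\ph,\ph)$ twice and contracting with $Y$; the factor $2$ on the mixed term uses the symmetry of $Y$, which also makes the mixed contraction unambiguous. Since $\Dcal_Y^{(1)}$, $\Dcal_Y^{(2)}$, $D_Y$, $D_1$, $D_2$ pairwise commute (functional derivatives commute, and these are constant-coefficient operators), exponentiating yields
\[
\alpha_Y\circ m \;=\; m\circ e^{\frac\hbar2\Dcal_Y^{(1)}}\,e^{\frac\hbar2\Dcal_Y^{(2)}}\,e^{\hbar D_Y},
\]
and $e^{\frac\hbar2\Dcal_Y^{(1)}}e^{\frac\hbar2\Dcal_Y^{(2)}}(F\otimes G)=(\alpha_Y F)\otimes(\alpha_Y G)$.

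Then I would compute, using $e^{\hbar D_Y}e^{\hbar D_1}=e^{\hbar(D_1+D_Y)}=e^{\hbar D_2}$ and the commutativity just noted,
\[
\alpha_Y(F\star_1 G)=\alpha_Y\,m\,e^{\hbar D_1}(F\otimes G)
= m\,e^{\hbar D_2}\,e^{\frac\hbar2\Dcal_Y^{(1)}}e^{\frac\hbar2\Dcal_Y^{(2)}}(F\otimes G)
= m\,e^{\hbar D_2}\bigl((\alpha_Y F)\otimes(\alpha_Y G)\bigr)=(\alpha_Y F)\star_2(\alpha_Y G).
\]
Finally, $\alpha_Y$ starts at the identity in the $\hbar$-adic filtration, hence is invertible as a formal power series with inverse $\alpha_{-Y}=e^{-\frac\hbar2\Dcal_Y}$; since an $\hbar$-linear bijective algebra morphism is an isomorphism, this completes the argument.

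\noindent\textbf{Main obstacle.}
The algebra is a textbook Gerstenhaber-type manipulation; the point that needs care is staying inside $\Fcal_*$ and $\Fcal_*^2$, i.e.\ the interplay between the chosen wavefront-set cones $\Lambda_n$ and $\WF(Y)$, so that each of $\Dcal_Y^{(1)}$, $\Dcal_Y^{(2)}$, $D_Y$, and $e^{\hbar D_2}$ genuinely maps $\Fcal_*^2\to\Fcal_*^2$ — this is exactly what the domain hypothesis is invoked for — together with the (routine, $\hbar$-adic) justification for rearranging the infinite exponential series.
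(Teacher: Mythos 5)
Your proposal is correct and follows essentially the same route as the paper's proof: the same Leibniz-type identity $\Dcal_Y\circ m = m\circ(\Dcal_Y\otimes\id+\id\otimes\Dcal_Y+2D_2-2D_1)$ exploiting the symmetry of $Y$, the same exponentiation via commuting constant-coefficient operators, and the same composition with $e^{\hbar D_1}$ to produce $e^{\hbar D_2}$, with invertibility from the leading identity term in the $\hbar$-adic filtration. Your explicit identification of the inverse as $\alpha_{-Y}$ and your remark on where the domain hypothesis enters are minor additions, not a different argument.
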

\begin{proof}
Firstly, the hypothesis that $\Fcal_*$ is in the domain of all powers of $\Dcal_Y$ means that $\alpha_{Y} : \Fcal_*[[\hbar]]\to\Fcal_*[[\hbar]]$ is well defined.

Because of the symmetry of $Y$, applying $\Dcal_Y$ to a product gives 
\[
\Dcal_Y (FG) = \Dcal_Y(F)G + F\Dcal_Y(G) + 2m\circ (D_2-D_1)(F\otimes G) \,.
\]
More concisely,
\[
\Dcal_Y\circ m = m\circ\left(\Dcal_Y\otimes\id+\id\otimes\Dcal_Y+2D_2-2D_1\right)\,,
\]
where $D_i$ is given by $K_i$ in eq.~\eqref{DofK}.
In other words, $m$ intertwines those two operators. This implies that it intertwines their powers and their exponentials, therefore
\[
\alpha_{Y}\circ m = m\circ e^{\hbar(\frac12\Dcal_Y\otimes\id+\frac12\id\otimes\Dcal_Y+D_2-D_1)}
\]
since the various differential operators commute. Composing this identity on the right with $e^{\hbar D_1}$ gives
\begin{align*}
\alpha_{Y} \circ m \circ e^{\hbar D_1} &= \alpha_{Y}\circ m_1 \\
&= m\circ e^{\hbar(\frac12\Dcal_Y\otimes\id+\frac12\id\otimes\Dcal_Y+D_2)}
= m_2\circ (\alpha_{Y}\otimes\alpha_{Y}) \,.
\end{align*}
This means that 
\[
\alpha_{Y} : (\Fcal_*[[\hbar]],\star_1)\to (\Fcal_*[[\hbar]],\star_2)
\]
 is a homomorphism. Finally, $\alpha_{Y}$ is a formal power series with leading term the identity map, therefore it is invertible and hence an isomorphism.
\end{proof}
\begin{rem}
In the finite-dimensional setting, a proof of this result was given in \cite{MMOY07}, although it may have been known well before then. In the context of quantum field theory, it was discussed in \cite{BDF}.
\end{rem}

Since only the antisymmetric part of $\bidis$ really matters, the simplest choice is to take $\bidis$ to be antisymmetric. In that case, eq.~\eqref{antisymmetric part} requires that $\bidis=\frac{i}2\Delta_{S_0}$. 
\begin{df}
The \emph{Moyal-Weyl product} (denoted $\starz$) is the exponential product defined on $\Fcal_\reg[[\hbar]]$ by $\frac{i}2\Delta_{S_0}$.
\end{df}

Unfortunately, this $\starz$ does not extend to a larger space of functionals than $\Fcal_\reg$. This is the fault of $\Delta_{S_0}$, whose wavefront set is
\be
\WF(\Delta_{S_0})=\{(x,k;x',-k')\in \dot{T}^*M^2\mid(x,k)\sim(x',k')\}\,,
\ee
where $\sim$ means that both $(x,k)$ and $(x',k')$ belong to some null geodesic strip, i.e., a curve of the form $(\gamma,\kappa):I\to T^*M$, for some interval $I \subset \RR$,  where $\gamma$ is a null geodesic  and $\kappa$ is given by $\kappa(\lambda) = g(\dot{\gamma}(\lambda),.)$.

The problem is that the pullback of a tensor power of $\Delta_{S_0}$ has a  WF set that at singular points contains the whole cotangent bundle \cite{BDF,DFloop}. This means that the second order term of $\starz$ is only well defined on regular functionals. We can obtain a better behaved star product by a choice of $\bidis$ that has a smaller WF set. 

With this in mind, recall that (as shown in \cite{Rad}) there exists a real, symmetric, distributional bisolution to the field equation, $H$, such that 
\be\label{WFsplit}
\Delta_{S_0}^+\doteq \tfrac{i}{2}\Delta_{S_0}+H
\ee
has WF set
\be\label{spectrum}
\WF(\Delta_{S_0}^+)=\{(x,k;x,-k')\in \dot{T}^*M^2\mid(x,k)\sim(x',k'), k\in \partial J^+_x\}\,,
\ee
where $\partial J^+_x\subset T^*_xM$ is the set of future-pointing null vectors.
Such a $H$ is called a \emph{Hadamard distribution}.

This is better, because sums of future-pointing vectors are always future-pointing, and do not give the entire cotangent space. This leaves room for $F^{(n)}$ and $G^{(n)}$ to have nontrivial WF sets but still gives a well-defined star product of $F$ and $G$.

This is the motivation behind the definition of microcausal functionals in Definition~\ref{MicrocausalFunctionals}.


\begin{df}
Let $\Had$ denote the set of Hadamard distributions.

Given $H\in\Had$, the \emph{Wick} product (denoted $\starH$) is the exponential product on $\Fcal_\mc[[\hbar]]$ given by $\Delta_{S_0}^+$. 
\end{df}
Hence $(\Fcal_\mc[[\hbar]],\starH)$ is a deformation quantization of $\Fcal_\mc$ with respect to the Peierls bracket \cite{BDF} with ${}^*$ as an involution. 
By Prop.~\ref{Star equivalence}, 
\[
\alpha_H : (\Fcal_\reg[[\hbar]],\starz)\to(\Fcal_\reg[[\hbar]],\starH) \subset (\Fcal_\mc[[\hbar]],\starH) 
\]
is an isomorphism.

The choice of a Hadamard distribution  is far from unique. It is only determined up to the addition of any smooth real symmetric bisolution. Prop.~\ref{Star equivalence} shows that the products given by any two choices are equivalent, as 
\[
\alpha_{H'-H}:(\Fcal_\mc[[\hbar]],\starH)\to(\Fcal_\mc[[\hbar]],\star_{H'})
\]
is an isomorphism. These isomorphisms are coherent in the sense that $\alpha_{H''-H'}\circ\alpha_{H'-H}=\alpha_{H''-H}$.

This means that we can think of just one abstract algebra, $\fA$. 
Technically, $\fA$ is the limit of the above collection of algebras over $\Had$, viewed as a category with a unique morphism from any object to any other. Explicitly:
\begin{df}
\label{fA def}
\[
\fA \doteq \{A:\Had\to\Fcal_\mc[[\hbar]] \mid \forall H,H'\in\Had,\ A_{H'} = \alpha_{H'-H}A_H\}
\]
We denote the product of $A$ and $B\in\fA$ as $A\bullet B$, where
\[
(A\bullet B)_H \doteq A_H\starH B_H \,.
\]
The involution is defined by
\[
(A^*)_H \doteq (A_H)^* \,.
\]
\end{df}

Our point of view differs slightly from  the one taken in \cite{BDF}. In that work, $\fA$ denotes a particular realization of our abstract algebra, obtained by completion of the space $\Fcal_{\reg}[[\hbar]]$.

We will not need to use this definition directly very much. The point is that this algebra is isomorphic to all those constructed using Hadamard distributions, but it does not require a choice of Hadamard distribution.

The different star products come from different ways of identifying $\fA$ with $\Fcal_\mc[[\hbar]]$ as vector spaces. 
In the next section we will explain how these different identifications are related to the choice of quantization maps.

\subsection{Quantization maps}\label{sec:QuantMaps}
In this section we discuss quantization maps and formalize constructions known from \cite{BDF} and \cite{DF}. 

Here are some key features of $\fA$:
\begin{itemize}
	\item
	$\fA$ is a free module of $\CC[[\hbar]]$.
	\item
	There is a surjective homomorphism $\Pcal:\fA\to\Fcal_\mc$ (evaluation at the classical limit).
	\item
	$\ker\Pcal=\hbar\,\fA$.
	\item
	This implies that the commutator of any $A,B\in\fA$ satisfies  $\Pcal([A,B]_\bullet)=[\Pcal(F),\Pcal(G)]=0$, so $[A,B]_\bullet\in\hbar\fA$. This allows us to state the property that
	\be
	\label{Poisson property}
	\Pcal\left(\tfrac1{i\hbar}[A,B]_\bullet\right)=\Pei{\Pcal(F)}{\Pcal(F)}_{S_0}\,.
	\ee
\end{itemize}
The space of microcausal functionals $\Fcal_\mc$ is not necessarily the only choice of classical algebra, but it will suffice here.

Note that a $\CC$-linear map $\Fcal_*\to\fA$ extends to a $\CC[[\hbar]]$-linear map $\Fcal_*[[\hbar]]\to\fA$. Indeed, these are equivalent in the sense that any $\CC[[\hbar]]$-linear map is just the linear extension of its restriction to $\Fcal_*$. For this reason, we will abuse notation by denoting such maps by the same symbol.
\begin{df}
\label{Quantization map}
	A \emph{quantization map} is a linear map $\Qcal:\Fcal_*\to\fA$ such that 
	\begin{itemize}
		\item
		$\Pcal\circ\Qcal=\id:\Fcal_*\to\Fcal_*$, and
		\item
		the image of the $\CC[[\hbar]]$-linear extension of $\Qcal$ is closed under the noncommutative product in $\fA$.
	\end{itemize}
\end{df}
In Physics terms, a quantization map is a choice of operator ordering.
\begin{rem}
It would also be reasonable to require t${}^*$-linearity:
\be
\label{reality}
\Qcal(F^*)=\Qcal(F)^*\,.
\ee
Some --- but not all --- of our quantization maps satisfy this condition.
\end{rem}
\begin{prop}
	Let $\Fcal_*\subseteq\Fcal_\mc$  be a Poisson subalgebra. A quantization map $\Qcal:\Fcal_*\to\fA$,  induces a star product on $\Fcal_*[[\hbar]]$ by the condition
	\be
	\label{induced product}
	\Qcal(F \star G) = \Qcal(F)\bullet \Qcal(G) \,.
	\ee
\end{prop}
\begin{proof}
	First, we need to show that $\Qcal:\Fcal_*[[\hbar]]\to\fA$ is injective. Suppose that there is a nonzero element of the kernel with leading order term $\hbar^m F$. This means that $\Qcal(\hbar^mF)\in\hbar^{m+1}\fA$, but $\Qcal(\hbar^mF)=\hbar^m\Qcal(F)$, so $\Qcal(F)\in\hbar\fA$ and $\Pcal\circ\Qcal(F)=0$. However, $\Pcal\circ\Qcal(F)=F\neq0$, so this is a contradiction.
	
	So, $\Qcal$ is a bijection to its image, and its image is an associative algebra. In this way, eq.~\eqref{induced product} defines an associative product, $\star$, on $\Fcal_*[[\hbar]]$. For $F,G\in\Fcal_*$, $F\star G\in\Fcal_*[[\hbar]]$ has an expansion of the form \eqref{generic star product} simply because this is a formal power series. 
	
	By the compatibility of $\Qcal$ with $\Pcal$,
	\[
	B_0(F,G) = \Pcal\circ\Qcal(F\star G) = \Pcal\left(\Qcal(F)\bullet\Qcal(G)\right) = F\cdot G\,,
	\]
	and by eq.~\eqref{Poisson property},
	\begin{align*}
	B_1(F,G)-B_1(G,F) &= \Pcal\circ\Qcal\left(\tfrac1\hbar[F\star G- G\star F]\right) \\ 
	&= \Pcal\left(\tfrac1\hbar[\Qcal(F),\Qcal(G)]_\bullet\right)\\ 
	&= i\Pei{F}{G}_{S_0}\,.
	\end{align*}
\end{proof}

\begin{rem}
	In a less formal version, $\Qcal$ would be a map from an algebra of bounded classical observables to the algebra of sections of a continuous field of C${}^*$-algebras. This can still induce a star product, but eq.~\eqref{induced product} needs to become an asymptotic expansion.
\end{rem}

Let us now return to the exponential product discussed in
section~\ref{sec:ExpProd}. 
\begin{df}
Given a Hadamard distribution $H\in\Had$, 
\[
\Qcal_H:\Fcal_{\mc}\rightarrow \fA
\]
is defined  (in terms of Def.~\ref{fA def}) by
\[
(\Qcal_HF)_{H'} \doteq \alpha_{H'-H}(F) \,.
\]
\end{df}
This is a quantization map that induces the exponential product $\starH$. It is also ${}^*$-linear.
\begin{df}
\[
\Qcal_\Weyl:\Fcal_\reg\to \fA 
\]
is defined by 
\[
(\Qcal_\Weyl F)_H \doteq \alpha_H F \,.
\]
\end{df}
This is a quantization map which induces the Moyal-Weyl product, $\starz$ on $\Fcal_\reg[[\hbar]]$. It is also ${}^*$-linear.

 These quantization maps are related by
 \[
 \Qcal_\Weyl = \Qcal_H\circ\alpha_H
 \]
 and
 \[
 \Qcal_{H} = \Qcal_{H'}\circ\alpha_{H'-H} \,.
 \] 

Let $\fA_\reg$ be the image of the extension of $\Qcal_\Weyl$ to formal power series.
Because the extensions of these quantization maps to formal power series are injective, $\Qcal_\Weyl$ induces a commutative ``pointwise'' product on $\fA_\reg$, and $\Qcal_H$ induces one on $\fA$. 
If $\Delta^+_{S_0}$ is the 2-point function of some choice of ``vacuum'' state, then the commutative product induced by $\Qcal_H$ is the normal-ordered (Wick) product, and  $\Qcal_H$ can be interpreted as the corresponding normal-ordering map.  Of course, this depends upon a choice of $H$.

\subsection{Nets of Algebras}
Although it is not our main focus here, in AQFT, we really want to construct a net of algebras, rather than just a single algebra. 
\begin{df}
Let $\Kcal(\Mcal)$ be the set of open, precompact, causally convex subsets of $M$. Define this to be a category whose morphisms are the inclusions of subsets.

Also, for $\Ocal\in\Kcal(\Mcal)$, we abuse notation by letting
\[
\Fcal_*(\Ocal) \doteq \{F\in\Fcal_* \mid \supp F \subseteq \Ocal\} \,.
\]
\end{df}
This is a functor from $\Kcal(\Mcal)$ to the category of subspaces of $\Fcal_*$.

In this way, the concrete construction of deformation quantization gives a net of algebras, i.e., a functor from $\Kcal(\Mcal)$ to a category of algebras:
\[
\Ocal\mapsto (\Fcal_\mc(\Ocal)[[\hbar]],\starH) \,.
\]

More abstractly, we have a functor (also denoted $\fA$) from $\Kcal(\Mcal)$ to the category of subalgebras of $\fA$.

For a given quantization map, $\Qcal:\Fcal_*\to\fA$, we can define $\Qcal_\Ocal$ as the restriction of $\Qcal$ to $\Fcal_*(\Ocal)$. The quantization maps that we have been considering satisfy the additional property that 
\be
\label{respect support}
\Qcal_\Ocal:\Fcal_*(\Ocal)\to\fA(\Ocal) \,.
\ee
It follows immediately that these are the components of a natural transformation from the functor $\Fcal_*$ to $\fA$ composed with the forgetful functor to vector spaces.

\subsection{Time-ordered products}\label{ToP}
In this section we discuss the time-ordered product. One uses this structure in pAQFT to construct the $S$-matrix and the interacting fields (see e.g., \cite{BDF}). We will come back to these in Section \ref{cpt}. Here, we want to review some basic properties of this product, emphasizing the importance of the time-ordering map  $\Qcal_\TT$ 
that establishes the equivalence between the time-ordered product and the pointwise product. 

\begin{df}
The relation $\prec$ on $\Kcal(\Mcal)$ means ``not later than'' i.e., $\Ocal_1 \prec \Ocal_2$ means that there exists a Cauchy surface to the future of $\Ocal_1$ and to the past of $\Ocal_2$.
\end{df}
We want a time-ordered version of the non-commutative product $\bullet$ to satisfy for $\Ocal_1,\Ocal_2\in\Kcal(\Mcal)$, $A\in\fA(\Ocal_1)$ and $B\in\fA(\Ocal_2)$,
\be\label{orderingQT}
A\circT B=\begin{cases}
	A\bullet B&\textrm{if }\Ocal_2\prec\Ocal_1\,,\\
	B\bullet A&\textrm{if }\Ocal_1\prec\Ocal_2\,.
\end{cases}
\ee 

Moreover, we require that there exists a quantization map $\Qcal_{\TT}:\Fcal_*\to\fA$, satisfying the support condition \eqref{respect support} and such that
\[
\Qcal_{\TT}F \circT \Qcal_{\TT}G=\Qcal_{\TT}(F\cdot G)\,,
\]
for $F,G\in\Fcal_*$.

There is certain freedom in how we choose $\circT$ for a given $\bullet$. We will further restrict this freedom when discussing concrete realizations of $\bullet$ in the following section

\subsubsection{Relation to Weyl quantization}
In this section we discuss how to construct
 $\Qcal_{\TT}$ (and hence $\circT$) starting from another quantization map. 
 
It is convenient to start with $\Qcal_\Weyl$ and first construct $\Qcal_{\TT}$ on $\Fcal_{\reg}$ by setting
\be\label{eq:QT:QW:QH}
\Qcal_\TT\doteq \Qcal_\Weyl\circ\TT\,,
\ee
for some $\TT:\Fcal_{\reg}[[\hbar]]\rightarrow \Fcal_{\reg}[[\hbar]]$.

Definition \ref{Quantization map} of ``quantization map'' implies that $\TT=\id \mod \hbar$. The support condition \eqref{respect support} implies that $\supp \TT F$ is contained in the causal completion of $\supp F$.

We impose further conditions on $\TT$:
\begin{itemize}
\item
$\TT:\Fcal_{\reg}[[\hbar]]\rightarrow \Fcal_{\reg}[[\hbar]]$ is a differential operator, 
\item
$\TT F=F$ for $F$ linear,
\end{itemize}
This last property is equivalent to requiring that there exists a natural endomorpihsm of the functor $\Fcal_\reg[[\hbar]]$ whose component at $\Ocal$ is the restriction of $\Tcal$ to $\Fcal_\reg(\Ocal)[[\hbar]]$.

A natural question to ask is what is the freedom in choosing $\Tcal$? A similar problem arises in Epstein-Glaser renormalisation \cite{EG}, where one constructs $n$-fold time-ordered products recursively (as multilinear maps on local functionals). There, the non-uniqueness of these maps is characterized by the Main Theorem of Renormalization (for various versions of this result see \cite{SP82,Pin01,HW02,DF04,BDF}). The following theorem provides a solution to this problem for $\Tcal$ restricted to regular functionals (and without requiring the field independence). 
\begin{thm}
\label{T freedom}
Any two operators $\TT$ and $\tilde{\TT}$ satisfying these conditions are related by 
\[
\tilde\TT = \TT\circ e^X
\]
where 
\[
XF(\ph)=\sum_{n=2}^\infty \left<a_n(\ph),F^{(n)}(\ph)\right>\,,
\]
with each $a_n(\ph)$ a formal power series in $\hbar$ with coefficients in symmetric distributions on $M^n$ such that:
\begin{itemize}
\item
$a_n(\ph)$ is a multiple of $\hbar$;
\item
$a_n(\ph)$ is supported on the thin diagonal of $M^n$;
\item
$a_n(\ph)$ depends at most linearly on $\ph$;
\item
$a_n^{(1)}(\ph)$ is supported on the thin diagonal of $M^{n+1}$.
\end{itemize}
\end{thm}
\begin{proof}
First, define $X=\log ({\TT}^{-1} \circ \tilde\TT)$. Because $\TT$ and $\tilde\TT$ are differential, $X$ must be differential. This means that it can be written in the form
\[
XF(\ph)=\sum_{n=0}^\infty \left<a_n(\ph),F^{(n)}(\ph)\right>\,,
\]
where $a_n$ is valued in formal power series of symmetric distributions.
If $F$ is any linear functional, then $\TT F = \tilde\TT F = F$, so $XF=0$, therefore $a_0=a_1=0$.

The ratio $\TT^{-1}\circ\tilde\TT = e^X$ is almost a homomorphism of the pointwise product in the sense that if $F$ and $G$ have causally separated support ($\supp G \prec \supp F$ or $\supp F\prec\supp G$) then
\[
e^X(F\cdot G) = (e^X F)\cdot(e^X G) \,.
\]
Consequently, $X$ is almost a derivation in the sense that if $F$ and $G$ have causally separated support, then
\be
\label{almost derivation}
X(F\cdot G) = XF\cdot G + F\cdot XG \,.
\ee

Any finite set of distinct points of $M$ can be listed in an order consistent with the causal partial order, and there is a time slicing of $\Mcal$ (into Cauchy surfaces) consistent with this order. Let $x_1,\dots,x_n\in M$ be such a list of points. There exist neighborhoods of these points that are causally separated by Cauchy surfaces of this time slicing. Consider any $p_1,\dots,p_n\in\Ecal_c^*$ supported on these respective neighborhoods, and any $\ph_0\in\Ecal$. Define linear functionals
\[
F_i(\ph) = \left<\ph-\ph_0,p_i\right> \,.
\]
For any numbers $m_1,\dots,m_n\in\NN$, apply $X$ to the product $F_1^{m_1}\dots F_n^{m_n}$ and use eq.~\eqref{almost derivation}. Evaluating at $\ph_0$, this shows that 
\[
0 = \left<a_{m_1+\dots+m_n}(\ph_0),p_1^{\otimes m_1}\otimes \dots\otimes p_n^{\otimes m_n}\right> \,.
\]
Because the section $p_i$ can take any values around $x_i$, this shows that $a_{m_1+\dots+m_n}(\ph_0)$ is not supported at
\[
(\underbrace{x_1,\dots,x_1}_{m_1},\dots,\underbrace{x_n,\dots,x_n}_{m_n}) \,.
\]
Up to symmetry, this is any point outside the thin diagonal.

Finally, consider a  point $x\in M$. If $F$ is supported on a causally complete neighborhood $U\ni x$, then 
\[
\left<a_n(\ph),F^{(n)}(\ph)\right>
\]
must be supported on $U$. Take $F$ homogeneous of degree $n$, so that $F^{(n)}(\ph)=f\in\Gamma_c(M^n,(E^*)^{\boxtimes n})^{\sst \CC}$ and $\supp f\subset U^n$. Now consider $\psi\in\Ecal$ such that $\supp \psi\cap U=\varnothing$. Since $\Tcal F$ has to be supported in $U$, we conclude that
\[
\left<a_n(\ph)-a_n(\ph+\psi),f\right>=0\,.
\] 
We can now use this fact to conclude that the derivative $a_n^{(1)}(\ph)$, seen as a distribution on $M^{n+1}$, has to be supported on the diagonal. Since $X$ maps regular functionals to regular functionals, this implies that $a_n(\ph)$ can depend at most linearly on $\ph$.
\end{proof}

In pAQFT one also requires the stronger condition of \textit{field independence}, which is usually phrased as
\[
\frac{\delta }{\delta \ph} (\Tcal F)=\Tcal\, \frac{\delta F }{\delta \ph}\,.
\]
This just means that $\Tcal$ should have constant coefficients. If $\TT$ and $\tilde\TT$ in Theorem~\ref{T freedom} satisfy this, then each $a_n(\ph)$ is independent of $\ph$.

Just as a quantization map induces a star product from $\bullet$, it also induces a commutative product from $\circT$. For $\Qcal_\Weyl$ we denote this commutative product on $\Fcal_\reg[[\hbar]]$ as $\T$. This is defined analogously to eq.~\eqref{induced product} by 
\be
\Qcal_\Weyl(F\T G)= \Qcal_\Weyl(F)\circT\Qcal_\Weyl(G) \,.
\ee
This is equivalent to\[
F\T G\doteq \TT(\TT^{-1}F\cdot\TT^{-1}G)\,.
\]
Because $\Qcal_\Weyl$ satisfies \eqref{respect support},  \eqref{orderingQT} implies  that $\T$ satisfies the condition 
\be\label{ordering}
F\T G=\begin{cases}
	F\starz G&\textrm{if }\supp G\prec\supp F\,,\\
	G\starz F&\textrm{if }\supp F\prec\supp G\,.
\end{cases}
\ee

There is a natural choice of $\TT$ satisfying field independence, given in terms of the Dirac propagator, $\Delta_{S_0}^{\mathrm{D}}\doteq\frac{1}{2} (\Delta_{S_0}^{\mathrm{R}}+\Delta_{S_0}^{\mathrm{A}})$. Set
\be
\label{DiracT}
\TT\doteq \alpha_{i\Delta_{S_0}^{\mathrm D}}=e^{\frac{i\hbar}{2}\Dcal_{\mathrm{D}}}\,, 
\ee
where $\Dcal_{\mathrm{D}}=\left<\Delta_{S_0}^{\mathrm D},\frac{\delta^2}{\delta\ph^2}\right>$.

\begin{prop}
 The time-ordered product $\T$ defined in terms of $\TT$ by eq.~\eqref{DiracT} equals the exponential product given by Definition \ref{def:exponential prod} with $K=i\Delta_{S_0}^{\mathrm{D}}$.
\end{prop}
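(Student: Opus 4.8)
The plan is to deduce this as an essentially immediate consequence of Proposition~\ref{Star equivalence}, applied to the pair of bidistributions $\bidis_1 = 0$ and $\bidis_2 = i\Delta_{S_0}^{\mathrm D}$. First I would record the two elementary facts that make this applicable. (i) The exponential product determined by $\bidis_1=0$ is nothing but the pointwise product $m$, since $m\circ e^{\hbar\cdot 0} = m$; so $\star_1$ in Proposition~\ref{Star equivalence} is the pointwise product. (ii) The Dirac propagator is symmetric: $\Delta_{S_0}^{\mathrm D}=\tfrac12(\Delta_{S_0}^{\mathrm R}+\Delta_{S_0}^{\mathrm A})$, and by \eqref{AandR} the kernels satisfy $\Delta_{S_0}^{\mathrm R}(x,y)=\Delta_{S_0}^{\mathrm A}(y,x)$, hence $\Delta_{S_0}^{\mathrm D}(x,y)=\Delta_{S_0}^{\mathrm D}(y,x)$. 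Thus $Y\doteq\bidis_2-\bidis_1 = i\Delta_{S_0}^{\mathrm D}$ is symmetric, and since we are on $\Fcal_\reg$ — where all functional derivatives are smooth and compactly supported while $Y$ is merely a distribution — $\Fcal_\reg[[\hbar]]$ lies in the domain of every power of $\Dcal_Y$. The hypotheses of Proposition~\ref{Star equivalence} are therefore met.

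By that proposition, $\alpha_Y = e^{\frac\hbar2\Dcal_Y} = e^{\frac{i\hbar}2\Dcal_{\mathrm D}} = \TT$ (matching Definition~\ref{Alpha} with the normalization used to define $\TT$) is an algebra isomorphism from $(\Fcal_\reg[[\hbar]],m)$ to $(\Fcal_\reg[[\hbar]],\star_2)$, where $\star_2$ is by definition the exponential product associated with $K=i\Delta_{S_0}^{\mathrm D}$. Being a homomorphism means $\TT(F\cdot G) = (\TT F)\star_2(\TT G)$ for all $F,G$. Replacing $F$ by $\TT^{-1}F$ and $G$ by $\TT^{-1}G$ (legitimate since $\TT$ is invertible) yields $\TT(\TT^{-1}F\cdot\TT^{-1}G) = F\star_2 G$, and the left-hand side is precisely $F\T G$ by the definition of the time-ordered product. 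Hence $\T=\star_2$ is the exponential product defined by $K=i\Delta_{S_0}^{\mathrm D}$, which is the claim.

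As an alternative route, not invoking Proposition~\ref{Star equivalence}, one can argue directly from the Leibniz-type identity established in its proof: symmetry of $Y$ gives $\Dcal_Y\circ m = m\circ(\Dcal_Y\otimes\id+\id\otimes\Dcal_Y+2D)$, where $D$ is the biderivation \eqref{biderivation} for $K=Y$; since these three operators on $\Fcal_\reg^2$ pairwise commute, exponentiation gives $\TT\circ m = m\circ(\TT\otimes\TT)\circ e^{\hbar D}$, and composing on the right with $\TT^{-1}\otimes\TT^{-1}$ (which commutes with $e^{\hbar D}$ since all are constant-coefficient operators on $\Fcal_\reg^2$) produces $\TT\circ m\circ(\TT^{-1}\otimes\TT^{-1}) = m\circ e^{\hbar D}$, i.e. $F\T G = m\circ e^{\hbar D}(F\otimes G)$.

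There is no serious obstacle here; the only points requiring care are bookkeeping ones — confirming the symmetry of $\Delta_{S_0}^{\mathrm D}$ from \eqref{AandR}, matching the factor $\tfrac\hbar2$ in Definition~\ref{Alpha} so that $\alpha_{i\Delta_{S_0}^{\mathrm D}}$ genuinely coincides with $\TT$, and, in the direct argument, getting the combinatorial factor $2$ in the Leibniz identity right and checking that the relevant differential operators commute and preserve $\Fcal_\reg$.
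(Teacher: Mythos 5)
Your proof is correct and matches the paper's: your ``alternative route'' is verbatim the Leibniz-rule/coproduct computation the paper gives, and your primary route merely repackages that same computation by recognizing the statement as the special case $\bidis_1=0$, $\bidis_2=i\Delta_{S_0}^{\mathrm D}$ of Proposition~\ref{Star equivalence}, whose proof is exactly that computation. The bookkeeping you flag --- the symmetry of $\Delta_{S_0}^{\mathrm D}$ from \eqref{AandR}, the identification $\alpha_{i\Delta_{S_0}^{\mathrm D}}=\TT$, and the domain condition on $\Fcal_\reg$ --- is all handled correctly.
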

\begin{proof}
The Leibniz rule for differentiation implies that
\[
\Dcal_{K}\circ m=m\circ (\Dcal_K\otimes \id+\id\otimes \Dcal_K+2D_K)\,,
\] 
which is in fact a coproduct structure. Hence
\begin{multline*}
\TT(\TT^{-1}F\cdot\TT^{-1}G)=e^{\frac{i\hbar}{2}\Dcal_{\mathrm{D}}}\circ m\left(^{-\frac{i\hbar}{2}\Dcal_{\mathrm{D}}} F\otimes ^{-\frac{i\hbar}{2}\Dcal_{\mathrm{D}}}G\right)\\
=m\circ e^{\frac{i\hbar}{2}\Dcal_{\mathrm{D}}\otimes \id+\id\otimes \frac{i\hbar}{2}\Dcal_{\mathrm{D}}+i\hbar D_{\mathrm{D}}}\left(^{-\frac{i\hbar}{2}\Dcal_{\mathrm{D}}} F\otimes ^{-\frac{i\hbar}{2}\Dcal_{\mathrm{D}}}G\right)\\
=m\circ e^{i\hbar D_{\mathrm{D}}}(F\otimes G)\,.
\end{multline*}
	\end{proof}



Because both $\starz$ and $\T$ are exponential products, they are related by an exponential factor.  With obvious notation following eq.~\eqref{DofK}, $D_\Delta = D_{\mathrm R}-D_{\mathrm A}$ and $D_{\mathrm D} = \frac12\left(D_{\mathrm R}+D_{\mathrm A}\right)$, so $\frac12 D_\Delta-D_{\mathrm D} = -D_{\mathrm A}$, and therefore the relation is
\be
\label{starT relation}
m_{\starz} = \mT\circ e^{-i\hbar D_{\mathrm A}} \,.
\ee

From this relation, it is easy to see why $\T$ satisfies eq.~\eqref{ordering}. Consider the case that $\supp G \prec \supp F$. By definition, $\Delta_{S_0}^{\mathrm A}(x,y) = 0$ when $x\nleq y$, so $D_{\mathrm A}(F\otimes G)=0$ and
\begin{align*}
F\starz G &= \mT\circ e^{-i\hbar D_{\mathrm A}}(F\otimes G) = \mT(F\otimes G) = F\T G\,.
\end{align*}

Coming back to formula \eqref{eq:QT:QW:QH}, using $\Tcal$, we have now constructed a quantization map $\Qcal_{\TT}$ on $\Fcal_{\reg}[[\hbar]]$ with the desired properties. We denote by $\starb$ the star product on $\Fcal_{\reg}[[\hbar]]$ induced by $\Qcal_\TT$ from the nocommutative product $\bullet$ on $\fA$; this product is given by the multiplication map
\[
m\circ e^{-i\hbar D_A} \,,
\]
i.e.
\[
F\starb G= m\circ e^{-i\hbar D_A}(F\otimes G)\,.
\]
The disadvantage of using $\Qcal_\TT$ as the quantization map is that it does not extend naturally to $\Fcal_\mc$ (unlike $\Qcal_H$). It is also not ${}^*$-linear like $\Qcal_H$ and $\Qcal_\Weyl$; instead it induces an involution on $F\in\Fcal_\reg[[\hbar]]$, given by the formula 
\[
F \mapsto\TT^{-1}[(\TT F)^*]=\alpha_{-2i\Delta_{S_0}^{\mathrm D}}(F^*)
\]
and compatible with $\starb$.

\subsubsection{Relation to normal ordering}
We can also start with $\Qcal_H$ and let 
\[
\Qcal_\TT \doteq \Qcal_H\circ\TT_H \,.
\]
The relation between $\Qcal_\Weyl$ and $\Qcal_H$ implies that
\[
\TT_H = \alpha_{\Delta_{S_0}^{\mathrm F}} \,,
\]
where $\Delta_{S_0}^{\mathrm F}$ is the Feynman propagator, defined as $\Delta_{S_0}^{\mathrm F} = i\Delta_{S_0}^{\mathrm D} + H$. 

Using $\TT_H$ we can introduce another exponential product on $\Fcal_\reg[[\hbar]]$, denoted by $\dTH$ and given by setting $K=\Delta_{S_0}^{\mathrm F}$ in definition~\ref{def:exponential prod}. Clearly, this product is again commutative and equivalent to the pointwise product. 

Since $\Delta_{S_0}^+ - \Delta_{S_0}^{\mathrm F} = -i\Delta_{S_0}^{\mathrm A}$, the multiplication maps for $\starH$ and $\dTH$ are related by,
\[
m_{\starH}=m_{\TT_H}\circ e^{-i\hbar D_{\mathrm A}} \,,
\]
which is exactly the same as the relation between $\starz$ and $\T$; consequently, by identical reasoning, they are also related by eq.~\eqref{ordering}, and we can say that $\dTH$ is the time-ordered product associated to $\starH$.

Unfortunately, although $\starH$ is defined on $\Fcal_\mc[[\hbar]]$, this construction only defines $\dTH$ and $\TT_H$ on $\Fcal_\reg[[\hbar]]$. Renormalization in pAQFT is a matter of extending these consistently to local functionals. Again, there is some freedom in constructing such an extension \cite{BDF,HW02}, physically interpreted as \textit{renormalization freedom}. To fix this freedom, one imposes certain conditions more restrictive than those we imposed on time ordered products of regular functionals, this is discussed extensively in \cite{Due19}.

We summarize the different exponential product used throughout this section in Table~\ref{tab:prod}. Compared to the existing literature, the only new ingredient is the non-commutative product $\starb$, related to more commonly used star products by:
\begin{align*}
F\starz G &= \TT(\TT^{-1}F\starb\TT^{-1}G)\,,\\ F\starH G &= \TT_H(\TT_H^{-1}F\starb\TT_H^{-1}G)\,.
\end{align*}
\begin{table}
\begin{center}
\begin{tabular}{ccc}
Product & Symbol & Distribution\\
\hline
Pointwise & $\cdot$ & $0$\\
Weyl & $\starz$ & $\frac i2\Delta_{S_0} = \frac i2\left(\Delta_{S_0}^{\mathrm R}-\Delta_{S_0}^{\mathrm A}\right)$\\
Time ordered & $\T$ & $i\Delta_{S_0}^{\mathrm D} = \frac{i}2\left(\Delta_{S_0}^{\mathrm R}+\Delta_{S_0}^{\mathrm A}\right)$ \\
Wick & $\starH$ & $\Delta_{S_0}^+ = \frac i2\left(\Delta_{S_0}^{\mathrm R}-\Delta_{S_0}^{\mathrm A}\right)+H$\\
Time ordered & $\dTH$ & $\Delta_{S_0}^{\mathrm F} = \frac{i}2\left(\Delta_{S_0}^{\mathrm R}+\Delta_{S_0}^{\mathrm A}\right) + H$\\
& $\starb$ & $-i \Delta_{S_0}^{\mathrm A}$
\end{tabular}
\end{center}
\caption{Some of the exponential products, and the distributions used to construct them.}\label{tab:prod}
\end{table}

\section{Interaction and M{\o}ller operators}\label{cpt}
\subsection{The M{\o}ller operator in the abstract algebra}
Consider a theory with action $S=S_0+\la V$, where $V\in \Fcal_{\reg}$ is compactly supported and $\la$ is the coupling constant, treated from now on as a formal parameter (similarly to $\hbar$). 

We wish to construct a formal deformation quantization of $(\Fcal_*,\Pei{\cdot}{\cdot}_S)$. Proposition~\ref{IntertwiningProp} shows that $r_{\la V}$ intertwines the Peierls brackets for $S$ and $S_0$. If $\star$ is a formal deformation quantization for $S_0$, then this gives an obvious formal deformation quantization for $S$. Simply define $\star_r$ by 
\be
\label{naive}
r_{\la V}(F\star_r G) = r_{\la V}F \star r_{\la V}G \,.
\ee
This construction was proposed by Dito \cite{Dit90,Dit93} in the case of an interacting theory on Minkowski spacetime that is equivalent to a free theory. This is simple, but it is not the best choice. We will explain the reasons for this in Section~\ref{Naive}, after describing what we claim is a better construction.  

Firstly, we need to introduce some further definitions. A central object in pAQFT for constructing interacting theories is the \textit{formal $S$-matrix}.
\begin{df}\label{nrS}
The \emph{formal $S$-matrix} is the map $\Scal:\la\fA_{\reg}[[\la]]\rightarrow \fA_{\reg}[[\la/\hbar]]$ given by the time-ordered exponential
\be\label{Smatrix}
\Scal(A)=e_{\circT}^{iA/\hbar}=\Qcal_{\TT}\big( e^{i (\Qcal_{\TT}^{-1}A)/\hbar}\big) \,.
\ee
\end{df}
\begin{rem}
	Here $V$ plays the role of an interaction term where some infrared (IR) and ultraviolet (UV) regularizations have been implemented. The former guarantees compact support and the latter regularity. An example of a regular functional is $\ph\mapsto \int f(x_1,\dots,x_3)\ph(x_1)\dots\ph(x_3)$, where $f$ is a compactly supported density on $M^3$ and $\ph\in\Ci(M,\RR)$. The physical interaction is recovered in the limit 
	\[f\rightarrow \delta(x_1-x_2)\dots\delta(x_1-x_3)\,d^4x_1\dots d^4x_3\,.
	\]
	 Instead of taking this limit directly, in pAQFT one proceeds in two steps. First the map $\Scal$ is extended to a larger subset of $\fA$, to deal with the potential UV divergences (see for example \cite{BDF} and \cite{Book} for a review). Next, one takes \textit{the algebraic adiabatic limit} to deal with the IR problem. For more details see \cite{Fredenhagen2015,Book} and \cite{Eli16} for an alternative formulation.
\end{rem}
The physical interpretation is that $\Scal$ becomes the scattering matrix of the theory, in the adiabatic limit (i.e., when the infrared regularization is removed), if it exists. One uses $\Scal$ to construct interacting fields using the quantum M{\o}ller operator given by the formula of Bogoliubov (see e.g., \cite{BDF} and \cite{Book} for a review):
\begin{align*}
\Rcal_{A}(B)&= \left.-i\hbar\frac{d}{dt}\left(\Scal(A)^{\minus}\bullet\Scal(A+t B)\right)\right|_{t=0} \\
&= \left(\Scal(A)\right)^{-1}\bullet\left(\Scal(A)\circT B\right) \\
&= \left(e_{\circT}^{iA/\hbar}\right)^{-1}\bullet\left(e_{\circT}^{iA/\hbar}\circT B\right) \,.
\end{align*}
A priori, this is defined for $A,B\in\lambda\fA_\reg[[\lambda]]$. However, the last expression makes sense for $B\in\fA_\reg$ and defines a map $\Rcal_{A} : \fA_\reg\to\fA_\reg[[\la/\hbar]]$. Better still, we will show below that $\Rcal_{A} : \fA_\reg\to\fA_\reg[[\la]]$ --- i.e., no negative powers of $\hbar$ survive.

For the interaction $V\in\Fcal_\reg$, we choose $A=\la\,\Qcal_{\TT}(V)$ in this formula. The interacting product $\bint$ on $\fA_\reg$ is then defined by
\[
B \bint C \doteq \Rcal_{A}^{-1}\left(\Rcal_{A}(B)\bullet\Rcal_{A}(C)\right) \,.
\]

\subsection{The M{\o}ller operator in terms of functionals}
For the purpose of practical computations, it is easier to work with functionals rather than the abstract algebra $\fA_\reg$, so we should identify $\fA_\reg$ with $\Fcal_\reg[[\hbar]]$ by using a quantization map, but there are several to choose from. This leads to abundance of products used in the literature in this context. However, the fundamental structures can be described on the level of the abstract algebra, and there one needs only two products: $\bullet,\circT$, while $\bint$ can be constructed out of the two, as described above.

The most obvious way to work is based on the quantization map $\Qcal_{\textrm{Weyl}}$. With this identification, the formal S-matrix becomes
$\Scal_0\doteq \Qcal_{\textrm{Weyl}}^{-1}\circ\Scal\circ \Qcal_{\textrm{Weyl}}$, so one is often interested in
\[
 \Scal_0 (\la\,\TT V)= \TT(e^{i\la V/\hbar})\equiv e_{\TT}^{i\la \TT V/\hbar}\,.
\]
The M{\o}ller operator becomes $R_{0,\la V} \doteq \Qcal_{\textrm{Weyl}}^{-1}\circ\Rcal_{\la \Qcal_\TT(V)}\circ \Qcal_{\textrm{Weyl}}$, which is
\begin{align}\label{RV}
R_{0,\la V}(F)
&=\left(e_{\sst{\TT}}^{i\la\,\TT V/\hbar}\right)^{\starz\minus}\starz\left(e_{\sst{\TT}}^{i\la\,\TT V/\hbar}\T  F\right)\,.
\end{align}
If we instead use $\Qcal_H$, then the formula for $R_{H,\la V}$ is the same, but with $\starH$ and $\TT_H$.

All these formulae for the interacting star product and interacting fields are difficult to work with, because they each use two different products, neither of which is the natural pointwise product on the space of functionals.

\subsection{The M{\o}ller operator in the time-ordered identification}
Instead, the most convenient quantization map for computations is actually $\Qcal_\TT$. With this identification, the formal S-matrix becomes simply 
\[
\Qcal_\TT^{-1}\circ \Scal\circ\Qcal_\TT(F) = e^{iF/\hbar} \,.
\]
The M{\o}ller operator becomes $R_{\TT,\la V} \doteq \Qcal_\TT^{-1}\circ\Rcal_{\la \Qcal_\TT(V)}\circ \Qcal_\TT$, which is
\[
R_{\TT,\la V}(F) = \left(e^{i\la V/\hbar}\right)^{\starb-1}\starb \left(e^{i\la V/\hbar}\cdot F\right) \,.
\]
The inverse of this is particularly simple:
 \[
R_{\TT,\la V}^{-1}(G)= e^{-i\la V/\hbar}  \left(e^{i\la V/\hbar}\starb G\right) \,.
\]

Note that any formula for $R_{\TT,\la V}$ can be converted to a formula for $R_{H,\la V}$ (and \emph{vice versa}) by $R_{H,\la V}\circ\TT_H = \TT_H\circ R_{\TT,\la V}$.

Recall that the $n$'th derivative of a functional is valued in the complex dual of the projective tensor product $\Ecal^{\hat{\otimes}_\pi n}\cong \Gamma(E^{\boxtimes n}\rightarrow M^n)$.
\begin{df}
Any $\chi\in \Ecal^{\hat{\otimes}_\pi n}_\CC$ defines a linear map
\[
\chi\triangleright {}: \Ci(\Ecal,\CC)\to\Ci(\Ecal,\CC)
\]
by
\[
(\chi\triangleright F)(\ph) = \left<F^{(n)}(\ph),\chi\right> .
\]
\end{df}
Note that for $\psi\in\Ecal$, this defines a derivation. In general, this extends to linear combinations and defines differential operators. The idea is that the abelian Lie group $\Ecal$ acts on functionals by translation. This $\Ecal$ is its own Lie algebra and in that capacity acts on functionals by derivations. The projective symmetric tensor product algebra  $\hat S_\pi\Ecal$ is a completion of the universal enveloping algebra and acts on functionals by differential operators. The $\triangleright$ notation is borrowed from the Hopf algebra literature.

A convenient way of describing elements of this algebra --- and thus differential operators on functionals is as functions on $\Ecal'$. We can define $K\in\hat S_\pi\Ecal$ by giving a formula for $K(w)$ where $w\in\Ecal'$. Formally, $K\triangleright F$ is defined by replacing $w^{\otimes n} $ with $F^{(n)}$. Of course, we extend all of this to formal power series, and we use functionals valued in $\hat S_\pi\Ecal$.

\begin{prop}
\label{RfromJ}
\be
\label{RTformula}
R_{\TT,\la V}^{-1}(G)(\ph) = \left(J(\ph)\triangleright G\right)(\ph)
\ee
where
\be
\label{generating function}
J(\ph;w) \doteq \exp\left(\la\frac{V(\ph-i\hbar \Delta_{S_0}^{\mathrm A}w)-V(\ph)}{-i\hbar}\right)
\ee
for $\phi\in\Ecal$ and $w\in\Ecal^*_c$.
\end{prop}
\begin{rem}
Equation \eqref{RTformula} is subtle, so it is worth expanding upon. First, $J$ is a functional from $\Ecal$ to formal power series with coefficients in $\hat S_\pi\Ecal$. In eq.~\eqref{RTformula}, $J$ is evaluated at $\ph$ to give a formal power series with coefficients in $\hat S_\pi\Ecal$. This then defines a differential operator which acts on $G$ to give a functional. Finally, this functional is evaluated at $\ph$.
\end{rem}
\begin{proof}
First, for some arbitrary functional $F$, consider $F\starb{}$ as an operator acting on $G$:
\begin{align*}
(F\starb G)(\ph) &= \sum_{n=0}^\infty \frac{(-i\hbar)^n}{n!} \left<F^{(n)}(\ph),(\Delta^{\mathrm A}_{S_0})^{\otimes n} G^{(n)}(\ph)\right> \\
&= \left(K(\ph)\triangleright G\right)(\ph)
\end{align*}
where
\[
K(\ph;w) \doteq \sum_{n=0}^\infty \frac{(-i\hbar)^n}{n!} \left<F^{(n)}(\ph),(\Delta^{\mathrm A}_{S_0}w)^n\right> = F(\ph-i\hbar\Delta^{\mathrm A}_{S_0}w) \,.
\]

For $F=e^{i\la V/\hbar}$, this becomes
\[
K(\ph;w) = \exp\left(\la\frac{V(\ph-i\hbar \Delta_{S_0}^{\mathrm A}w)}{-i\hbar}\right) 
 = e^{i\la V(\ph)/\hbar} J(\ph;w) \,,
\]
and so
\[
R_{\TT,\la V}^{-1}(G)(\ph) = e^{-i\la V(\ph)/\hbar}\left(K(\ph)\triangleright G\right)(\ph) = \left(J(\ph)\triangleright G\right)(\ph) \,.   \qedhere
\]
\end{proof}

The original proof of the next result is due to \cite{DFloop} and was presented for the  case of local functionals. Here we are working only with regular functionals, which allows a simpler proof. (Recall that $\fA$ is an algebra over $\CC[[\hbar]]$.)
\begin{cor}
For $A\in\lambda\fA_\reg$,
$\Rcal_A:\fA[[\la]]\to\fA[[\la]]$, and for $V\in\Fcal_\reg$, $R_{0,\la V},R_{\TT,\la V}: \Fcal_\reg[[\hbar,\la]] \to \Fcal_\reg[[\hbar,\la]]$. That is, the M{\o}ller operator contains no negative powers of $\hbar$.
\end{cor}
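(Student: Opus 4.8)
The plan is to exploit the explicit formula for $R_{\TT,\la V}^{-1}$ from Proposition~\ref{RfromJ} and then transport the conclusion to the other forms of the Møller operator via the conjugation identities already recorded. First I would observe that the only potential source of negative powers of $\hbar$ in $R_{\TT,\la V}^{-1}(G)(\ph) = \left(J(\ph)\triangleright G\right)(\ph)$ is the factor $(-i\hbar)^{-1}$ appearing in the exponent of the generating function $J(\ph;w)$. So the heart of the matter is to show that, despite this, $J(\ph;w)$ is a formal power series in $\hbar$ (with coefficients that are polynomial in $w$, hence acting as honest differential operators). To see this, expand the numerator: since $V$ is regular, $V(\ph - i\hbar\Delta_{S_0}^{\mathrm A}w)$ has a convergent (in the formal sense) Taylor expansion
\[
V(\ph - i\hbar\Delta_{S_0}^{\mathrm A}w) - V(\ph) = \sum_{n=1}^\infty \frac{(-i\hbar)^n}{n!}\left\langle V^{(n)}(\ph),(\Delta_{S_0}^{\mathrm A}w)^{\otimes n}\right\rangle,
\]
and the $n=0$ term cancels. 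Dividing by $-i\hbar$ therefore leaves $\sum_{n\geq 1} \frac{(-i\hbar)^{n-1}}{n!}\left\langle V^{(n)}(\ph),(\Delta_{S_0}^{\mathrm A}w)^{\otimes n}\right\rangle$, which starts at $\hbar^0$. Exponentiating a power series in $\hbar$ with no negative powers and (because of the overall $\la$) no constant term yields again a power series in $\hbar$ and $\la$ with no negative powers of either; in fact the $\la^k$ coefficient involves at most $k$ factors of the bracket expansion, and the lowest $\hbar$-power is controlled accordingly, so there is no accumulation of negative powers.

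Having established $R_{\TT,\la V}^{-1}: \Fcal_\reg[[\hbar,\la]]\to\Fcal_\reg[[\hbar,\la]]$, I would next handle the inverse $R_{\TT,\la V}$ itself. Since $R_{\TT,\la V}^{-1}$ is, order by order in $\la$, a differential operator whose $\la^0$ term is the identity, it is invertible as a formal power series in $\la$, and its inverse $R_{\TT,\la V}$ is again given by a $\la$-adically convergent Neumann-type series each term of which contains no negative powers of $\hbar$; hence $R_{\TT,\la V}$ maps $\Fcal_\reg[[\hbar,\la]]$ to itself as well. For $R_{0,\la V}$ I would use the relation $R_{0,\la V} = \TT^{-1}\circ R_{H,\la V}\circ\TT$ type conjugations; more directly, $R_{0,\la V}$ and $R_{\TT,\la V}$ (and $R_{H,\la V}$) are intertwined by the operators $\TT$, $\TT_H$, which are themselves power series in $\hbar$ with leading term the identity (and their inverses likewise), so conjugation by them cannot introduce negative powers of $\hbar$. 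This gives the claim for $R_{0,\la V}$ and $R_{H,\la V}$.

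Finally, to obtain the abstract statement $\Rcal_A:\fA[[\la]]\to\fA[[\la]]$, I would note that $\Rcal_{\la\Qcal_\TT(V)} = \Qcal_\TT\circ R_{\TT,\la V}\circ\Qcal_\TT^{-1}$ and that $\Qcal_\TT$ is an isomorphism of $\CC[[\hbar]]$-modules; since $R_{\TT,\la V}$ contains no negative powers of $\hbar$ and no negative powers of $\la$, neither does $\Rcal_A$, so it genuinely maps $\fA[[\la]]$ (rather than $\fA[[\la/\hbar]]$) to itself. The main obstacle — and really the only non-formal point — is the cancellation argument in the first paragraph: one must be careful that after exponentiating, the coefficient of each $\la^k$ is a \emph{finite} sum of terms, each with a nonnegative power of $\hbar$, so that no infinite negative tail is generated; this follows because the exponential series truncates at order $k$ in $\la$ for the $\la^k$ coefficient, and each bracket term already carries $\hbar^{n-1}$ with $n\geq 1$. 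Regularity of $V$ is what makes all the pairings $\left\langle V^{(n)}(\ph),(\Delta_{S_0}^{\mathrm A}w)^{\otimes n}\right\rangle$ well defined, which is why the simpler proof is available here compared to the local case of \cite{DFloop}.
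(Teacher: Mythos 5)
Your proposal is correct and follows essentially the same route as the paper: both rest on the observation that the numerator $V(\ph-i\hbar\Delta_{S_0}^{\mathrm A}w)-V(\ph)$ in the exponent of $J$ is of order $\hbar$, cancelling the $\hbar$ in the denominator, so that $R_{\TT,\la V}^{-1}$ contains no negative powers of $\hbar$, after which $R_{\TT,\la V}$ is obtained by inverting a formal power series in $\la$ with leading term the identity and the other forms follow by the conjugation relations. Your version merely spells out the Taylor expansion and the finiteness of each $\la^k$ coefficient, which the paper leaves implicit.
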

\begin{proof}
In the expression
\[
\frac{V(\ph-i\hbar \Delta_{S_0}^{\mathrm A}w)-V(\ph)}{-i\hbar} \,,
\]
the numerator is of order $\hbar$, so this cancels the $\hbar$ in the denominator, thus $J$ does not contain any negative powers of $\hbar$. This means that $R_{\TT,\la V}^{-1}: \Fcal_\reg[[\hbar,\la]] \to \Fcal_\reg[[\hbar,\la]]$. 

The fact that $J\approx 1$ to $0$'th order in $\la$ means that $R_{\TT,\la V}^{-1}$ is a formal power series with the identity map in leading order. As a formal power series, it can thus be inverted, and $R_{\TT,\la V}$ exists. Finally, the other forms of the M{\o}ller operator are equivalent to this one.
\end{proof}

Let us now discuss the classical limit. Recall that $\Pcal:\fA\to \Fcal_\mc,\fA_\reg\to\Fcal_\reg$ is the evaluation at the classical limit; it corresponds to setting $\hbar=0$.
\begin{prop}
\label{Classical Moller}
The classical limit of the quantum M{\o}ller operator is the classical M{\o}ller operator:
\[
\Pcal\circ\Rcal_{\la \Qcal_\TT(V)} = r_{\la V}\circ \Pcal \,.
\]
Equivalently, for $F\in\Fcal_\reg$,
	\[
	R_{H,\la V}(F)\big|_{\hbar=0}=r_{\la V}(F)\,.
	\]	 
\end{prop}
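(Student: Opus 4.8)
The plan is to read the classical limit off the generating-function formula of Proposition~\ref{RfromJ} and then transport it to $R_{H,\la V}$ and to the abstract Møller operator. First I would establish the equivalent statement $R_{\TT,\la V}(F)\big|_{\hbar=0}=r_{\la V}(F)$ on $\Fcal_\reg$, or rather its inverse version. By Proposition~\ref{RfromJ}, $R_{\TT,\la V}^{-1}(G)(\ph)=\big(J(\ph)\triangleright G\big)(\ph)$ with $J(\ph;w)=\exp\!\big(\la\,(V(\ph-i\hbar\Delta_{S_0}^{\mathrm A}w)-V(\ph))/(-i\hbar)\big)$. Taylor-expanding the numerator in $\hbar$ (its order-$\hbar$ vanishing is exactly the cancellation used in the proof of the preceding Corollary) gives $V(\ph-i\hbar\Delta_{S_0}^{\mathrm A}w)-V(\ph)=-i\hbar\,\langle V^{(1)}(\ph),\Delta_{S_0}^{\mathrm A}w\rangle+\Ocal(\hbar^2)$, so $J(\ph;w)\big|_{\hbar=0}=\exp\!\big(\la\,\langle V^{(1)}(\ph),\Delta_{S_0}^{\mathrm A}w\rangle\big)$, whose exponent is linear in $w$.

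Next I would feed this through the $\triangleright$-dictionary of Proposition~\ref{RfromJ}, which replaces the formal variable $w$ in each homogeneous term by a functional-derivative slot of $G$ (i.e.\ $w^{\otimes n}\mapsto G^{(n)}(\ph)$). Because the exponent is linear in $w$, the resulting series is just the Taylor expansion of $G$ about $\ph$ in a single fixed direction, namely $R_{\TT,\la V}^{-1}(G)(\ph)\big|_{\hbar=0}=G\big(\ph+\la\,(\Delta_{S_0}^{\mathrm A})^{\mathsf T}V^{(1)}(\ph)\big)$. By the kernel symmetry \eqref{AandR} one has $(\Delta_{S_0}^{\mathrm A})^{\mathsf T}=\Delta_{S_0}^{\mathrm R}$, so the shift is $\la\,\Delta_{S_0}^{\mathrm R}V^{(1)}(\ph)$, which by \eqref{inverse Moller} is exactly ${\mathtt r}_{\la V}^{-1}(\ph)-\ph$. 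Hence $R_{\TT,\la V}^{-1}(G)(\ph)\big|_{\hbar=0}=G({\mathtt r}_{\la V}^{-1}\ph)=(r_{\la V}^{-1}G)(\ph)$; since both $R_{\TT,\la V}^{-1}$ and $r_{\la V}^{-1}$ are formal power series in $\la$ with the identity as leading term, inverting gives $R_{\TT,\la V}\big|_{\hbar=0}=r_{\la V}$ on $\Fcal_\reg$.

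Finally I would pass to $R_{H,\la V}$ using the intertwining relation $R_{H,\la V}\circ\TT_H=\TT_H\circ R_{\TT,\la V}$ of Section~\ref{SMoll} together with $\TT_H=\alpha_{\Delta_{S_0}^{\mathrm F}}=\id+\Ocal(\hbar)$, so that $R_{H,\la V}\big|_{\hbar=0}=R_{\TT,\la V}\big|_{\hbar=0}=r_{\la V}$, which is the displayed equality. The abstract form $\Pcal\circ\Rcal_{\la\Qcal_\TT(V)}=r_{\la V}\circ\Pcal$ is then the quantization-map-independent restatement: $\Qcal_H$ identifies $\Rcal_{\la\Qcal_\TT(V)}$ with $R_{H,\la V}$ and $\Pcal$ (which satisfies $\Pcal\circ\Qcal_H=\id$) with setting $\hbar=0$.

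The hard part will be the $\triangleright$-bookkeeping in the two middle steps: one must check that an exponential generating function whose exponent is linear in $w$ genuinely acts on $G$ as a translation of the configuration argument, and --- the point most prone to orientation errors --- that the propagator entering that translation is the retarded one $\Delta_{S_0}^{\mathrm R}$ (obtained from $\Delta_{S_0}^{\mathrm A}$ via the transpose identity \eqref{AandR}) rather than $\Delta_{S_0}^{\mathrm A}$ itself, so that the shift coincides with ${\mathtt r}_{\la V}^{-1}$ and not its time-reverse. The remaining manipulations (the $\hbar$-expansion and the formal-power-series inversion) are routine.
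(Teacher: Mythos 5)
Your proposal is correct and follows essentially the same route as the paper's proof: expand $J$ to zeroth order in $\hbar$ to get $J_0$ with exponent linear in $w$, recognize $(J_0(\ph)\triangleright G)(\ph)$ as the Taylor series of $G$ at the shifted argument $\ph+\la\Delta_{S_0}^{\mathrm R}V^{(1)}(\ph)$ (using $(\Delta_{S_0}^{\mathrm A})^{\mathsf T}=\Delta_{S_0}^{\mathrm R}$), identify the shift with ${\mathtt r}^{-1}_{\la V}$, and transfer to $R_{H,\la V}$ via $\TT_H=\id+\Ocal(\hbar)$. The only cosmetic difference is that the paper closes the loop by rearranging to the Yang--Feldmann equation and invoking Lemma~\ref{YF lemma}, whereas you cite eq.~\eqref{inverse Moller} directly --- the same fact, established inside that lemma's proof.
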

\begin{proof}
Again, it is easiest to prove the equivalent statement for $R_{\TT,\la V}$.

To $0$'th order in $\hbar$, 
\[
\frac{V(\ph-i\hbar \Delta_{S_0}^{\mathrm A}w)-V(\ph)}{-i\hbar} 
\approx \left<V^{(1)}(\ph),\Delta_{S_0}^{\mathrm A}w\right> \,,
\]
so
\be
\label{J0}
J(\ph;w) \approx J_0(\ph;w)\doteq\exp \left(\la \left<V^{(1)}(\ph),\Delta_{S_0}^{\mathrm A}w\right>\right) \,.
\ee
Also note that $\TT_H$ is the identity operator to $0$'th order in $\hbar$, so $R_{H,\la V} \approx R_{\TT,\la V}$.

Denote 
\[
\tilde{r}^{\minus}_{\la V}(G)\doteq  R_{0,\la V}^{-1}(G)\Bigr|_{\hbar=0} = R_{\TT,\la V}^{-1}(G)\Bigr|_{\hbar=0} = \Pcal\circ \Rcal_{\la\Qcal_\TT(V)}\circ\Qcal_\TT(G) \,.
\]
 This gives
\begin{align*}
\tilde{r}^{\minus}_{\la V}(G)(\ph) &= (J_0(\ph)\triangleright G)(\ph) \\
&= \sum_{k=0}^\infty \frac{\la^n}{n!}\left<\left(V^{(1)}(\ph)\right)^n, (\Delta_{S_0}^{\mathrm{A}})^{\otimes n} G^{(n)}(\ph)\right> \\
&= \sum_{k=0}^\infty \frac{1}{n!}\left<\left(\la \Delta_{S_0}^{\mathrm{R}}V^{(1)}(\ph)\right)^n, G^{(n)}(\ph)\right>
\end{align*}
This is just a Taylor series expansion, so
\[
\tilde{r}^{\minus}_{\la V}(G)(\ph) = G\left(\ph+\la\Delta_{S_0}^{\mathrm{R}}V^{(1)}(\ph)\right) \,.
\]
In other words, $\tilde{r}^{\minus}_{\la V}$ is the pullback by the operator that acts on $\Ecal[[\la]]$ as
\be\label{inversetilder}
\tilde{\mathtt r}^{\minus}_{\la V}(\ph)=\ph+\la\Delta_{S_0}^{\mathrm{R}}V^{(1)}(\ph)
\ee
Rearranging this gives an equation satisfied by the inverse map,
\[
\tilde{\mathtt r}_{\la V}(\ph)=\ph-\la\Delta_{S_0}^{\mathrm{R}}V^{(1)}(\tilde{\mathtt r}_{\la V}(\ph))\,,
\]
which is the Yang-Feldman equation, and so by Lemma~\ref{YF lemma}, $\tilde{\mathtt r}_{\la V} = {\mathtt r}_{\la V}$.
\end{proof}
\begin{df}
Let $\starHint$ denote the interacting product on $\Fcal_{\reg}[[\hbar,\la]]$ in the identification given by $\Qcal_H$, i.e.,
\[
F\starHint G \doteq \Qcal_H^{-1}\left(\Qcal_HF\bint\Qcal_HG\right) 
=   R_{H,\la V}^{-1}(R_{H,\la V}F\starH R_{H,\la V} G)\,.
\]
Denote the $\starHint$-commutator by
\[
[F,G]_{\starHint}=F\starHint G -G\starHint F\,.
\]
\end{df}
It is now easy to see that the theory defined by $\starHint$ is indeed a quantization of the classical theory defined by the Poisson bracket $\Pei{\;\cdot\;}{\;\cdot\;}_{S_0+\la V}$ given in Definition \ref{PeierlsBracket}.
\begin{prop}
	Let $F,G,V\in \Fcal_{\reg}$, then
\[
\frac{1}{i\hbar}[F,G]_{\starHint}\Bigr|_{\hbar=0}=\Pei{F}{G}_{S_0+\la V}\,,
\]	 
thus $(\Fcal_\reg[[\hbar,\la]],\starHint)$ is a formal deformation quantization of $(\Fcal_\reg[[\la]],\Pei{\cdot}{\cdot}_{S_0+\la V})$.
\end{prop}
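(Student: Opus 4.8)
The plan is to reduce the claim to three facts already proved: that $\starH$ is a formal deformation quantization of $(\Fcal_\reg[[\la]],\Pei{\cdot}{\cdot}_{S_0})$; that the classical limit of the quantum M{\o}ller operator is the classical one (Proposition~\ref{Classical Moller}); and that the classical M{\o}ller operator intertwines the two Peierls brackets (Proposition~\ref{IntertwiningProp}). First I would unwind the definition: since $F\starHint G = R_{H,\la V}^{-1}\bigl(R_{H,\la V}F\starH R_{H,\la V}G\bigr)$ and $R_{H,\la V}^{-1}$ is $\CC[[\hbar,\la]]$-linear, hence commutes with division by $i\hbar$,
\[
\tfrac1{i\hbar}[F,G]_{\starHint} = R_{H,\la V}^{-1}\Bigl(\tfrac1{i\hbar}\bigl[R_{H,\la V}F,\,R_{H,\la V}G\bigr]_{\starH}\Bigr)\,.
\]

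Next I would use that for any $A,B\in\Fcal_\reg[[\hbar,\la]]$ the $\hbar^0$-component of $\tfrac1{i\hbar}[A,B]_{\starH}$ equals $\Pei{\Pcal A}{\Pcal B}_{S_0}$: this is exactly the statement that $\starH$ is an exponential product with $B_0=m$ and $B_1(A,B)-B_1(B,A)=i\Pei{A}{B}_{S_0}$, so the $\hbar^0$-terms of $A\starH B$ and $B\starH A$ cancel in the commutator while the $\hbar^1$-terms leave $i\Pei{\Pcal A}{\Pcal B}_{S_0}$. Applying this with $A=R_{H,\la V}F$ and $B=R_{H,\la V}G$, and using Proposition~\ref{Classical Moller} — which gives $\Pcal A = r_{\la V}F$, $\Pcal B = r_{\la V}G$ and $R_{H,\la V}^{-1}\big|_{\hbar=0}=r_{\la V}^{-1}$ — yields
\[
\tfrac1{i\hbar}[F,G]_{\starHint}\Big|_{\hbar=0} = r_{\la V}^{-1}\bigl(\Pei{r_{\la V}F}{r_{\la V}G}_{S_0}\bigr)\,.
\]
Proposition~\ref{IntertwiningProp} rewrites the right-hand side as $r_{\la V}^{-1}\bigl(r_{\la V}(\Pei{F}{G}_{S_0+\la V})\bigr)=\Pei{F}{G}_{S_0+\la V}$, which is the asserted identity.

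For the concluding sentence I would add the two routine points that make ``formal deformation quantization'' meaningful here: $\starHint$ is associative, being the transport of the associative product $\starH$ along the bijection $R_{H,\la V}$; and $\starHint$ reduces to the pointwise product modulo $\hbar$, since at $\hbar=0$ we get $F\starHint G\mapsto r_{\la V}^{-1}(r_{\la V}F\cdot r_{\la V}G)=F\cdot G$, because $r_{\la V}$ is pullback by a map $\Ecal\to\Ecal$ and hence a homomorphism of the pointwise product. Together with the commutator identity, this is precisely the claim.

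The only delicate point is the $\hbar$-bookkeeping: $R_{H,\la V}F$ is a genuine power series in $\hbar$, not a classical functional, so one must invoke $\CC[[\hbar,\la]]$-linearity of $R_{H,\la V}^{-1}$ to pull $\tfrac1{i\hbar}$ through, and must use that $R_{H,\la V}$ specializes at $\hbar=0$ to the \emph{invertible} classical operator $r_{\la V}$ (Proposition~\ref{Classical Moller}) in order to set $\hbar=0$ in the outermost $R_{H,\la V}^{-1}$. Beyond this, everything is the general principle that conjugating a deformation quantization by a bijection that intertwines the relevant Poisson structures produces a deformation quantization of the target.
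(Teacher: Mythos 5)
Your argument is correct and follows essentially the same route as the paper: both proofs combine the intertwining of $\starH$ with $\starHint$ by the quantum M{\o}ller operator, Proposition~\ref{Classical Moller} for the classical limit of $R_{H,\la V}$, and Proposition~\ref{IntertwiningProp} for the classical intertwining of Peierls brackets. Your version is slightly more explicit about the $\hbar$-bookkeeping and about checking associativity and the mod-$\hbar$ reduction to the pointwise product, but these are refinements of, not departures from, the paper's argument.
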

\begin{proof}
This is a straightforward consequence of Propositions \ref{IntertwiningProp} and \ref{Classical Moller}.
The quantum M{\o}ller operator intertwines $\starH$ with $\starHint$ (by definition), so it intertwines their commutators. To first order in $\hbar$,
\begin{multline*}
r_{\la V}\left([F,G]_{\starHint}\right) \approx R_{H,\la V} [F,G]_{\starHint}
= [R_{H,\la V}F,R_{H,\la V}G]_{\starH} \\
\approx i\hbar \Pei{r_{\la V}F}{r_{\la V}G}_{S_0} = i\hbar\, r_{\la V} \Pei{F}{G}_{S_0+\la V} \,. 
\end{multline*}
\end{proof}
\begin{rem}
The equivalent, more abstract, statement is (like eq.~\eqref{Poisson property}) that for any $A,B\in\fA_\reg$,
\[
\Pcal\left(\tfrac1{i\hbar}[A,B]_{\bint}\right) = \Pei{\Pcal A}{\Pcal B}_{S_0+\la V} \,.
\]
\end{rem}

Next we show that the quantum M{\o}ller operator can be constructed nonperturbatively, provided the classical M{\o}ller operator is known exactly. To this end, we will extract the ``classical part'' of the  quantum M{\o}ller operator and see what remains. It will then become clear that, at a fixed order in $\hbar$, the remaining ``purely quantum part'' contains only finitely many terms in its coupling constant expansion.
\begin{prop}
\label{quantum correction}
Again, let $r_{\la V}$ be the classical M{\o}ller operator.
Define $\triangleright_r$, $J_1$, and $\Upsilon_{\la V}$ by
\[
K \triangleright_r F \doteq r_{\la V}^{-1}(K\triangleright r_{\la V}F)\,,
\]
\[
J_1(\ph;w) \doteq \exp\left(\la\frac{V(\ph-i\hbar \Delta_{S_0}^{\mathrm A}w)-V(\ph)+i\hbar\left<V^{(1)}(\ph),\Delta_{S_0}^{\mathrm A}w\right>}{-i\hbar}\right)\,,
\]
and
\[
(\Upsilon_{\la V}F)(\ph) \doteq (J_1(\ph) \triangleright_r F)(\ph) \,.
\]
The inverse quantum M{\o}ller operator can be computed as
\be
\label{R factorization}
R_{\TT,\la V}^{-1} = \Upsilon_{\la V}\circ r_{\la V}^{-1} \,.
\ee
\end{prop}
\begin{proof}
The quantum correction $J_1$ is chosen so that $J=J_0J_1$, where $J_0$ is defined in eq.~\eqref{J0}. Proposition~\ref{RfromJ} gives
\begin{align*}
(R_{\TT,\la V}^{-1}F)(\ph) &= (J(\ph)\triangleright F)(\ph) \\
&= [J_0(\ph)\triangleright (J_1(\ph)\triangleright F)](\ph) \,.
\end{align*}
This last expression contains $\ph$ 3 times. Note that the second $\ph$ is just a constant as far as $J_0(\ph)\triangleright$ is concerned; it is not differentiated, and indeed this is also equal to  $ [J_1(\ph)\triangleright (J_0(\ph)\triangleright F)](\ph)$.

Proposition~\ref{Classical Moller} showed that $(r_{\la V}^{-1}G)(\ph) = (J_0(\ph)\triangleright G)(\ph)$. 
Setting $G(\ph)= (J_1(\ph)\triangleright F)(\ph)$ gives
\begin{align*}
(R_{\TT,\la V}^{-1}F)(\ph) &= [r_{\la V}^{-1}(J_1(\ph)\triangleright F)](\ph) \\
&= (J_1(\ph) \triangleright_r r_{\la V}^{-1}F)(\ph)
= (\Upsilon_{\la V}\circ r_{\la V}^{-1}F)(\ph) \,. 
\end{align*}
\end{proof}
\begin{rem}\label{compare:w:beta}
Rearranging \eqref{R factorization} results in $r_{\la V}^{-1}R_{\TT,\la V} = \Upsilon_{\la V}^{-1}$, so $\Upsilon_{\la V}^{-1}$ can be seen an the ``purely quantum'' part of the M{\o}ller operator. Such a map has been introduced in \cite{DHP} for quadratic interactions (where it is called $\beta$ and in our notation we have $\beta\doteq r_{\la V}^{-1}\circ R_{0,\la V}$), so our current discussion is a natural generalization of that result. The only missing step in that comparison is to transform the quantum M{\o}ller map from the $\Qcal_\TT$-identification to the $\Qcal_\Weyl$ or $\Qcal_H$-identification (i.e., to go from $R_{\TT,\la V}$ to $R_{0,\la V}$ or $R_{H,\la V}$). We will come back to this in Section \ref{PPA}.
\end{rem}
\begin{rem}
Note that the fraction in the definition of $J_1$ is of order at least $\hbar$, so that every occurrence of $\la$ is accompanied by a factor of $\hbar$. If $J_1$ is expanded in powers of $\hbar$, then each coefficient contains only finitely many powers of $\la$, so $\la$ no longer needs to be treated as a formal parameter. \emph{This gives a nonperturbative definition of the quantum M{\o}ller operator, provided that the classical M{\o}ller operator exists nonperturbatively.} This existence will be shown in the forthcoming paper \cite{FV}.
\end{rem}

\subsection{The naive product}\label{Naive}
This shows the relationship between the interacting product and the naive product constructed from $\starb$ using the classical M{\o}ller operator. From eq.~\eqref{naive}, the naive product $\startr$ is given by
\be
r_{\la V}^{-1}F  \startr r_{\la V}^{-1}G = r_{\la V}^{-1}(F\starb G) \,.
\ee
The interacting product $\starbint$ in the identification given by $\Qcal_\TT$
is defined by 
\be
\label{startint}
R_{\TT,\la V}^{-1}(F)\starbint R_{\TT,\la V}^{-1}(G) = R_{\TT,\la V}^{-1}(F\starb G) \,.
\ee

Equation \eqref{R factorization} shows that
\[
\Upsilon_{\la V}F \starbint \Upsilon_{\la V}G = \Upsilon_{\la V}(F\startr G) \,.
\]
Note that $\Upsilon_{\la V}$ is the identity map modulo $\hbar$.
This means that  $\Upsilon_{\la V}$, the ``purely quantum'' part of the inverse quantum M{\o}ller operator, is the ``gauge'' equivalence \cite{Kon} relating the naive product $\startr$ to the preferred choice $\starbint$.

Now we are ready to consider the relative merits of the interacting product $\starbint$ and the naive product, $\startr$, defined by eq.~\eqref{naive}. The advantage of $\starbint$ is related to locality and the adiabatic limit. Ultimately the interacting star product will be used to construct local nets of algebras. Consider some bounded causally convex region $\Ocal\subset M$ and the M{\o}ller operator restricted to functionals supported in $\Ocal$. 
Using a local (rather than regular) interaction, it was shown in \cite{BF0,Fredenhagen2015} (in a slightly different setting) that perturbing the interaction by a compactly supported functional, supported outside $\Ocal$ only changes the restriction of the M{\o}ller operator by an inner automorphism. This means that for $F$ and $G$ supported in $\Ocal$, $F\startr G$ (and $F\starHint G$) only depend on the interaction in $\Ocal$. In other words, the net of algebras depends only locally on the interaction. We can therefore first introduce an infrared cutoff for the interaction and then remove it using the algebraic adiabatic limit construction \cite{BF0,r21796}. (Similarly, the retarded and advanced M{\o}ller operators are related by inner automorphisms, so they define exactly the same interacting products.)

The naive products defined from the classical M{\o}ller operator by eq.~\eqref{naive} do not share this property.
To first order in $\la$,
\[
r_{\la V}(F)(\ph) = F(\ph) - \la\left<V^{(1)}(\ph),\Delta^{\mathrm A}_{S_0}F^{(1)}(\ph)\right>+\dots \,.
\]
To first order in $\hbar$,
\[
(F\starb G)(\ph) = F(\ph)G(\ph) -i\hbar\left<F^{(1)}(\ph),\Delta^{\mathrm A}_{S_0}G^{(1)}(\ph)\right>+\dots\,.
\]
This gives a formula for $\startr$ up to first order in $\la$ and in $\hbar$:
\begin{multline*}
    (F\startr G)(\ph) = F(\ph)G(\ph) -i\hbar\left<F^{(1)}(\ph),\Delta^{\mathrm A}_{S_0}G^{(1)}(\ph)\right> \\ + i\la\hbar\left<V^{(2)}(\ph),\Delta^{\mathrm A}_{S_0}F^{(1)}(\ph)\otimes\Delta^{\mathrm A}_{S_0}G^{(1)}(\ph)\right>\\ + i\la\hbar\left<V^{(2)}(\ph),\Delta^{\mathrm R}_{S_0}F^{(1)}(\ph)\otimes\Delta^{\mathrm A}_{S_0}G^{(1)}(\ph)\right> + \dots\,.
\end{multline*}
The third term is problematic, even if $V$ is a local functional. Because of the advanced propagators, this product will change if $V$ is changed anywhere to the past of the support of $F$ and $G$. The same is true of the naive product constructed from $\starH$.


\section{Graphical computations}
To simplify and organize the computations, we will represent our structures in terms of graphs.
\begin{df}
$\Gcal(n)$ is the set of isomorphism classes of directed graphs with $n$ vertices labelled $1,\dots, n$ (and possibly unlabelled vertices with valency $\geq1$). 

Also denote
\[
\Gcal \doteq \bigcup_{n\in\NN} \Gcal(n) \,.
\]
We make $\Gcal$ a category by defining a \emph{morphism} $u:\alpha\to\beta$ to be a function that 
\begin{itemize}
\item
maps vertices of $\alpha$ to vertices of $\beta$ and edges of $\alpha$ to edges or vertices of $\beta$,
\item
 respects sources and targets of edges,
 \item
  maps labelled vertices to labelled vertices, 
  \item
  and preserves the order of labelled vertices.
\end{itemize}

For $\gamma\in\Gcal$: $e(\gamma)$ is the number of edges; $v(\gamma)$ is the number of unlabelled vertices; $\Aut(\gamma)$ is the group of automorphisms.
\end{df}
\begin{rem}
This definition of morphism differs from the standard definition of a map of graphs \cite[Sec.~II.7]{MacLane}. It is chosen to give the concept of extension that will be useful below. The meaning of isomorphism is the same.
\end{rem}
\begin{df}
A graph $\gamma\in\Gcal(n)$ determines an $n$-ary multidifferential operator, $\vec\gamma$, on functionals as follows: 
\begin{itemize}
\item
An edge represents $\Delta_{S_0}^{\mathrm A}(x,y)$ with the direction from $y$ to $x$ --- i.e., such that this is only nonvanishing when the edge points from the future to the past;
\item
if the labelled vertex $j$ has valency $r$, this represents the order $r$ derivative of the $j$'th argument;
\item
likewise, an unlabelled vertex of valency $r$ represents $V^{(r)}$.
\end{itemize}
\end{df}
\begin{df}
In diagrams, $\Delta_{S_0}^{\mathrm A}$ will be denoted by a dashed line with an arrow, so all graphs in $\Gcal$ will be drawn with dashed lines for the edges.
\end{df}

\subsection{Non-interacting product}
\begin{df}
$\Gcal_1(2)\subset\Gcal(2)$ is the subset of graphs with no unlabelled vertices in which all edges go from $2$ to $1$.
\end{df}

In terms of these, the non-interacting product can be expressed as
\be
\label{graph free product}
F\starb G = \sum_{\gamma\in\G_1(2)} \frac{(-i\hbar)^{e(\gamma)}}{\abs{\Aut \gamma}} \vec\gamma(F,G) .
\ee
\begin{exa}
The $\hbar^3$ term is given by the graph
\[
\GraphFive
\]
which is the unique graph in $\Gcal_1(2)$ with 3 edges.  Its automorphism group is $S_3$, which has order $6$ and gives the correct coefficient, $\frac16(-i\hbar)^3$.
\end{exa}

\subsection{Perturbative calculations}
\subsubsection{Inverse M{\o}ller operator} 
\begin{df}
$\Gcal_2(1)\subset\Gcal(1)$ is the set of  graphs  such that every edge goes from $1$ to an unlabelled vertex.
\end{df}
\begin{exa}
\[
\GraphSeven \in \Gcal_2(1)
\]
has automorphism group $S_3 \times (\ZZ_2\ltimes\ZZ_2^2)$.
\end{exa}
\begin{lemma}
\be
\label{graph R}
R_{\TT,\la V}^{-1}(F) = \sum_{\gamma\in\Gcal_2(1)}\frac{(-i\hbar)^{e(\gamma)-v(\gamma)}\la^{v(\gamma)}}{\abs{\Aut(\gamma)}} \vec{\gamma}(F)
\ee
\end{lemma}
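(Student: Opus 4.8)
The plan is to expand the explicit formula for $R_{\TT,\la V}^{-1}$ from Proposition~\ref{RfromJ} and identify the resulting combinatorial sum with the one indexed by $\Gcal_2(1)$. Recall $R_{\TT,\la V}^{-1}(F)(\ph) = (J(\ph)\triangleright F)(\ph)$ with
\[
J(\ph;w) = \exp\left(\la\,\frac{V(\ph-i\hbar \Delta_{S_0}^{\mathrm A}w)-V(\ph)}{-i\hbar}\right) \,.
\]
First I would Taylor-expand $V(\ph-i\hbar\Delta_{S_0}^{\mathrm A}w)$ around $\ph$, so that the exponent becomes $\sum_{r\geq 1}\frac{\la}{-i\hbar}\frac{(-i\hbar)^r}{r!}\langle V^{(r)}(\ph),(\Delta_{S_0}^{\mathrm A}w)^r\rangle$. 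Exponentiating and expanding the outer exponential produces a sum over multisets of ``unlabelled vertices,'' each vertex of valency $r$ carrying a factor $\frac{\la(-i\hbar)^{r-1}}{r!}V^{(r)}(\ph)$ contracted (via the $w$-arguments) against $\Delta_{S_0}^{\mathrm A}$-propagators, and then $\triangleright$ feeds the $w$-slots into derivatives of $F$ at the single labelled vertex $1$. Each propagator contributes $\Delta_{S_0}^{\mathrm A}(x,y)$ oriented from $1$ toward the unlabelled vertex, exactly matching the edge convention and the definition of $\Gcal_2(1)$ (every edge from $1$ to an unlabelled vertex).

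The core of the argument is the standard bookkeeping that converts ``sum over labelled configurations divided by symmetry factorials'' into ``sum over isomorphism classes of graphs divided by $\abs{\Aut(\gamma)}$.'' Concretely: a term with $v$ unlabelled vertices of valencies $r_1,\dots,r_v$ and $e = \frac12\sum r_i$... (here each unlabelled vertex of valency $r_i$ has all $r_i$ edges going to vertex $1$, so actually $e = \sum_i r_i$ if we count... let me be careful) — each unlabelled vertex of valency $r$ absorbs $r$ of the $w$-slots, so the total edge count is $e(\gamma) = \sum_i r_i$ and the power of $\la$ is $v(\gamma)$. The scalar prefactor assembled from the expansions is $\frac{1}{v!}\prod_i \frac{\la(-i\hbar)^{r_i-1}}{r_i!} \cdot (\text{number of ways to attach edge-endpoints at vertex }1)$. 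The derivative $F^{(e)}(\ph)$ at vertex $1$ has $e!$ orderings of its slots, and the orbit-stabilizer theorem says the number of labelled graphs in the isomorphism class of $\gamma$ equals $\frac{v!\,\prod_i r_i!\,\cdot(\text{edge labelings})}{\abs{\Aut\gamma}}$; collecting everything gives precisely $\frac{(-i\hbar)^{e(\gamma)-v(\gamma)}\la^{v(\gamma)}}{\abs{\Aut\gamma}}\vec\gamma(F)$ as claimed. I would present this as: (i) write the explicit analytic series; (ii) reorganize by the combinatorial type (multiset of vertices plus the incidence data with vertex $1$); (iii) invoke orbit-stabilizer to pass to isomorphism classes.

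I expect the main obstacle to be the clean handling of the symmetry factors — in particular verifying that the factors of $v!$, the $r_i!$ from each vertex's Taylor coefficient, and the $e!$ from the labelled vertex's derivative combine so that exactly $\abs{\Aut(\gamma)}$ survives in the denominator. The subtlety is that $\Aut(\gamma)$ for graphs in $\Gcal_2(1)$ decomposes as a wreath-type product: permutations of the parallel edges between vertex $1$ and a given unlabelled vertex, times permutations among unlabelled vertices of equal valency. One must check that every such automorphism is genuinely accounted for by the overcounting in the analytic series and that no extra automorphisms (e.g.\ fixing vertex $1$, which is labelled, so none act nontrivially on it) are missed. A careful way to organize this is to first expand $\exp$ as a sum over sequences of vertices (no $\frac1{v!}$, but an ordered sum), then note the ordered sum overcounts each unordered multiset by $v!/\prod(\text{multiplicities of repeated vertex-types})!$; combined with the edge-multiplicity factors this reproduces $\abs{\Aut\gamma}$. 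I would relegate the explicit index-chasing to a compact paragraph rather than a long display, since it is routine once the correspondence is set up, and cite the analogous treatment of \eqref{graph free product} for the non-interacting case as a template.
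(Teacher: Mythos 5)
Your proposal is correct and follows essentially the same route as the paper: Taylor-expand $V$ inside $J(\ph;w)$, expand the exponential, identify each monomial in the $\left<V^{(n)}(\ph),(\Delta^{\mathrm A}_{S_0}w)^{\otimes n}\right>$ with a graph in $\Gcal_2(1)$, and check that the combinatorial prefactors assemble into $1/\abs{\Aut\gamma}$, with $\Aut\gamma$ the product over valencies of $S_k\ltimes(S_n)^k$ exactly as you describe. The paper shortcuts your orbit--stabilizer bookkeeping by first writing $\exp(\sum_n\cdots)=\prod_n\exp(\cdots)$ and expanding each factor separately, so the coefficient $\frac{1}{k!(n!)^k}=\frac{1}{\abs{\Aut\gamma}}$ appears directly and no $e!$ from the labelled vertex ever enters (it is absorbed by the symmetric-algebra action $\triangleright$).
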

\begin{proof}
Taking a Taylor expansion of $V$ about $\ph$ in eq.~\eqref{generating function} shows that
\begin{align*}
J(\ph;w) &= \exp\left(\la \sum_{n=1}^\infty \frac{(-i\hbar)^{n-1}}{n!}\left<V^{(n)}(\ph),(\Delta^{\mathrm A}_{S_0}w)^{\otimes n}\right>\right) \\
&= \prod_{n=1}^\infty \exp\left(\frac{\la(-i\hbar)^{n-1}}{n!}\left<V^{(n)}(\ph),(\Delta^{\mathrm A}_{S_0}w)^{\otimes n}\right>\right) \\
&= \prod_{n=1}^\infty \sum_{k=0}^\infty \frac1{k!} \left(\frac{\la(-i\hbar)^{n-1}}{n!}\left<V^{(n)}(\ph),(\Delta^{\mathrm A}_{S_0}w)^{\otimes n}\right>\right)^k
\end{align*}
The term $\left<V^{(n)}(\ph),(\Delta^{\mathrm A}_{S_0}w)^{\otimes n}\right>$ gives the same differential operator as the graph with $n$ edges directed from the vertex $1$ to a single unlabelled vertex. We have here a sum over all possible products of such terms. This corresponds to all possible graphs in $\Gcal_2(1)$. 

The automorphisms of a graph $\gamma\in\Gcal_2(1)$ do not permute the unlabelled vertices with different valencies, thus the automorphism group is a Cartesian product of the group of automorphisms for each of those subgraphs. 

If $\gamma$ has $k$ unlabelled vertices of valency $n$, then an automorphism can permute each of those vertices, and for each such vertex, it can permute the $n$ edges leading to it. That subgroup of automorphisms is thus a semidirect product of $S_k$ and $(S_n)^k$. This has order $k! (n!)^k$. 
\end{proof}

\subsubsection{Composition of operators}
The following is analogous to the concept of an extension of groups.
\begin{df}
For $\alpha\in\Gcal(n)$ and $\gamma\in\Gcal(m)$, an \emph{extension} of $\gamma$ by $\alpha$ at $j\in\gamma$ is a pair of an injective and a surjective morphism
\[
\alpha \stackrel{u}\hookrightarrow \beta \stackrel{v}\twoheadrightarrow \gamma
\]
with $\beta\in\Gcal(n+m-1)$, such that $u(\alpha)=v^{-1}(j)$, and the restriction of $v$ to the compliment of $u(\alpha)$ is injective.

Two extensions are \emph{equivalent} if there exists a commutative diagram between them with the identity on $\alpha$ and $\gamma$, and an automorphism on $\beta$.
\end{df}
\begin{exa}
There is an extension
\[
\GraphEight \quad \hookrightarrow\quad \GraphSixteen \quad\twoheadrightarrow \quad\GraphNine 
\]
which maps the first graph to the bottom edge of the second graph, and then collapses that subgraph to the vertex $1$ of the third graph.
\end{exa}
\begin{df}
The \emph{partial composition} of multilinear maps is denoted by $\circ_j$ and means the composition of one map into the $j$'th argument of another.
\end{df}
\begin{lemma}
\label{Partial Composition}
Let $\alpha\in\Gcal(n)$ and $\gamma\in\Gcal(m)$ and $j=1,\dots,m$. The partial composition  at $j$ is
\[
\vec\gamma\circ_j\vec\alpha = \sum_{\alpha \hookrightarrow \beta \twoheadrightarrow \gamma} \vec\beta
\]
where the sum is over equivalence classes of extensions of $\gamma$ by $\alpha$ at $j$.

Equivalently, 
\[
\frac1{\abs{\Aut\gamma}\abs{\Aut\alpha}} \vec\gamma\circ_j\vec\alpha
= \sum_{\exists\;\alpha \hookrightarrow \beta \twoheadrightarrow \gamma} \frac{d_\beta}{\abs{\Aut\beta}} \vec\beta
\]
where the sum is over the set of $\beta\in\Gcal(n+m-1)$ such that there exists such an extension of $\gamma$ by $\alpha$ at $j$,  and  $d_\beta$ is the number of subgraphs of $\beta$ isomorphic to $\alpha$ such that the quotient is isomorphic to $\gamma$ with the subgraph mapped to $j$.
\end{lemma}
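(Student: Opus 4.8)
The plan is to prove the first (unweighted) identity directly from the definition of the multidifferential operators $\vec\gamma,\vec\alpha$ together with the Leibniz rule, and then to deduce the weighted reformulation by a standard orbit‑counting argument. I would not attempt the two formulas independently: the second is a purely combinatorial consequence of the first, once the dictionary between Leibniz terms and extensions is in place.

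First I would compute $\vec\gamma\circ_j\vec\alpha$ explicitly. By definition this substitutes the functional $\vec\alpha(F_1,\dots,F_n)$ into the $j$-th argument slot of $\vec\gamma$; the labelled vertex $j$ of $\gamma$, of some valency $r$, then demands the $r$-th functional derivative of $\vec\alpha(F_1,\dots,F_n)(\ph)$, whose $r$ legs are contracted (respecting orientation) against the $r$ copies of $\Delta^{\mathrm A}_{S_0}$ on the edges of $\gamma$ incident to $j$. Since $S_0$ is quadratic, the propagators inside $\vec\alpha$ are $\ph$-independent, so $\vec\alpha(F_\bullet)(\ph)$ depends on $\ph$ only through the functions $F_1,\dots,F_n$ at the labelled vertices of $\alpha$ and through the factors $V^{(\cdot)}(\ph)$ at the unlabelled vertices. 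Applying the generalised Leibniz rule therefore expands the $r$-th derivative into a sum over ways of routing each of the $r$ legs to a vertex of $\alpha$, each routing raising that vertex's valency by one. Graphically, each term replaces the vertex $j$ of $\gamma$ by the entire graph $\alpha$, the $r$ half-edges formerly at $j$ being re-attached to the vertices of $\alpha$; the resulting graph $\beta\in\Gcal(n+m-1)$, together with the inclusion $u:\alpha\hookrightarrow\beta$ and the collapse $v:\beta\twoheadrightarrow\gamma$ of $u(\alpha)$ to $j$, is precisely an extension of $\gamma$ by $\alpha$ at $j$, and conversely every extension arises from such a routing. Two routings contribute the same operator $\vec\beta$ exactly when the associated extensions are equivalent, i.e.\ differ by an automorphism of $\beta$. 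Hence $\vec\gamma\circ_j\vec\alpha=\sum_{[\mathrm{ext}]}\vec\beta=\sum_\beta N_\beta\,\vec\beta$, where $N_\beta$ is the number of equivalence classes of extensions with middle term $\beta$. I would take a little care here with edge orientations and with the labelling (the labelled vertices of $\alpha$ become a consecutive block of labelled vertices of $\beta$ and those of $\gamma$ other than $j$ are shifted accordingly, and the only edges of $\beta$ that $v$ sends to a vertex are those inside $u(\alpha)$).

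Then I would pass to the weighted form by counting the set $E_\beta$ of extensions $(u,v)$ with fixed middle term $\beta$ in two ways. On one hand $\Aut\beta$ acts on $E_\beta$ by $g\cdot(u,v)=(g\circ u,\,v\circ g^{-1})$, and this action is free: if $g$ fixes $(u,v)$, then $g\circ u=u$ forces $g$ to be the identity on $u(\alpha)$ (as $u$ is injective), and $v\circ g=v$ together with injectivity of $v$ on the complement of $u(\alpha)$ forces $g$ to be the identity there too, so $g=\mathrm{id}$; hence $|E_\beta|=N_\beta\,\lvert\Aut\beta\rvert$. On the other hand, each of the $d_\beta$ subgraphs $A\subseteq\beta$ with $A\cong\alpha$ and $\beta/A\cong\gamma$ (with $A$ collapsing to $j$) contributes exactly $\lvert\Aut\alpha\rvert$ isomorphisms $u:\alpha\to A$ and $\lvert\Aut\gamma\rvert$ isomorphisms $\beta/A\to\gamma$ carrying $[A]$ to $j$ — here one uses that a graph automorphism fixes every labelled vertex, so $\Aut\gamma$ acts simply transitively on such isomorphisms once one exists — and every extension in $E_\beta$ is obtained exactly once this way, giving $|E_\beta|=d_\beta\,\lvert\Aut\alpha\rvert\,\lvert\Aut\gamma\rvert$. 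Comparing the two counts yields $N_\beta/(\lvert\Aut\gamma\rvert\lvert\Aut\alpha\rvert)=d_\beta/\lvert\Aut\beta\rvert$, and dividing the identity $\vec\gamma\circ_j\vec\alpha=\sum_\beta N_\beta\vec\beta$ by $\lvert\Aut\gamma\rvert\lvert\Aut\alpha\rvert$ gives the second formula.

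The main obstacle is the first step: turning the informal picture ``route the $r$ legs into the vertices of $\alpha$'' into a clean bijection between the monomials produced by the Leibniz expansion and the equivalence classes of extensions, and in particular verifying that two monomials define the same $\vec\beta$ exactly when the corresponding extensions are $\Aut\beta$-equivalent — that is, that the combinatorics of the product rule matches the category-theoretic notion of extension introduced above. Once that dictionary is established, the orbit count of the last paragraph is routine.
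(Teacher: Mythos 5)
Your proposal follows essentially the same route as the paper's proof: the first identity via the Leibniz rule (each way of re-attaching the $r$ edges at $j$ to vertices of $\alpha$ giving an extension), and the weighted form via the double count $\lvert E_\beta\rvert = d_\beta\,\lvert\Aut\alpha\rvert\,\lvert\Aut\gamma\rvert = N_\beta\,\lvert\Aut\beta\rvert$. The only difference is that you make explicit the freeness of the $\Aut\beta$-action on extensions, which the paper's assertion that ``an equivalence class consists of $\abs{\Aut\beta}$ extensions'' tacitly assumes.
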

\begin{proof}
The composition $\gamma\circ_j\alpha$ can be thought of as inserting $\alpha$ in place of $j\in\gamma$. If $r$ is the valency of $j\in\gamma$, then this is taking an order $r$ derivative of $\vec\alpha$. The edges represent $\Delta_{S_0}^{\mathrm A}$, which is constant, so the product rule tells us that this derivative is given by a sum over all possible ways of attaching the $r$ edges to vertices of $\alpha$ (instead of to $j\in\gamma$). Any such attachment gives a graph $\beta$ with a subgraph identified with $\alpha$ and the quotient identified with $\gamma$. This gives precisely the set of equivalence classes of extensions as defined above.

For the second expression, we need to write this as a sum over the graphs $\beta$ that can appear as extensions, but one graph can appear in inequivalent extensions, so we need to understand the number of extensions with a given $\beta$.

An extension $\alpha \hookrightarrow \beta \twoheadrightarrow \gamma$ certainly determines a subgraph of $\beta$ that is isomorphic to $\alpha$. If two extensions determine the same subgraph, then it is easy to construct an automorphism of $\beta$ that gives an equivalence between the extensions. (All of the edges of $\beta$ are either in the subgraph or map to edges of $\gamma$.) So, having the same subgraph is a weaker condition than equivalence of extensions.

By definition, $d_\beta$ is the number of possible images of $\alpha$ in extensions. Any two extensions with the same image are related by an automorphism of $\alpha$ and an automorphism of $\gamma$, therefore the number of extensions is
\[
d_\beta \abs{\Aut\alpha}\abs{\Aut\gamma} \,.
\]
An equivalence of extensions is always given by an automorphism of $\beta$, so an equivalence class consists of $\abs{\Aut\beta}$ extensions, therefore the number of equivalence classes of extensions is
\[
\frac{d_\beta \abs{\Aut\alpha}\abs{\Aut\gamma}}{\abs{\Aut\beta}} \,. \qedhere
\]
\end{proof}
\begin{exa}
\[
\GraphEight \circ_2 \GraphThirteen = \GraphFourteen + 2\,\GraphFifteen
\]
because the last graph occurs in two inequivalent extensions.
\end{exa}
\begin{exa}
\[
\GraphSixteen \circ_1 \GraphOneOpposite = \GraphSeventeen + \GraphSquare+\GraphEighteenOpposite+\GraphSquareRotated
\]
Each of these graphs has trivial automorphism group, except for the last, for which it has order $2$. However, there are also $2$ ways of mapping $\GraphOneOpposite$ into it to give the quotient $\GraphSixteen$. This is why the coefficient of that term is also $1$.
\end{exa}

\subsubsection{Interacting product}
\begin{df}
$\Gcal_2(2)\subset\Gcal(2)$ is the subset of  graphs such that every edge either goes from a labelled vertex to an unlabelled vertex or from $2$ to $1$.
\end{df}
\begin{lemma}
\label{G2Lemma}
\be
\label{G2 formula}
R_{\TT,\la V}^{-1}(F\starb G) = \sum_{\gamma\in\Gcal_2(2)}\frac{(-i\hbar)^{e(\gamma)-v(\gamma)}\la^{v(\gamma)}}{\abs{\Aut(\gamma)}} \vec{\gamma}(F,G)
\ee
\end{lemma}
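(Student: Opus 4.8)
The plan is to read off $R_{\TT,\la V}^{-1}(F\starb G)$ as a graphical partial composition and then reduce it with Lemma~\ref{Partial Composition}. First I would observe that the binary operator $F,G\mapsto R_{\TT,\la V}^{-1}(F\starb G)$ is obtained by substituting the product map $F\otimes G\mapsto F\starb G$ into the single argument of the unary operator $R_{\TT,\la V}^{-1}$; because every edge of every graph in $\Gcal$ carries the $\ph$-independent kernel $\Delta_{S_0}^{\mathrm A}$, the Leibniz rule makes this substitution agree with the combinatorial partial composition $\vec\gamma\circ_1\vec\alpha$, exactly as in the proof of Lemma~\ref{Partial Composition}. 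Combining \eqref{graph R} and \eqref{graph free product} I would then write
\[
R_{\TT,\la V}^{-1}(F\starb G) = \sum_{\gamma\in\Gcal_2(1)}\;\sum_{\alpha\in\Gcal_1(2)}\frac{(-i\hbar)^{e(\gamma)-v(\gamma)}\la^{v(\gamma)}}{\abs{\Aut\gamma}}\,\frac{(-i\hbar)^{e(\alpha)}}{\abs{\Aut\alpha}}\;\vec\gamma\circ_1\vec\alpha \,,
\]
and apply the second identity of Lemma~\ref{Partial Composition} to replace $\frac{1}{\abs{\Aut\gamma}\abs{\Aut\alpha}}\,\vec\gamma\circ_1\vec\alpha$ by $\sum_\beta \frac{d_\beta}{\abs{\Aut\beta}}\vec\beta$, the sum running over graphs $\beta$ that admit an extension of $\gamma$ by $\alpha$ at the labelled vertex $1$.

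Next I would identify which $\beta$ occur, and with what multiplicity. Collapsing a subgraph on the two labelled vertices $\{1,2\}$ all of whose edges run from $2$ to $1$ (this is the shape contributed by $\alpha\in\Gcal_1(2)$) down to the single labelled vertex of a graph $\gamma\in\Gcal_2(1)$ produces a graph in which every edge either runs from a labelled vertex to an unlabelled vertex or from $2$ to $1$, i.e.\ an element of $\Gcal_2(2)$; conversely, every element of $\Gcal_2(2)$ arises this way. The point to nail down is uniqueness: in a given $\beta\in\Gcal_2(2)$ the embedded copy of $\alpha$ must consist of \emph{all} the $2\to1$ edges, because omitting one would create a loop at the vertex $1$ of the quotient, which is impossible in $\Gcal_2(1)$. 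Hence $\beta$ determines the extension, and the pair $(\gamma,\alpha)$, uniquely, so $d_\beta=1$ for the unique contributing pair.

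Finally I would match exponents: in an extension at $1$ the unlabelled vertices of $\beta$ biject with those of $\gamma$ (the subgraph $\alpha$ has none), so $v(\beta)=v(\gamma)$, and the edges of $\beta$ split into the $e(\alpha)$ edges of the embedded $\alpha$ and a set mapping bijectively onto the edges of $\gamma$, so $e(\beta)=e(\alpha)+e(\gamma)$. Thus $(-i\hbar)^{e(\gamma)-v(\gamma)+e(\alpha)}\la^{v(\gamma)}=(-i\hbar)^{e(\beta)-v(\beta)}\la^{v(\beta)}$, and collecting the surviving terms over $\beta\in\Gcal_2(2)$ gives exactly \eqref{G2 formula}. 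The main obstacle is the middle step --- the bijection between extensions and $\Gcal_2(2)$ together with the multiplicity count $d_\beta=1$ --- since once those are settled, all the automorphism factors are handled automatically by Lemma~\ref{Partial Composition}, and the exponent comparison is routine.
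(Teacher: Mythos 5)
Your proof is correct and follows essentially the same route as the paper, whose entire argument is the one sentence ``insert eq.~\eqref{graph free product} into eq.~\eqref{graph R} and use the product rule''; you have simply made that one-liner rigorous by invoking Lemma~\ref{Partial Composition} and verifying the bijection with $\Gcal_2(2)$, the multiplicity $d_\beta=1$, and the exponent bookkeeping $e(\beta)=e(\alpha)+e(\gamma)$, $v(\beta)=v(\gamma)$. All of those checks are sound (in particular the observation that the embedded copy of $\alpha$ must consist of \emph{all} the $2\to1$ edges, since a loop at the labelled vertex is forbidden in $\Gcal_2(1)$).
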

\begin{proof}
This follows by inserting eq.~\eqref{graph free product} into eq.~\eqref{graph R} and using the product rule.
\end{proof}

\begin{df}
$\Gcal_3(n)\subset\Gcal(n)$ is the set of  graphs such that:
\begin{itemize}
\item
Every unlabelled vertex has at least one ingoing edge and one outgoing edge;
\item
there are no directed cycles;
\item
for $1\leq j<k\leq n$, there does not exist any directed path from $j$ to $k$.
\end{itemize}
\end{df}
In particular, this implies that $1$ is a sink (has only ingoing edges) and $n$ is a source (has only outgoing edges).

We are now ready to write down the graphical expansion of the interacting star product. Note that the graphs we are using are constructed from the free propagator $\Delta^{\rm A}_{S_0}$ and from the derivatives of the interaction term $V$. Later on, we will re-express things in terms of the propagator $\Delta^{\rm A}_{S}$ of the interacting theory and the derivatives of the full action $S$.
\begin{thm}
\label{Interacting product}
\be
\label{graph star V}
F\starbint G = \sum_{\gamma\in\Gcal_3(2)}\frac{(-i\hbar)^{e(\gamma)-v(\gamma)}(-\la)^{v(\gamma)}}{\abs{\Aut(\gamma)}} \vec{\gamma}(F,G)
\ee
\end{thm}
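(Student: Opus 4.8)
The plan is to unfold the definition and reduce everything to the graphical formulae already proved. By eq.~\eqref{startint}, for arbitrary $F,G\in\Fcal_\reg[[\hbar,\la]]$ one has
\[
F\starbint G = R_{\TT,\la V}^{-1}\big(R_{\TT,\la V}F\starb R_{\TT,\la V}G\big),
\]
and substituting $R_{\TT,\la V}^{-1}(H)=e^{-i\la V/\hbar}\cdot\big(e^{i\la V/\hbar}\starb H\big)$ and $R_{\TT,\la V}(H)=\big(e^{i\la V/\hbar}\big)^{\starb-1}\starb\big(e^{i\la V/\hbar}\cdot H\big)$ and using associativity of $\starb$ collapses this to
\[
F\starbint G = e^{-i\la V/\hbar}\cdot\Big(\big(e^{i\la V/\hbar}F\big)\starb\big(e^{i\la V/\hbar}\big)^{\starb-1}\starb\big(e^{i\la V/\hbar}G\big)\Big).
\]
At $\la=0$ both sides reduce to $F\starb G$ and only the $v(\gamma)=0$ graphs survive in \eqref{graph star V}, so the content is in the $\la$-dependence. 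The task is then to turn each of the three building blocks $R_{\TT,\la V}^{-1}$, $\starb$, $R_{\TT,\la V}$ into graphs and splice them with Lemma~\ref{Partial Composition}.

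First I would record the graph expansion of the forward operator. Since $R_{\TT,\la V}$ is the formal $\la$-adic inverse of $R_{\TT,\la V}^{-1}$, inverting the generating function $J$ of Proposition~\ref{RfromJ} vertex by vertex --- equivalently, iterating the Yang--Feldmann recursion \eqref{YF equation} through eq.~\eqref{R factorization} --- shows that $R_{\TT,\la V}(F)$ is a sum over directed forests of interaction vertices rooted at the single labelled vertex, with weight $(-i\hbar)^{e-v}(-\la)^{v}/\abs{\Aut}$; this carries the sign flip $(-\la)^v$ relative to the $+\la^v$ of \eqref{graph R}. Plugging this in together with eq.~\eqref{graph free product} for $\starb$ and Lemma~\ref{G2Lemma} (or directly \eqref{graph R}) for the outer $R_{\TT,\la V}^{-1}$, and expanding all products by Lemma~\ref{Partial Composition}, produces a sum over large directed acyclic graphs: a forest grown from vertex $1$, a forest grown from vertex $2$, some edges running from the $2$-side into the $1$-side, and further ``stars'' of interaction vertices hung off of every vertex by the outer $R_{\TT,\la V}^{-1}$.

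The heart of the proof --- and the main obstacle --- is to show that this sum telescopes onto $\Gcal_3(2)$. The mechanism is the same identity that gives $R_{\TT,\la V}^{-1}\circ R_{\TT,\la V}=\id$: the ``$+\la$'' stars supplied by the outer $R_{\TT,\la V}^{-1}$ cancel, order by order, against the ``$-\la$'' branches of the two forests, the only survivors being configurations in which every interaction vertex keeps at least one incoming and at least one outgoing line and no directed path runs from $1$ to $2$ --- precisely $\Gcal_3(2)$, with residual sign $(-\la)^{v(\gamma)}$. Already at order $\la^1$ this is transparent: writing $R_{\TT,\la V}^{-1}=\id+\la A+O(\la^2)$, hence $R_{\TT,\la V}=\id-\la A+O(\la^2)$, the degree-one term of $F\starbint G$ is $-\la\big(A(F)\starb G+F\starb A(G)-A(F\starb G)\big)$; the summand $A(F)\starb G$ annihilates the part of $A(F\starb G)$ whose interaction vertex still touches vertex $1$, the leftover of $A(F\starb G)$ --- vertex touching only vertex $2$ --- annihilates the part of $F\starb A(G)$ whose vertex has no outgoing line, and what remains is exactly the $v(\gamma)=1$ part of \eqref{graph star V}. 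To finish I would check that the coefficient of each surviving $\gamma\in\Gcal_3(2)$ is $(-i\hbar)^{e(\gamma)-v(\gamma)}(-\la)^{v(\gamma)}/\abs{\Aut\gamma}$; the delicate point is the bookkeeping of the multiplicities $d_\beta$ against the automorphism factors $\abs{\Aut\beta}$ in the second form of Lemma~\ref{Partial Composition}, which I would organise by grouping the composite graphs according to the reduced DAG that remains after the cancelling pairs are contracted. (One could instead verify that the right-hand side of \eqref{graph star V} satisfies the defining identity \eqref{startint}, but the same cancellation has to be carried out there.)
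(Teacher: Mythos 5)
Your overall strategy (expand all three operators into graphs and let signs cancel) is in the right spirit, but as written the proof has two genuine gaps, and both sit exactly where the difficulty of the theorem lives. First, your route through $F\starbint G=R_{\TT,\la V}^{-1}\bigl(R_{\TT,\la V}F\starb R_{\TT,\la V}G\bigr)$ requires a graphical expansion of the \emph{forward} operator $R_{\TT,\la V}$ (equivalently of $(e^{i\la V/\hbar})^{\starb-1}$ in your collapsed formula). You assert that this is a sum over ``directed forests of interaction vertices rooted at the labelled vertex'' with weight $(-i\hbar)^{e-v}(-\la)^{v}/\abs{\Aut}$. That is not correct: the formal inverse of the sum over stars in eq.~\eqref{graph R} is a sum over a much larger class of directed acyclic graphs (unlabelled vertices with several incoming edges, chains, diamonds --- the class called $\Gcal_{10}(1)$ later in the paper), and establishing that expansion with the right signs and automorphism factors is itself a nontrivial lemma which you have not proved. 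Second, the all-orders cancellation is verified only at order $\la^{1}$; for the general case you appeal to a ``tedious search'' organised by contracting cancelling pairs, but no mechanism is given for why the three-fold splice of two forward operators and one inverse telescopes onto exactly $\Gcal_3(2)$ --- that claim \emph{is} the theorem.

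The paper avoids both problems by running the argument in the direction you mention in your closing parenthesis and dismiss as equivalent; it is not equivalent in difficulty. Writing $\star_{\sst ?}$ for the right-hand side of \eqref{graph star V}, one verifies the defining identity in the form $R_{\TT,\la V}^{-1}(F)\star_{\sst ?}R_{\TT,\la V}^{-1}(G)=R_{\TT,\la V}^{-1}(F\starb G)$, which uses only the \emph{known} expansion \eqref{graph R} of $R_{\TT,\la V}^{-1}$ (sums over stars, $\Gcal_2(1)$) and never the forward operator. Both sides are then sums over $\Gcal_4(2)$ and $\Gcal_2(2)$ respectively, and the cancellation becomes elementary: for a fixed composite graph $\gamma$ the subgraph $\alpha$ attached at $1$ is \emph{uniquely determined} (it is the set of edges outgoing from $1$), so no multiplicity bookkeeping arises there, while the admissible subgraphs $\beta$ at $2$ differ by independent binary choices of ``ambiguous'' edges, giving $\sum_\beta(-1)^{v(\beta)}=0$ unless no such edge exists --- which forces $\gamma\in\Gcal_2(2)$ with precisely the coefficient of Lemma~\ref{G2Lemma}. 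If you wish to keep your direction of argument you must first prove the graph expansion of $R_{\TT,\la V}$ and then exhibit an explicit involution or parity argument on the composite graphs; neither step is present in the proposal.
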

\begin{rem}
The right hand side is the same as in eq.~\eqref{G2 formula}, except that  $\Gcal_3$ has replaced $\Gcal_2$ and $\la$ has become $-\la$.
\end{rem}
\begin{df}
$\Gcal_4(2)\subset\Gcal(2)$ is the set of graphs obtained by extending graphs in $\Gcal_3(2)$ by graphs in $\Gcal_2(1)$ at $1$ and $2$.
\end{df}
\begin{proof}
For this proof, let $F \star_{\sst ?} G$ denote the right hand side of eq.~\eqref{graph star V}, so that we need to prove $\star_{\sst ?}=\starbint$. That is, we need to prove that $R_{\TT,\la V}^{-1}(F\starb G) = R_{\TT,\la V}^{-1}(F)\star_{\sst ?} R_{\TT,\la V}^{-1}(G)$, and eq.~\eqref{G2 formula} has already computed the left side.

By Lemma \ref{Partial Composition} and the definition of $\Gcal_4$,  $R_{\TT,\la V}^{-1}(F)\star_{\sst ?} R_{\TT,\la V}^{-1}(G)$ can be computed as some sum over $\Gcal_4(2)$. We will show that cancellation reduces this to the sum over $\Gcal_2(2)$ in eq.~\eqref{G2 formula}.

We first need to check that $\Gcal_2(2)\subseteq \Gcal_4(2)$, so suppose that $\gamma\in \Gcal_2(2)$. By definition, all edges of $\gamma$ either go from $1$ or $2$ to an unlabelled vertex or from $2$ to $1$. Let $\alpha\subset\gamma$ be the subgraph of edges from $1$ (to unlabelled vertices) and denote the quotient graph as $\gamma/\alpha$. Let $\beta\subset\gamma$ be the subgraph of edges going from $2$ to (unlabelled) vertices not in $\alpha$. All other edges of $\gamma$ must go from $2$ to vertices of $\alpha$, therefore 
\[
(\gamma/\alpha)/\beta \in \Gcal_1(2)\subset \Gcal_3(2)
\]
and so $\gamma\in\Gcal_4(2)$.

Eqs.~\eqref{graph star V} and \eqref{graph R} give
\[
R_{\TT,\la V}^{-1}(F)\star_{\sst ?} R_{\TT,\la V}^{-1}(G) = \sum_{\alpha,\bet\in\Gcal_2(1)}\sum_{\delta\in\Gcal_3(2)} \frac{(-i\hbar)^{e-v} (-1)^{v(\delta)} \la^{v}}{\abs{\Aut\alpha}\abs{\Aut\beta}\abs{\Aut\delta}} \vec\delta(\vec\alpha F,\vec\beta G) 
\]
where $e=e(\alpha)+e(\beta)+e(\delta)$ and $v=v(\alpha)+v(\beta)+v(\delta)$. Applying Lemma \ref{Partial Composition}, this becomes
\begin{equation*}
R_{\TT,\la V}^{-1}(F)\star_{\sst ?} R_{\TT,\la V}^{-1}(G) = 
 \sum_{\gamma\in\Gcal_4(2)}  \frac{(-i\hbar)^{e(\gamma)-v(\gamma)}  \la^{v(\gamma)}}{\abs{\Aut\gamma}} \vec\gamma(F,G) 
 \sum_{\alpha,\beta}(-1)^{v(\gamma)-v(\alpha)-v(\beta)} ,
\end{equation*}
where the last sum is over $\alpha,\beta\subset\gamma$, such that $1\in\alpha$, $2\in\beta$, $\alpha,\beta\in\Gcal_2(1)$, and $(\gamma/\alpha)/\beta \in \Gcal_3(2)$.

Note that $\alpha\subset\gamma$ is the subgraph of all edges outgoing from $1\in\gamma$, so it is uniquely determined by $\gamma$. On the other hand, although $\beta$ must contain all edges from $2$ to sinks that are not in $\alpha$, it \emph{may} contain any edge that goes from $2$ to any other unlabelled vertex not in $\alpha$ (see Example~\ref{G4 example}). The sum over $\beta$ is over the binary choices of including or not including each of these edges, thus 
\[
\sum_\beta (-1)^{v(\beta)} = 0
\]
\emph{if there are any such edges}. This reduces the expression for $R_{\TT,\la V}^{-1}(F)\star_{\sst ?} R_{\TT,\la V}^{-1}(G)$ to a sum over $\gamma$ without any such edges.

We are therefore interested in graphs $\gamma\in\Gcal_4(2)$ that do not contain any such ambiguous edge. This means that any edge from $2\in\gamma$ must go to a vertex of $\alpha\subset\gamma$ or to a sink. By the definition of $\Gcal_4$, $1\in \gamma/\alpha$ is a sink, so any vertex of $\alpha\subset\gamma$ other than $1$ is a sink. In short, any edge from $2\in\gamma$ must go to $1$ or a sink. 
This implies that all unlabelled vertices are sinks, and so any edge goes from $1$ or $2$ to an unlabelled vertex or from $2$ to $1$. In other words, $\gamma\in\Gcal_2(2)$. In that case, all vertices of $\gamma$ are in $\alpha$ or $\beta$, so $v(\gamma)=v(\alpha)+v(\beta)$. Therefore,
\[
 R_{\TT,\la V}^{-1}(F)\star_{\sst ?} R_{\TT,\la V}^{-1}(G) =
 \sum_{\gamma\in\Gcal_2(2)} \frac{(-i\hbar)^{e(\gamma)-v(\gamma)} \la^{v(\gamma)}}{\abs{\Aut\gamma}} \vec\gamma(F,G) \,.
\]
With Lemma~\ref{G2Lemma}, this shows that $\star_{\sst ?}$ satisfies eq.~\eqref{startint}, which is the defining property of $\starbint$.
\end{proof}

\begin{exa}
\label{G4 example}
Consider the graph
\[
\gamma=\GraphTwenty \in\Gcal_4(2)\,.
\]
The subgraph $\alpha$ consists of the edges labelled (a) here (and the adjacent vertices). The edge (b) must be in the subgraph $\beta$, but the edge (c) may or may not be in $\beta$. The other edges cannot be in $\beta$.  The sum over the 2 possible choices of $\beta$ with or without (c) gives $0$, so that $\gamma$ does not contribute to the formula for $R_{\TT,\la V}^{-1}(F)\starbint R_{\TT,\la V}^{-1}(G)$.
\end{exa}

\subsection{Nonperturbative expression for an interacting product}
In this section we will show that the interacting star product can be written in terms of the full propagator $\Delta^{\rm A}_{S}$ and derivatives of $S$ and, provided that $\Delta^{\rm A}_{S}$ is known exactly, the result is a formal power series in $\hbar$, but the coupling constant $\la$ can be treated as a number.
\begin{df}
$\Gcal_5(n)$ is the set of isomorphism classes of directed graphs with labelled vertices $1,\dots,n$ such that:
\begin{itemize}
\item
Each unlabelled vertex has valency \textbf{at least 3} and is neither a source nor a sink;
\item
there exist no directed cycles;
\item
for $1\leq j < k \leq n$, there does not exist a directed path from $j$ to $k$.
\end{itemize}
Also define
\[
\Gcal_5\doteq\bigcup_{n\in\NN}\Gcal_5(n) \,.
\]
\end{df}
\begin{rem}
	A crucial consequence of the first condition in the definition of $\Gcal_5(n)$ is that derivatives of $V$ appearing in the graphical expansion are at least 3rd derivatives, so one can replace these with derivatives of $S$ ($S_0$ is quadratic).
\end{rem}
\begin{df}
A graph $\gamma\in\Gcal_5(n)$ defines an $n$-ary multidifferential operator, $\acts\gamma$, as follows: 
\begin{itemize}
\item
An edge represents $\Delta^{\mathrm A}_S$.
\item
The labelled vertex $j$ represents a derivative of the $j$'th argument.
\item
An unlabelled vertex represents a variational derivative of $S$.
\end{itemize}
\end{df}
\begin{df}
In our diagrams, $\Delta^{\mathrm A}_S$ will be denoted by a solid line with an arrow. For this reason, graphs in $\Gcal_5$ will be drawn with solid lines for the edges. (This helps to distinguish $\Gcal_5$ from $\Gcal$.)
\end{df}
\begin{rem}
$S''$ is the linearized equation of motion operator, thus $S''\Delta^{\mathrm A}$ is the identity operator. This means that if there were a bivalent vertex in $\gamma$, then $\acts\gamma$ would be the same as if that vertex were removed, i.e., 
\[
\SolidLine = \SolidLineWithVertex \,.
\]
For this reason, $\Gcal_5(n)$ can (and should) be thought of as a quotient of $\Gcal_3(n)$. This leads to the appropriate definition of morphisms.
\end{rem}
\begin{df}
For $\alpha,\beta\in\Gcal_5$, a \emph{morphism} $u:\alpha\to\beta$ is a function that 
\begin{itemize}
\item
maps vertices of $\alpha$ to vertices of $\beta$ and edges of $\alpha$ to \textbf{directed paths} in $\beta$,
\item
respects sources and targets of edges,
\item 
maps labelled vertices to labelled vertices, 
\item
and preserves the order of labelled vertices.
\end{itemize}
\end{df}
\noindent The concept of an extension in $\Gcal_5$ follows from this definition.

\vspace{1ex}
The following theorem is the main result of this section. It delivers an explicit formula for the interacting star product that is nonperturbative in the coupling constant.
\begin{thm}
\label{Reduced star V}
For $S=S_0+\la V$,
\be
\label{reduced star V}
F \starbint G = \sum_{\gamma\in\Gcal_5(2)} \frac{(-1)^{v(\gamma)}(-i\hbar)^{e(\gamma)-v(\gamma)}}{\abs{\Aut\gamma}} \acts\gamma(F,G) \,.
\ee
\end{thm}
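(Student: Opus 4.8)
The plan is to obtain \eqref{reduced star V} from the $\Gcal_3(2)$-expansion of Theorem~\ref{Interacting product} by resumming chains of advanced propagators into the full propagator $\Delta^{\mathrm A}_S$. The mechanism is the one indicated in the Remark above: in a graph $\gamma\in\Gcal_3(2)$ a bivalent unlabelled vertex represents $V^{(2)}$, a maximal directed chain $\Delta^{\mathrm A}_{S_0}(-\la V^{(2)}\Delta^{\mathrm A}_{S_0})^{n}$ of $n$ such vertices is precisely the order-$\la^n$ part of $\Delta^{\mathrm A}_S$ by eq.~\eqref{regular advanced}, and, since $S_0$ is quadratic, an unlabelled vertex of valency $r\geq 3$ satisfies $S^{(r)}=\la V^{(r)}$. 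Accordingly I would introduce the contraction map $c:\Gcal_3(2)\to\Gcal_5(2)$ that deletes every bivalent unlabelled vertex and splices its ingoing and outgoing edge into a single edge. First I would check $c$ is well defined: splicing cannot create a directed cycle (it was already one in $\gamma$) nor a directed path from $j$ to $k$ with $j<k$ (it would lift to one in $\gamma$), it turns ``valency $\geq 2$, not a source or sink'' into ``valency $\geq 3$, not a source or sink'', and it is surjective because any $\Gamma\in\Gcal_5(2)$ already lies in $\Gcal_3(2)$ with $c(\Gamma)=\Gamma$. The fibre $c^{-1}(\Gamma)$ is then parametrised by ``subdivision data'' $f$, i.e.\ a choice, for each edge of $\Gamma$, of how many bivalent vertices to insert; two such choices give isomorphic graphs exactly when they lie in one orbit of $\Aut\Gamma$.

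Next I would match the scalar weights. Inserting $N$ new bivalent vertices (summed over all edges) increases both $v$ and $e$ by $N$, so $e(\gamma)-v(\gamma)=e(\Gamma)-v(\Gamma)$ and the power of $-i\hbar$ is untouched, while $v(\gamma)=v(\Gamma)+N$. The coupling constant in the $\Gcal_3$-weight $(-\la)^{v(\gamma)}$ then splits: the $v(\Gamma)$ genuine vertices carry $(-\la)^{v(\Gamma)}$, which I absorb into the vertices via $S^{(r)}=\la V^{(r)}$ to leave the overall sign $(-1)^{v(\Gamma)}$; the remaining factors $(-\la)^{N}$ are exactly the coefficients that, by eq.~\eqref{regular advanced}, reassemble the inserted chains $\Delta^{\mathrm A}_{S_0}(-\la V^{(2)}\Delta^{\mathrm A}_{S_0})^{n}$, summed over $n\geq 0$, along each edge of $\Gamma$ into $\Delta^{\mathrm A}_S$. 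Thus, at the level of multidifferential operators, summing $\vec\gamma$ over the fibre of $c$ over $\Gamma$ (with the sign $(-1)^{v(\Gamma)}(-i\hbar)^{e(\Gamma)-v(\Gamma)}$ pulled out) should reproduce $\acts\Gamma$.

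The step I expect to be the real obstacle is the automorphism bookkeeping that converts the weighting $1/\abs{\Aut\gamma}$ into $1/\abs{\Aut\Gamma}$. Here one uses that a directed path is rigid, so subdividing introduces no new automorphisms and any automorphism of $\gamma$ restricts to one of $\Gamma=c(\gamma)$ preserving the subdivision data, while conversely any automorphism of $\Gamma$ fixing the subdivision data lifts uniquely. Hence $\Aut\gamma\cong\mathrm{Stab}_{\Aut\Gamma}(f)$, where $f$ is the subdivision data of $\gamma$, and orbit--stabiliser gives, writing $\gamma_f$ for $\Gamma$ subdivided according to $f$,
\[
\sum_{\gamma\in c^{-1}(\Gamma)}\frac{\vec\gamma}{\abs{\Aut\gamma}}
=\sum_{[f]}\frac{\vec{\gamma_f}}{\abs{\mathrm{Stab}_{\Aut\Gamma}(f)}}
=\frac1{\abs{\Aut\Gamma}}\sum_{f}\vec{\gamma_f}
=\frac1{\abs{\Aut\Gamma}}\acts\Gamma ,
\]
the last equality because expanding every edge-propagator of $\acts\Gamma$ independently via eq.~\eqref{regular advanced} is exactly an unrestricted sum over all subdivision data $f$. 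Combining this with the prefactor matching of the previous paragraph and summing over $\Gamma\in\Gcal_5(2)$ turns the right-hand side of eq.~\eqref{graph star V} into the right-hand side of eq.~\eqref{reduced star V}, and by Theorem~\ref{Interacting product} the left-hand side is $F\starbint G$.

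Finally, for the non-perturbative reading of the formula, I would note that if $k=e(\gamma)-v(\gamma)$ is fixed then, because every unlabelled vertex of a graph in $\Gcal_5(2)$ has valency at least $3$, one has $2e(\gamma)\geq 3v(\gamma)$, hence $v(\gamma)\leq 2k$ and $e(\gamma)\leq 3k$; thus only finitely many graphs contribute at each order in $\hbar$, so once $\Delta^{\mathrm A}_S$ is known the right-hand side of \eqref{reduced star V} is a well-defined formal power series in $\hbar$ in which $\la$ enters only numerically, through $\Delta^{\mathrm A}_S$ and the derivatives $S^{(r)}$.
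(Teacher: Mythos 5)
Your proposal is correct and follows essentially the same route as the paper's proof: the dictionary between $\Gcal_5(2)$ and $\Gcal_3(2)$ given by inserting/contracting bivalent $V^{(2)}$-vertices along edges, using eq.~\eqref{regular advanced} to identify chains with $\Delta^{\mathrm A}_S$ and $S^{(r)}=\la V^{(r)}$ at the remaining vertices. You merely run the dictionary in the contraction direction and, usefully, make explicit the automorphism bookkeeping ($\Aut\gamma_f\cong\mathrm{Stab}_{\Aut\Gamma}(f)$ and orbit--stabiliser) that the paper leaves implicit.
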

\begin{proof}
For $\gamma\in\G_5(n)$, the operator $\acts\gamma$ can be expressed as a sum of operators given by graphs in $\Gcal_3(n)$, with the following dictionary:
\begin{itemize}
\item
Because any unlabeled vertex in $\gamma$ has valency at least $3$ (say, $r$), $S^{(r)} = \la V^{(r)}$, so this vertex corresponds to a vertex in $\Gcal_3$ and a factor of $\la$.
\item
By eq.~\eqref{regular advanced}, an edge in $\gamma$ corresponds to a sum over all possible chains of edges and bivalent vertices in $\Gcal_3$, with a factor of $-\la$ for every vertex, i.e.,
\[
\SolidLine = \GraphTen \;-\; \la \GraphEleven \;+\; \la^2 \GraphTwelve \;- \dots \,.
\] 
\end{itemize}

Adding bivalent vertices along edges of graphs in $\Gcal_5(n)$ will give all graphs in $\Gcal_3(n)$.

Applying this dictionary to \eqref{reduced star V} gives \eqref{graph star V}. 
\end{proof}

\begin{rem}
	Note that after re-expressing everything in terms of full propagators, at a fixed order in $\hbar$ there are only finitely many terms in the $\la$-expansion. Hence the result is exact in the coupling constant, provided that $\Delta^{\rm A}_{S_0}$ can be constructed.
\end{rem}
Theorem \ref{Interacting product} shows that eq.~\eqref{reduced star V} gives the interacting product for a free action plus a regular perturbation (up to time ordering). This implies in particular that the product is associative. It is worth understanding why it is associative in greater generality.
\begin{thm}
\label{Associativity}
Suppose that  $S$ is an action and $K(\ph)(x,y)$ is any Green's function for the linearized equation of motion. If $\star$ is defined by the right hand side of  eq.~\eqref{reduced star V}, with $K$ in place of $\Delta^{\mathrm A}_S$, then $\star$ is an associative product on the domain of definition of the associativity condition.
\end{thm}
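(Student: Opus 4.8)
The plan is to reduce associativity of $\star$ to a numerical identity among the coefficients in \eqref{reduced star V} that mentions neither $S$ nor $K$, and then to verify that identity combinatorially. First I would expand both $(F\star G)\star H$ and $F\star(G\star H)$ using the graphical calculus. By the version of Lemma~\ref{Partial Composition} valid in $\Gcal_5$ (the notion of extension there having already been fixed), each bracketing is a sum of partial compositions $\acts\delta\circ_j\acts\alpha$ of graphs $\delta,\alpha\in\Gcal_5(2)$. Carrying out the Leibniz rule on the factors $S^{(r)}$ attached to the unlabelled vertices --- which uses only that $S^{(r+1)}=(S^{(r)})^{(1)}$ --- and erasing any bivalent vertices that appear by means of $S''\circ K=\id$, one rewrites each side as $\sum_{\gamma\in\Gcal_5(3)} c_\gamma\,\acts\gamma(F,G,H)$. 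The essential point is that the coefficients $c^{\mathrm L}_\gamma$ obtained from $(F\star G)\star H$ and $c^{\mathrm R}_\gamma$ obtained from $F\star(G\star H)$ are rational numbers assembled only from signs, powers of $-i\hbar$ and $\la$, orders of automorphism groups, and the multiplicities appearing in Lemma~\ref{Partial Composition}; they depend on the graph $\gamma$ alone, since the only analytic inputs used to extract them are the product rule and $S''\circ K=\id$.

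It then remains to show $c^{\mathrm L}_\gamma=c^{\mathrm R}_\gamma$ for every $\gamma\in\Gcal_5(3)$. I would do this directly, by producing for each fixed $\gamma$ a bijection --- compatible with the automorphism action --- between the ways $\gamma$ arises from a left bracketing (cutting off the subgraph at the vertex $1$) and the ways it arises from a right bracketing (cutting off the subgraph at the vertex $2$), and checking that matched terms carry equal signs and powers of $\hbar$; the sign bookkeeping is under control because $v(\gamma)$ is additive over the pieces of a slicing. This is the symmetric counterpart of the cancellation $\sum_\beta(-1)^{v(\beta)}=0$ that drives the proof of Theorem~\ref{Interacting product}, now applied at both outer arguments at once. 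Alternatively, and more cheaply, the identity can be read off from the case already settled: with $S=S_0+\la V$ ($S_0$ the free action, $\la$ formal, $V\in\Fcal_\reg$) and $K=\Delta^{\mathrm A}_{S_0+\la V}$, Theorems~\ref{Interacting product} and~\ref{Reduced star V} give associativity, so $\sum_\gamma(c^{\mathrm L}_\gamma-c^{\mathrm R}_\gamma)\,\acts\gamma(F,G,H)=0$; since all unlabelled vertices in $\Gcal_5(3)$ have valency $\geq 3$ no further collapse is possible, and for generic $V$ the operators $\{\acts\gamma\}_{\gamma\in\Gcal_5(3)}$ are linearly independent, forcing $c^{\mathrm L}_\gamma=c^{\mathrm R}_\gamma$. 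Once the coefficient identity is known as an identity of numbers it holds for every action $S$ and every Green's function $K$, so the two bracketings of $\star$ agree term by term as formal graph sums, and hence coincide wherever all the operators make sense, i.e.\ on the domain of the associativity condition.

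The step I expect to be the main obstacle is precisely this coefficient identity. In the direct route the care is needed in matching the automorphism-group orders and the multiplicities of Lemma~\ref{Partial Composition} across the two slicings, the nuisance being that an edge of $\gamma$ emanating from $2$ may legitimately terminate either inside the ``$1$-piece'' or at a sink, and the enumeration of these choices must be reconciled with the $1\leftrightarrow 2$ asymmetry built into $\Gcal_5$ (no directed path from $1$ to $2$). In the shortcut route the delicate point is instead the linear independence of the $\acts\gamma$, which has to be justified for a sufficiently generic $V$ (and spacetime region) so that relations coming from the symmetry of the $S^{(n)}$ or from integration by parts cannot conceal a genuine mismatch.
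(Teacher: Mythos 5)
Your first route is the paper's strategy, but the two facts that make it work are exactly the ones you leave as ``the main obstacle,'' and they are worth naming because they dissolve the obstacle rather than requiring a delicate matching. First, the Leibniz rule must also act on the \emph{edges}: differentiating $S''K=\id$ gives $K^{(1)}=-KS^{(3)}K$, so hitting an edge with a derivative inserts a new trivalent vertex carrying a factor $-1$. This is the sole source of the sign $(-1)^{v(\alpha)-v(\beta)+v(\gamma)}$ in the partial composition $\acts\gamma\circ_1\acts\alpha=\sum_{\alpha\hookrightarrow\beta\twoheadrightarrow\gamma}(-1)^{v(\alpha)-v(\beta)+v(\gamma)}\acts\beta$; your statement that one only needs $S^{(r+1)}=(S^{(r)})^{(1)}$ at the vertices misses it, and without it the claimed additivity of $v$ over the slicing is not enough to control signs. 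Second, and decisively: the axioms of $\Gcal_5$ (no directed cycles, no directed path from $j$ to $k$ for $j<k$) generate a partial order $\preceq$ on the vertices of any $\beta\in\Gcal_5(3)$ with $1\succ2\succ3$, and the subgraph that collapses onto the vertex $1$ in a left-bracketing extension is forced to be the full subgraph on $\{v:v\succeq2\}$. It is therefore intrinsic to $\beta$, so $d_\beta=1$ in the sense of Lemma~\ref{Partial Composition}: each $\beta$ arises from exactly one equivalence class count $\abs{\Aut\alpha}\abs{\Aut\gamma}/\abs{\Aut\beta}$ of extensions, and its coefficient is immediately $(-1)^{v(\beta)}(-i\hbar)^{e(\beta)-v(\beta)}/\abs{\Aut\beta}$. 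There is no bijection between multiple slicings to construct and no sign cancellation to arrange; the analogy you draw with $\sum_\beta(-1)^{v(\beta)}=0$ from Theorem~\ref{Interacting product} points in the wrong direction here. The right bracketing is handled identically using $\{v:v\preceq2\}$, and the two closed formulas visibly coincide.

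Your second route (transfer the coefficient identity from the already-proven case $S=S_0+\la V$, $K=\Delta^{\mathrm A}_{S_0+\la V}$ via linear independence of the $\acts\gamma$) is a genuinely different and in principle legitimate argument, since the coefficients really are universal once one grants the two facts above. But as written it trades one unproved combinatorial statement for an unproved analytic one: the linear independence of $\{\acts\gamma\}_{\gamma\in\Gcal_5(3)}$ over a suitable class of $V$, with the extra complication that in the reference case $\la$ is formal and the operators are $\la$-power series, so independence must be extracted order by order. Until either that independence or the $d_\beta=1$ argument is supplied, the proof is not complete.
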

\begin{proof}
As usual, the dependence on $\ph$ will not be written explicitly. 

The defining property of a Green's function is $S''K=\id_{\Ecal_c}$. Differentiating this gives
\[
0 = S^{(3)}K + S''K^{(1)} \,,
\]
and multiplying on the left by $K$ gives
\[
0 = K S^{(3)}K + K S''K^{(1)} \,.
\]
In this equation, $K S''$ is acting on the image of $K$, where it acts as the identity, so 
\[
K^{(1)} = - K S^{(3)} K \,. \]

We would first like to compute $(F_1\star F_2)\star F_3$.

To compute this, first observe that for $\alpha,\gamma\in\Gcal_5(2)$, 
\[
\acts\gamma\circ_1\acts\alpha = \sum_{\alpha\hookrightarrow\beta\twoheadrightarrow\gamma}  (-1)^{v(\alpha)-v(\beta)+v(\gamma)} \acts\beta
\]
where the sum is over equivalence classes of extensions at $1$. 
 These are precisely the ways of breaking up $1\in\gamma$ and attaching the edges to $\alpha$ (possibly by adding new vertices) to form a graph $\beta$.
 
The axioms of $\Gcal_5$ imply that there exists a partial order, $\preceq$, on the vertices of $\beta\in\Gcal_5(3)$ that is generated by the edges and $1\succ2\succ3$.

This preimage of $1\in\gamma$ is completely determined by the structure of $\beta$ alone. It is the complete subgraph whose vertices satisfy $\succeq 2$. For this reason, $d_\beta=1$ in the sense of Lemma~\ref{Partial Composition}. The number of equivalence classes of extensions is thus
\[
\frac{\abs{\Aut\alpha}\abs{\Aut\gamma}}{\abs{\Aut\beta}} \,.
\]

This uniqueness shows that any $\beta$ can only appear in one term of the expansion of $(F_1\star F_2)\star F_3$. Conversely, any $\beta\in\Gcal_5(3)$ does occur in this expansion.

Together, this shows that 
\be
(F_1\star F_2)\star F_3 = \sum_{\beta\in\Gcal_5(3)} \frac{(-1)^{v(\beta)}(-i\hbar)^{e(\beta)-v(\beta)}}{\abs{\Aut\beta}} \acts\beta(F_1,F_2,F_3) \,.
\ee
An essentially identical calculation (using the subgraph determined by $\preceq2$) shows that $F_1\star(F_2\star F_3)$ is given by the same formula, thus
\[
(F_1\star F_2)\star F_3= F_1\star(F_2\star F_3) \,,
\]
provided that both sides are defined.
\end{proof}

\begin{exa}
Consider the graph 
\[
\GraphTwentyOne \in \Gcal_5(3) \,.
\]
This occurs in a unique extension at $1$,
\[
\SolidLineOneTwo \quad\hookrightarrow\quad 
\GraphTwentyOne
\quad\twoheadrightarrow\quad \SolidLineLoop
\]
corresponding to a term in $(F_1\star F_2)\star F_3$, and in a unique extension at $2$
\[
\SolidLineOneTwo \quad\hookrightarrow\quad 
\GraphTwentyOne
 \quad\twoheadrightarrow\quad \GraphTwoSolid
\]
corresponding to a term in $F_1\star(F_2\star F_3)$.
\end{exa}

\begin{rem}
We have not proven that eq.~\eqref{reduced star V} gives the correct interacting product for a local action. However, Theorem~\ref{Associativity} makes this a plausible conjecture.
\end{rem}

\subsubsection{Low order terms}
Explicitly, the product $\starbint$ is given up to order $\hbar^3$ as
\[
m_{\starbint} = m + \hbar B^{\starbint}_1 + \hbar^2 B^{\starbint}_2 +\hbar^3 B^{\starbint}_3 + \dots
\]
where 
\be
B^{\starbint}_1 = -i\SolidLineOneTwo\,,
\ee
\be
B^{\starbint}_2 = \frac{-1}2 \GraphTwoSolid + \frac12  \SolidLineLoop+ \frac12 \SolidLineLoopReverse - \frac12 \SolidLineLoopDouble \,,
\ee
and
\begin{multline}
B^{\starbint}_3 = \frac{i}6 \GraphTwoSolidTripple
- \frac{i}2 \GraphTwentyTwo
- \frac{i}2 \GraphTwentyTwoOpposite
+ \frac{i}2 \GraphTwentyTwoMiddle\\
 +i \GraphTwentyTwoWithLine
- \frac{i}4 \GraphTwoSolidDouble
+ \frac{i}4 \SolidLineLoopTwo
- \frac{i}6 \SolidLineLoopMiddle\\
 + \frac{i}2 \GraphTwentyTwoLine
+ \frac{i}2 \GraphTwentyTwoLineOpposite
- \frac{i}2 \GraphTwentyTwoLineMiddle
-i \GraphTwentyThree \\
+ \frac{i}4 \GraphTwoSolidDoubleLine
- \frac{i}4 \SolidLineLoopTwoLine
- \frac{i}6 \SolidLineLoopMiddleLeft\\ 
+ \frac{i}2 \GraphTwentyTwoLineLeft
+ \frac{i}2 \GraphTwentyTwoOppositeLeft
- \frac{i}2 \GraphTwentyTwoLineMiddleLeft
-i \GraphTwentyThreeLeft \\
+ \frac{i}4 \GraphTwoSolidDoubleLineLeft
- \frac{i}4 \SolidLineLoopTwoLineLeft
+ \frac{i}6 \SolidLineLoopMiddleSymm\\
 - \frac{i}2 \GraphTwentyTwoLineSymm 
 - \frac{i}2 \GraphTwentyTwoLineOppositeSymm 
 + \frac{i}2 \GraphTwentyTwoLineMiddleSymm \\
 +i \GraphTwentyThreeSymm
- \frac{i}4 \GraphTwoSolidDoubleLineSymm\\
 + \frac{i}4 \SolidLineLoopTwoLineSymm\,.
\end{multline}

\subsubsection{Is there a Kontsevich-type formula?}\label{section:Kon}
In his famous paper on deformation quantization \cite{Kon}, Kontsevich presented a formula for constructing a $\star$-product from an arbitrary Poisson structure on a finite-dimensional vector space. Every term is a polynomial in the Poisson structure and its derivatives. In that construction,  $B_1$  is antisymmetric and proportional to the Poisson structure, so this can be thought of as constructing a $\star$-product from its first order term.

The first order term of $\starbint$ is $\Delta^{\mathrm A}_S$, which is not antisymmetric. This suggests a question. In analogy with Kontsevich's formula, can $\starbint$ be constructed from $\Delta^{\mathrm A}_S$ and its functional derivatives?

To address this, we first need some notation. Kontsevich's formula uses a sum over graphs in which vertices represent the Poisson structure. In his graphs, every unlabelled vertex has 2 outgoing edges. Because $\Delta^{\mathrm A}_S$ has no symmetry, in our generalization, it will be necessary to distinguish these as left and right edges.

\begin{df}
A \emph{K-graph} is a directed graph in which:
\begin{itemize}
\item
There are labelled vertices $1$ and $2$, and possibly unlabelled vertices;
\item
every edge is labelled as ``left'' or ``right'';
\item
every unlabelled vertex has 2 outgoing edges, one left and one right.
\end{itemize}
As with other graphs that we have considered, a K-graph determines a bidifferential operator. The vertices $1$ and $2$ represent the arguments. The unlabelled vertices represent $\Delta^{\mathrm A}_S$. The edges represent derivatives.

In diagrams, these will be drawn with solid arrows on the right edges.
\end{df}

For example,
\[
B^{\starbint}_1  
= -i\GraphTwentyFour\,.
\]
(Equality means equality of operators.)

To see how K-graphs can be expressed in  terms of graphs in which edges represent $\Delta^{\mathrm A}_S$, consider the following examples involving \textit{parts of graphs}:
\begin{align*}
\GraphKOne&=\SolidLineOneTwo\\
\GraphKTwo&=-\GraphThirtyTwo\\
\GraphKThree&=-\GraphThirtyThree+\GraphThirtyFour+\GraphThirtyFourOpposite
\end{align*}
Using the rules above applied to parts of graphs, one can easily treat an arbitrary K-graph.

The second order term of $\starbint$ can indeed be constructed in this way:
\begin{multline*}
B^{\starbint}_2 = \frac{-1}2 \GraphTwentyFive - \frac12 \GraphTwentySix - \frac12 \GraphTwentySixOpposite\\ - \frac12 \GraphTwentySeven \,.
\end{multline*}
In general, the operator given by a K-graph can also be given by a sum of graphs in which edges represent $\Delta^{\mathrm A}_S$ and vertices represent derivatives of $S$, but not \emph{vice versa}. A K-graph in which each unlabelled vertex has at most one ingoing edge will give a single term; otherwise, it will give several terms.


We can come fairly close to expressing $B^{\starbint}_3$ in terms of K-graphs. The only problem is with terms in which the labeled vertices both have valency~1. We can nearly reproduce this part of $-iB^{\starbint}_3$ as
\begin{multline*}
\frac16\GraphTwentyEight+\frac14\GraphTwentyNine\\+\frac16\GraphThirty+\frac16\GraphThirtyOpposite\\
= \frac16\SolidLineLoopMiddleSymm - \frac12 \GraphTwentyTwoLineSymm - \frac12 \GraphTwentyTwoLineOppositeSymm\\
+ \frac23 \GraphTwentyTwoLineMiddleSymm 
+ \frac{11}{12} \GraphTwentyThreeSymm\\
- \frac14 \GraphTwoSolidDoubleLineSymm + \frac14 \SolidLineLoopTwoLineSymm
\end{multline*}

This leaves a discrepancy of 
\be
\label{remainder}
\frac16\GraphTwentyTwoLineMiddleSymm  - \frac1{12}  \GraphTwentyThreeSymm \,.
\ee
Note that these two terms each have 4 unlabelled vertices. A K-graph at order $\hbar^3$ has precisely 3 unlabelled vertices, and when it is translated at least one term has 3 or fewer unlabelled vertices. In order to express \eqref{remainder} as a combination of K-graphs, we must in particular find combinations of K-graphs in which terms with 3 or fewer unlabelled vertices cancel.
A tedious search shows that those combinations cannot reproduce \eqref{remainder}, and thus the third order term of $\starbint$ cannot be constructed from $\Delta^{\mathrm A}_S$ and its derivatives. This means that $\starbint$ is not given by anything analogous to Kontsevich's formula.

\subsubsection{Interacting Wick product}\label{section:Ren}
These calculations have used the identification of $\fA_\reg$ with $\Fcal_\reg[[\hbar]]$ defined by the quantization map $Q_\TT$. In order to compare results and to try to extend this to all of $\fA$, we should use the identification defined by $\Qcal_H$. This gives a product that is related to the abstract interacting product $\bint$ in the same way that the Wick product is related to the abstract non-interacting product $\bullet$.

An expression for the interacting product $\starHint$ in that identification is obtained from $\starbint$  by
\be
\label{starHint}
F \starHint G = \TT_H(\TT_H^{-1}F \starbint \TT_H^{-1}G)
\ee
for $F,G\in \Fcal_\reg[[\hbar,\la]]$.

The next step is to extend the domain of definition of $\starHint$ to local non-linear arguments. 
The potential problem with this is that the expansion of the product $\starbint$ in terms of Feynman diagrams contains loops involving the advanced propagator, and light-cone divergences could be present.
\begin{rem}
Note that the advanced propagator, as a bi-distributional kernel $\Delta^{\mathrm A}_S(x,y)$ has singular support given by the condition $(x-y)^2=0$ (lightcone equation). To build pointwise products of distributions one can use H{\"o}rmander's criterion \cite{Hoer1}, based on the concept of the wavefront set. Roughly speaking, one can multiply distributions if the sum of their wavefront sets (seen as subsets of the cotangent bundle) does not include the zero section. This condition is violated for all point on the whole past lightcone if we try to take the square of $\Delta^{\mathrm A}_S(x,y)$. For details details about the wavefront sets of fundamental solutions, see for example \cite{Rad}.
\end{rem}

\begin{df}
$\Gcal_6(n)$ is the set of isomorphism classes of graphs with directed and undirected edges and labelled vertices $1,\dots,n$ such that:
\begin{itemize}
\item
Each unlabeled vertex has valency at least 3, including at least one ingoing and one outgoing edge;
\item
there exist no directed cycles;
\item
for $1\leq j<k \leq n$, there does not exist a directed path from $j$ to $k$.
\end{itemize}
\end{df}
\begin{df}
A graph $\gamma\in \Gcal_6(n)$ defines an $n$-ary multidifferential operator, $\acts{\gamma}$, as follows:
\begin{itemize}
\item
A directed edge represents $\Delta_S^{\mathrm A}$;
\item
an undirected edge represents $\Delta_{S_0}^{\mathrm F}$;
\item
the vertex $j$ represents a derivative of the $j$'th argument;
\item
an unlabelled vertex represents a derivative of $S$.
\end{itemize}
\end{df}
\begin{rem}
Note that the Feynman propagator $\Delta_{S_0}^{\mathrm F} = \frac{i}2\left(\Delta_{S_0}^{\mathrm R}+\Delta_{S_0}^{\mathrm A}\right) + H$ is defined by the free action $S_0$ and the Hadamard distribution $H$. It is not a natural object from the point of view of the full action $S$. 
%
This suggests that there should be an alternative to $\Qcal_H$ that is more adapted to $S$, but at the moment we don't have a concrete proposal.
\end{rem}
\begin{lemma}
\[
\TT_H = \sum_{\gamma\in \Gcal_6(1)} \frac{\hbar^{e(\gamma)}}{\abs{\Aut\gamma}}\acts\gamma
\]
\end{lemma}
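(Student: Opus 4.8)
The plan is to expand $\TT_H$ directly from its definition and match the result term by term against the graphical sum. Recall that $\TT_H \doteq \alpha_{\Delta_{S_0}^{\mathrm F}}$, so by Definition~\ref{Alpha} we have $\TT_H = e^{\frac{\hbar}{2}\Dcal_{\mathrm F}}$ with $\Dcal_{\mathrm F} = \langle \Delta_{S_0}^{\mathrm F},\frac{\delta^2}{\delta\ph^2}\rangle$. Since $\Delta_{S_0}^{\mathrm F}$ does not depend on $\ph$, iterating $\Dcal_{\mathrm F}$ simply stacks up functional derivatives, so that $\Dcal_{\mathrm F}^k F = \langle (\Delta_{S_0}^{\mathrm F})^{\otimes k}, F^{(2k)}\rangle$, and hence
\[
\TT_H F = \sum_{k=0}^\infty \frac{\hbar^k}{2^k\,k!}\left\langle (\Delta_{S_0}^{\mathrm F})^{\otimes k}, F^{(2k)}\right\rangle \,.
\]

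On the graph side, I would let $\gamma_k\in\Gcal_6(1)$ be the graph consisting of the single labelled vertex $1$ together with $k$ undirected self-loops at it. Then $e(\gamma_k)=k$, and by the dictionary of the preceding definition $\acts{\gamma_k}(F) = \langle (\Delta_{S_0}^{\mathrm F})^{\otimes k}, F^{(2k)}\rangle$. An automorphism of $\gamma_k$ may permute the $k$ loops among themselves and, independently, interchange the two half-edges of each loop, and can do nothing else, so $\abs{\Aut\gamma_k} = 2^k\,k!$. Therefore $\frac{\hbar^{e(\gamma_k)}}{\abs{\Aut\gamma_k}}\acts{\gamma_k}$ is exactly the $k$-th summand above, and the lemma reduces to the claim that the graphs $\gamma_k$, $k\geq 0$, exhaust $\Gcal_6(1)$.

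That identification is the only part that is not purely bookkeeping. I would argue it as follows: for $n=1$ the third condition in the definition of $\Gcal_6$ is vacuous, so consider an arbitrary $\gamma\in\Gcal_6(1)$ and its subgraph of directed edges. By the second condition this subgraph is acyclic, hence, if nonempty, it has a source and a sink. An unlabelled vertex has in-degree and out-degree at least $1$, so it is neither a source nor a sink; the only labelled vertex is $1$, which would then have to be simultaneously the source and the sink, i.e.\ be incident to no directed edge at all. This leaves an acyclic directed graph supported only on unlabelled vertices, which must itself contain an unlabelled source --- a contradiction. So the directed subgraph of $\gamma$ is empty; but then no vertex has any ingoing or outgoing edge, and the first condition forbids unlabelled vertices altogether. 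Hence every edge of $\gamma$ is an undirected loop at $1$, i.e.\ $\gamma=\gamma_k$ with $k=e(\gamma)$. Summing over $k$ then identifies $\TT_H$ with $\sum_{\gamma\in\Gcal_6(1)} \frac{\hbar^{e(\gamma)}}{\abs{\Aut\gamma}}\acts\gamma$.

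There is no substantial obstacle here; the two places that require care are the symmetry count $\abs{\Aut\gamma_k}=2^k\,k!$ --- which must reconcile the $1/k!$ coming from the exponential with the $1/2^k$ coming from the factor $\tfrac\hbar2$ in $\alpha_Y$ --- and the combinatorial characterization of $\Gcal_6(1)$ just given. It is worth noting that this is simply the $n=1$ instance of the same accounting that will yield the $\Gcal_6(2)$ expression for $\starHint$ used in extending the interacting product to local arguments.
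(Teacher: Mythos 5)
Your proof is correct and follows essentially the same route as the paper's: expand $\TT_H=e^{\frac\hbar2\DFp}$, identify $\Gcal_6(1)$ with the bouquets of $k$ undirected loops at the vertex $1$, and match the coefficient $\hbar^k/(2^k k!)$ against $\abs{\Aut\gamma_k}=\abs{S_k\ltimes\ZZ_2^k}=2^k k!$. The only difference is that you supply the (correct) source/sink argument showing $\Gcal_6(1)$ contains no directed edges and hence no unlabelled vertices, which the paper simply asserts.
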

\begin{proof}
$\TT_H \doteq \alpha_{\Delta_{S_0}^{\mathrm F}}=e^{\frac\hbar2 \DFp}$ and $\DFp\doteq \Dcal_{\Delta^{\mathrm F}_{S_0}}$ is the operator given by the graph
\[
\GraphThirtyOne\,.
\]

A graph in $\Gcal_6(1)$ has no directed edges and no unlabelled vertices; it is just a bouquet of undirected loops. Consider the unique graph with $e(\gamma)=m$ loops. Because the functional derivative of $\Delta_{S_0}^{\mathrm F}$ is $0$, this graph gives the operator $\acts\gamma=\DFp^m$. Its automorphism group is $\Aut \gamma = S_m\rtimes\mathbb{Z}_2^m$. So,
\[
\frac{\hbar^{e(\gamma)}}{\abs{\Aut\gamma}}\acts\gamma = \frac{\hbar^m}{2^m m!}\DFp^m \,.
\qedhere
\]
\end{proof}

\begin{df}
$\Gcal_7(n)\subset\Gcal_6(n)$ is the subset of graphs with no loops at labelled vertices (i.e., no edge begins and ends at the same labelled vertex).

$\Gcal_8(n)\subset\Gcal_6(n)$ is the subset of graphs with no loops.
\end{df}
\begin{thm}\label{non:pert:H}
\be
\label{HV product}
F \starHint G = \sum_{\gamma\in \Gcal_7(2)} \frac{(-i)^{v(\gamma)+d(\gamma)}\hbar^{e(\gamma)-v(\gamma)}}{\abs{\Aut\gamma}} \acts{\gamma}(F,G) 
\ee
where $d(\gamma)$ is the number of directed edges. In particular, this is a finite sum at each order in $\hbar$.
\end{thm}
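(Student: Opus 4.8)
The plan is to feed the three available graphical expansions into the identity $F\starHint G=\TT_H\bigl(\TT_H^{-1}F\starbint\TT_H^{-1}G\bigr)$ and then follow the pattern of the proof of Theorem~\ref{Interacting product}. For the outer factor I use the Lemma expressing $\TT_H$ as a sum over $\Gcal_6(1)$ (bouquets of undirected loops, weight $\hbar$ per loop); since $\TT_H=e^{\frac\hbar2\DFp}$, its inverse has the same graphical form with $\hbar\mapsto-\hbar$, i.e. a bouquet of undirected loops carrying an extra sign $-1$ per loop. For the middle factor I use Theorem~\ref{Reduced star V}, which writes $\starbint$ as a sum over $\Gcal_5(2)$ with directed edges $\Delta^{\mathrm A}_S$ and internal $S$-vertices of valency $\ge3$.

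First I would carry out the composition at the level of graphs. Applying $\TT_H^{-1}$ to $F$ and to $G$ attaches undirected-loop bouquets at the labelled vertices $1$ and $2$; substituting the results into the $\Gcal_5(2)$-expansion of $\starbint$ is a partial composition at $1$ and $2$, controlled by the $\Gcal_5$-analogue of Lemma~\ref{Partial Composition}; and the outer $\TT_H$ then adjoins undirected edges --- including self-loops --- joining any two functional-derivative slots of the whole expression. Such a slot sits at vertex $1$, at vertex $2$, at an internal vertex (raising its valency), or on a directed edge, where it acts by $(\Delta^{\mathrm A}_S)^{(1)}=-\Delta^{\mathrm A}_S S^{(3)}\Delta^{\mathrm A}_S$ (as in the proof of Theorem~\ref{Associativity}) and thereby produces a new trivalent $S$-vertex; since $\Delta^{\mathrm F}_{S_0}$ is field independent, undirected edges carry no further derivatives. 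One checks that every graph so produced satisfies the acyclicity and valency conditions of $\Gcal_6(2)$ and that, after the same automorphism bookkeeping as in Lemmas~\ref{Partial Composition} and~\ref{G2Lemma}, its coefficient is the product of the local weights $-i\hbar$ (per directed edge), $\hbar$ (per undirected edge) and $-i\hbar^{-1}$ (per internal vertex), divided by $\abs{\Aut\gamma}$ --- precisely $\frac{(-i)^{v(\gamma)+d(\gamma)}\hbar^{e(\gamma)-v(\gamma)}}{\abs{\Aut\gamma}}$.

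It remains to see why the resulting sum over $\Gcal_6(2)$ collapses to the sum over $\Gcal_7(2)$ in \eqref{HV product}, i.e. why every graph with a loop at a labelled vertex drops out. Fix $\gamma\in\Gcal_6(2)\setminus\Gcal_7(2)$ with $k\ge1$ loops at the labelled vertices. Each such loop can be produced in exactly two ways: by the inner $\TT_H^{-1}$ acting on the argument at that vertex (sign $-1$) or by the outer $\TT_H$ (sign $+1$), and these choices are independent of one another and of the rest of $\gamma$; the automorphism factors conspire, as in the $\sum_\beta(-1)^{v(\beta)}=0$ step of the proof of Theorem~\ref{Interacting product} and in the loop-symmetry count of the Lemma for $\TT_H$, so that the contributions with a given assignment of the $k$ loops to ``inner'' or ``outer'' all have equal magnitude. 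Summing over the $2^k$ assignments gives a factor $(1-1)^k=0$, so $\gamma$ does not contribute; every surviving undirected edge --- a loop at an internal vertex, an edge running to an internal vertex, or an edge between $1$ and $2$ --- can only come from the outer $\TT_H$, hence carries sign $+1$ in agreement with the formula, and remains. Finiteness at each order in $\hbar$ then follows exactly as after Theorem~\ref{Reduced star V}: fixing $e(\gamma)-v(\gamma)$ together with the valency-$\ge3$ condition on internal vertices bounds both $e(\gamma)$ and $v(\gamma)$.

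I expect the main obstacle to be this last cancellation, made precise at the level of automorphism groups: one must show that, for a loop at a labelled vertex, the ``inner'' production (a genuine application of $\TT_H^{-1}$ to $F$ or $G$ before the product) and the ``outer'' production (an application of $\TT_H$ afterwards) generate the same family of graphs with matching symmetry factors, so that the $2^k$ terms cancel in pairs --- the delicate point being the treatment of several parallel loops at the same labelled vertex, and the bookkeeping that cleanly separates loops at labelled vertices (which must cancel) from undirected edges to internal vertices or between $1$ and $2$ (which must survive).
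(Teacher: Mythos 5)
Your proposal is correct and follows essentially the same route as the paper: expand $\TT_H(F\starbint G)$ as a sum over $\Gcal_6(2)$ with the stated weights, and observe that pre-composing with $\TT_H^{-1}$ on each argument cancels exactly the graphs having loops at the labelled vertices, leaving the sum over $\Gcal_7(2)$. The paper's proof is a two-line version of this; you have merely filled in the combinatorial mechanism (the $(1-1)^k$ cancellation between ``inner'' and ``outer'' loops and the creation of trivalent vertices via $(\Delta^{\mathrm A}_S)^{(1)}=-\Delta^{\mathrm A}_S S^{(3)}\Delta^{\mathrm A}_S$) that the paper leaves implicit.
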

\begin{proof}
First, observe that
\[
\TT_H(F\starbint G) = \sum_{\gamma\in \Gcal_6(2)} \frac{(-i)^{v(\gamma)+d(\gamma)}\hbar^{e(\gamma)-v(\gamma)}}{\abs{\Aut\gamma}} \acts{\gamma}(F,G) \,.
\]
Replacing $F$ and $G$ with $\TT_H^{-1}F$ and $\TT_H^{-1}G$ serves to cancel out all terms with loops at $1$ or $2$.
\end{proof}
To extend to local $V$, the usual procedure is to extend  $\DFp$ to a map  $\widetilde{\DFp}$ that coincides with $\DFp$ on regular functionals and vanishes on local ones (see e.g., \cite[Section 6.2.1]{Book}). This implies that $\TT_H$ acts as identity on local functionals.
\begin{prop}
If $V\in\Fcal_\loc$, then the sum in \eqref{non:pert:H} can be taken over $\Gcal_8(2)$.
\end{prop}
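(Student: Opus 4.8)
The plan is to work directly with the graphical expansion \eqref{HV product}. The set $\Gcal_7(2)\setminus\Gcal_8(2)$ consists precisely of the graphs that have at least one loop (self-edge) at an \emph{unlabelled} vertex, so it suffices to show that for $V\in\Fcal_\loc$ every such graph contributes the zero operator; dropping these vanishing terms from \eqref{HV product} then leaves exactly the sum over $\Gcal_8(2)$, with all coefficients $\frac{(-i)^{v(\gamma)+d(\gamma)}\hbar^{e(\gamma)-v(\gamma)}}{\abs{\Aut\gamma}}$ unchanged.

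First I would record the shape of such a loop. Since $\Gcal_6$-graphs contain no directed cycles, a self-edge can never be directed, so it represents $\Delta_{S_0}^{\mathrm F}$, which on local arguments is handled by the extended contraction $\widetilde{\DFp}$ (the ``usual procedure'' recalled just before the statement). An unlabelled vertex of valency $r\geq 3$ represents $S^{(r)}=\la V^{(r)}$, so a loop there amounts to contracting two of the $r$ slots of $V^{(r)}$ against $\widetilde{\Delta_{S_0}^{\mathrm F}}$. Because the renormalized contraction is a second-order differential operator of the stated form --- so that differentiating and then self-contracting agrees with self-contracting and then differentiating --- this contracted object is exactly $(\widetilde{\DFp} V)^{(r-2)}$, evaluated at the relevant configuration.

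Now I would invoke the defining property of the extension: $\widetilde{\DFp}$ agrees with $\DFp$ on regular functionals and \emph{vanishes on local functionals}. Hence $\widetilde{\DFp} V=0$ for $V\in\Fcal_\loc$, and therefore all of its functional derivatives vanish too; the factor produced by any loop at an unlabelled vertex is identically zero, so $\acts\gamma\equiv 0$ for every $\gamma\in\Gcal_7(2)$ possessing such a loop. This establishes the claim. I expect the only point requiring care to be the identification in the previous paragraph --- that self-contracting two legs of $V^{(r)}$ with the renormalized propagator really coincides with $(\widetilde{\DFp} V)^{(r-2)}$, i.e.\ that the renormalization of $\DFp$ respects the Leibniz rule --- since once that is granted the vanishing is immediate from the renormalization condition $\widetilde{\DFp}|_{\Fcal_\loc}=0$. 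It is also worth remarking explicitly that we are only discarding null contributions and not regrouping the surviving ones, so the well-definedness of the truncated sum is not at issue.
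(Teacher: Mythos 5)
Your proof is correct and follows essentially the same route as the paper's: a loop in a $\Gcal_6$-graph must be undirected (a directed self-edge would be a directed cycle), so it represents a self-contraction of $S^{(r)}=\la V^{(r)}$ with $\widetilde{\Delta^{\mathrm F}_{S_0}}$, which vanishes because $\widetilde{\DFp}$ annihilates local functionals. The paper states this in one line; your additional care about the compatibility of the renormalized contraction with functional differentiation is a reasonable point but not a departure from the argument.
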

\begin{proof}
If $V\in\Fcal_\loc$, then $\widetilde{\DFp}(S)=0$, so any loop at an unlabelled vertex gives $0$.
\end{proof}

By direct inspection of the graphs, it is not clear whether the expression above can be renormalized or not. The problem is related to the presence of free Feynman propagators together with interacting advanced propagators. This is potentially an issue, since $\Delta_{S_0}^{\rm F}-i\Delta_{S}^{\rm A}$ doesn't have the right WF set properties (in contrast to $\Delta_{S_0}^{\rm F}-i\Delta_{S_0}^{\rm A}$).

As mentioned before, this is caused by the fact that in constructing the interacting product we left the time-ordered product unchanged, since the time-ordering of $\starHint$ results again in the same commutative product $\dTH$.

Although we began with a perturbative construction using a free action $S_0$, it is only the time-ordered product that remembers $S_0$ and we would like this dependence to be completely removed in the interacting theory. 

We hope that the results of this paper will allow us in the future to find a better version of  the interacting Wick product, while keeping $\starbint$ unchanged.

\subsection{Formulae for the M{\o}ller operators}
In this section we prove some combinatorial formulae for the quantum M{\o}ller operator, which can be used to streamline computations and might be the starting point for investigating renormalization in the future.

In our terms, a \emph{corolla} is a graph $\gamma\in\G(1)$  such that there is a single edge from 1 to each unlabelled vertex. 
\begin{cor}
\[
r_{\la V}^{-1}(F) = \sum_{\gamma\;\mathrm{corolla}} \frac{\la^{v(\gamma)}}{\abs{\Aut(\gamma)}} \vec{\gamma}(F)
\]
\end{cor}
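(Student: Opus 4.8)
The plan is to read off this formula as the classical limit ($\hbar\to0$) of the graphical expansion of the inverse quantum M{\o}ller operator, eq.~\eqref{graph R}. First I would recall that within the proof of Proposition~\ref{Classical Moller} it is shown that $\tilde r^{\minus}_{\la V}(F)=R_{\TT,\la V}^{-1}(F)\big|_{\hbar=0}$ and that $\tilde{\mathtt r}_{\la V}={\mathtt r}_{\la V}$; hence $r_{\la V}^{-1}(F)=R_{\TT,\la V}^{-1}(F)\big|_{\hbar=0}$. So I would start from
\[
R_{\TT,\la V}^{-1}(F) = \sum_{\gamma\in\Gcal_2(1)}\frac{(-i\hbar)^{e(\gamma)-v(\gamma)}\la^{v(\gamma)}}{\abs{\Aut(\gamma)}} \vec{\gamma}(F)
\]
and set $\hbar=0$.

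The next step is a purely combinatorial identification. Setting $\hbar=0$ annihilates every term with $e(\gamma)>v(\gamma)$; moreover no graph in $\Gcal_2(1)$ can have $e(\gamma)<v(\gamma)$, since every unlabelled vertex has valency $\geq1$ and every edge issues from the single vertex $1$, so $e(\gamma)\geq v(\gamma)$ always. Thus only the graphs with $e(\gamma)=v(\gamma)$ survive. I would then argue that $\gamma\in\Gcal_2(1)$ has $e(\gamma)=v(\gamma)$ if and only if it is a corolla: counting the edges by their (unlabelled) endpoints and using that each unlabelled vertex receives at least one edge forces each unlabelled vertex to have valency exactly $1$ and to be the target of exactly one edge from $1$ — which is precisely the definition of a corolla; conversely every corolla lies in $\Gcal_2(1)$ with $e=v$ (including the degenerate case $v=0$, the empty graph, which reproduces the leading term $F$). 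For such $\gamma$ the prefactor $(-i\hbar)^{e(\gamma)-v(\gamma)}=1$, leaving the coefficient $\la^{v(\gamma)}/\abs{\Aut(\gamma)}$, which is exactly the claimed formula.

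An essentially equivalent route, which I would mention as a sanity check, is to expand $r_{\la V}^{-1}(F)(\ph)=(J_0(\ph)\triangleright F)(\ph)$ directly using $J_0(\ph;w)=\exp\!\big(\la\scp{V^{(1)}(\ph)}{\Delta_{S_0}^{\mathrm A}w}\big)$ from Proposition~\ref{Classical Moller}: the factor $\scp{V^{(1)}(\ph)}{\Delta_{S_0}^{\mathrm A}w}$ is the multidifferential operator attached to a single edge from $1$ to a univalent unlabelled vertex, its $k$-th power is the corolla with $k$ petals, and the $1/k!$ coming from the exponential series matches $1/\abs{\Aut(\gamma)}=1/k!$, since that corolla has automorphism group $S_k$.

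I do not expect a genuine obstacle here; the one point requiring a little care is verifying that $v\mapsto(\text{corolla with }v\text{ petals})$ really is a bijection onto the $\Gcal_2(1)$-graphs with $e=v$, and that the automorphism count $\abs{\Aut(\gamma)}=v!$ is consistent with the $1/v!$ surviving the classical limit. This is straightforward but worth stating explicitly so that the passage from eq.~\eqref{graph R} to the corolla sum is airtight.
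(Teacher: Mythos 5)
Your proof is correct and follows the paper's own route exactly: set $\hbar=0$ in eq.~\eqref{graph R} and observe that the surviving graphs in $\Gcal_2(1)$, namely those with $e(\gamma)=v(\gamma)$, are precisely the corollas. The extra combinatorial justification and the $J_0$ sanity check are fine elaborations of the same argument.
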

\begin{proof}
This follows from eq.~\eqref{graph R} by setting $\hbar=0$. The graphs $\gamma\in\Gcal_2(1)$ with $e(\gamma)=v(\gamma)$ are precisely the corollas.
\end{proof}

In our terms, a \emph{tree} is a connected graph $\gamma\in\G(1)$ such that 1 is a source, and each unlabelled vertex has precisely one ingoing edge.
\begin{lemma}
\be
\label{r graph}
r_{\la V}(F) = \sum_{\gamma\;\mathrm{tree}} \frac{(-\la)^{v(\gamma)}}{\abs{\Aut(\gamma)}} \vec{\gamma}(F)
\ee
\end{lemma}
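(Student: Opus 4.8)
\textit{Proof plan.} Since $r_{\la V}$ acts on $F\in\Fcal_\reg$ by pullback, $r_{\la V}(F)(\ph)=F({\mathtt r}_{\la V}(\ph))$, the plan is to first produce a graphical expansion of the \emph{map} ${\mathtt r}_{\la V}:\Ecal\to\Ecal$ itself, and then compose with the Taylor expansion of $F$ about $\ph$. The starting point is the Yang--Feldmann equation \eqref{YF equation}, which by Lemma~\ref{YF lemma} characterizes ${\mathtt r}_{\la V}$ and is solved $\la$-adically by iteration from ${\mathtt r}_{\la V}(\ph)=\ph$.

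Concretely, I would set $\chi(\ph)\doteq{\mathtt r}_{\la V}(\ph)-\ph\in\Ecal$ and Taylor-expand $V^{(1)}$ about $\ph$ in \eqref{YF equation}, turning it into the fixed-point equation
\[
\chi=-\la\,\Delta^{\mathrm R}_{S_0}\sum_{k=0}^{\infty}\frac1{k!}\bigl\langle V^{(k+1)}(\ph),\chi^{\otimes k}\bigr\rangle .
\]
Iterating this substitutes the right-hand side into each of the $k$ factors of $\chi^{\otimes k}$, so the $\la$-adic limit is a sum over rooted trees: a vertex with $c$ children carries a factor $-\la\,\Delta^{\mathrm R}_{S_0}V^{(c+1)}(\ph)$, the propagator index not contracted with a child being contracted with the parent vertex, or — at the root — being the free $\Ecal$-valued output. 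I would prove this expansion by induction on the number of vertices, matching it against the fixed-point equation. Composing with $F$ then gives
\[
r_{\la V}(F)(\ph)=F\bigl(\ph+\chi(\ph)\bigr)=\sum_{n\ge 0}\frac1{n!}\bigl\langle F^{(n)}(\ph),\chi(\ph)^{\otimes n}\bigr\rangle ,
\]
and substituting the tree expansion of $\chi$, the $n$ trees plugged into the $n$ slots of $F^{(n)}$ combine with the labelled vertex $1$ (which represents $F^{(n)}$) into a single connected graph $\gamma\in\Gcal(1)$ in which $1$ is a source and every unlabelled vertex has exactly one ingoing edge — that is, exactly a \emph{tree} in the sense of the statement, with $e(\gamma)=v(\gamma)$, coefficient $(-\la)^{v(\gamma)}$ (one $-\la$ per $V$-vertex) and no powers of $\hbar$. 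Conversely every such $\gamma$ arises once, by deleting the vertex $1$, so the two sides of \eqref{r graph} agree.

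The step I expect to be the main obstacle is the symmetry-factor bookkeeping: the iteration naturally produces \emph{ordered} (planar) trees weighted by $\prod_w 1/c(w)!$, together with the $1/n!$ from $F$'s Taylor coefficient, and one must check that, after summing over all ways a given isomorphism class of graph is realized, these collapse to the single factor $1/\lvert\Aut\gamma\rvert$ appearing in \eqref{r graph}. This is the standard passage from labelled to unlabelled tree series, and I would carry it out in the same way as in the proof of eq.~\eqref{graph R}. A useful consistency check is that the analogous (and simpler) mechanism for ${\mathtt r}^{-1}_{\la V}(\ph)=\ph+\la\Delta^{\mathrm R}_{S_0}V^{(1)}(\ph)$, where the building blocks plugged into $F^{(n)}$ carry no further children, reproduces precisely the corolla formula for $r_{\la V}^{-1}$ established just above (corollas being the trees of height one).
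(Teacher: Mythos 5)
Your argument is correct, but it is not the route the paper takes. You derive the tree expansion \emph{forwards}: iterate the Yang--Feldmann fixed-point equation \eqref{YF equation} to expand $\chi={\mathtt r}_{\la V}(\ph)-\ph$ as a sum over rooted trees (one factor $-\la\,\Delta^{\mathrm R}_{S_0}V^{(c+1)}$ per vertex with $c$ children), then compose with the Taylor series of $F$ and collapse the ordered-tree weights $\tfrac1{n!}\prod_w\tfrac1{c(w)!}$ to $1/\abs{\Aut\gamma}$. The paper instead \emph{verifies} the candidate formula: it composes the right-hand side of \eqref{r graph} with the already-established corolla expansion of $r_{\la V}^{-1}$, notes via Lemma~\ref{Partial Composition} that every extension of a corolla by a tree is a tree, and shows that for a tree with $m$ leaves the contributions indexed by subsets of leaves carry signs summing to $\sum_{k=0}^m\binom{m}{k}(-1)^k=0$, so only the leafless tree (the identity) survives. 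The paper's route is shorter because all symmetry-factor bookkeeping is delegated to Lemma~\ref{Partial Composition} and the cancellation is a one-line inclusion--exclusion; yours is more constructive (it actually produces the expansion of the M{\o}ller map on configurations, not just its action on functionals) but hinges on the labelled-to-unlabelled tree counting, which you correctly flag as the delicate step and which the paper's argument avoids entirely. Your consistency check against the corolla formula for $r_{\la V}^{-1}$ is essentially the germ of the paper's proof.
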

\begin{proof}
Any extension of a corolla by a tree is also a tree, so the composition of $r_{\la V}^{-1}$ with \eqref{r graph} can be computed as a sum over trees. 

Consider a tree, $\beta$. What will the coefficient of this term be? For every subset of leaves (valency 1 vertices) of $\beta$, there is an extension $\alpha\hookrightarrow\beta\twoheadrightarrow\gamma$, where $\alpha$ is $\beta$ without those leaves, and $\gamma$ is the corolla made from those leaves. Note how the sign of the term depends upon the number of leaves that are removed.

Lemma~\ref{Partial Composition} shows that if $\beta$ has $m$ leaves, then the coefficient of this term is a multiple of 
\[
\sum_{k=0}^{m} \binom{m}{k}(-1)^k = 0
\]
unless $m=0$. Therefore the only term in the  composition of $r_{\la V}$ with \eqref{r graph} is given by the unique tree with no leaves; the composition is the identity.
\end{proof}

\begin{df}
$\Gcal_9(1)$ is the set of trees such that no unlabelled vertex has precisely one outgoing edge.
\end{df}
Any tree can be obtained from such a graph by adding vertices along edges. As in Theorem~\ref{Reduced star V}, summing over these gives \emph{interacting} advanced propagators. This gives the formula
\[
r_{\la V}(F) = \sum_{\gamma\in\Gcal_9(1)} \frac{(-1)^{v(\gamma)}}{\abs{\Aut(\gamma)}} \acts{\gamma}(F)
\]
where edges represent $\Delta^{\mathrm A}_S$ and vertices represent derivatives of $\la V$.

In a similar way, inverting eq.~\eqref{graph R} gives a graphical formula for $R_{\TT,\la V}$. 
\begin{df}
$\Gcal_{10}(1) \subset \Gcal(1)$ is the set of graphs such that:
\begin{itemize}
\item
Every unlabelled vertex has at least one incoming edge;
\item
1 is a source (has no incoming edges);
\item
there are no directed cycles.
\end{itemize}
\end{df}
With this,
\[
R_{\TT,\la V}(F) = \sum_{\gamma\in\Gcal_{10}(1)} \frac{(-i\hbar)^{e(\gamma)-v(\gamma)}(-\la)^{v(\gamma)}}{\abs{\Aut(\gamma)}} \vec{\gamma}(F)\,.
\]

Next, this leads to a formula for $R_{H,\la V}$. If $V$ is local (and we set $\TT_H(V)=V$) then this amounts to replacing all products by time-ordered products. In terms of graphs, this is given by attaching undirected $\Delta^{\mathrm F}_{S_0}$-edges to graphs from $\Gcal_{10}(1)$.
\begin{df}
$\Gcal_{11}(1)$ is the set of isomorphism classes of graphs with directed and undirected edges and a labelled vertex 1, such that
\begin{itemize}
\item
Every unlabelled vertex has at least one incoming edge;
\item
1 is a source;
\item
there are no directed cycles;
\item
there are no loops.
\end{itemize} 
\end{df}
With this, for $V$ local,
\[
R_{H,\la V}(F) = \sum_{\gamma\in\Gcal_{11}(1)} \frac{(-i)^{d(\gamma)-v(\gamma)}(-\la)^{v(\gamma)}\hbar^{e(\gamma)-v(\gamma)}}{\abs{\Aut(\gamma)}} \vec{\gamma}(F)
\]
where undirected edges represent $\Delta^{\mathrm F}_{S_0}$, and $d(\gamma)$ is the number of directed edges.

\begin{df}
$\Gcal_{12}(1)\subset \Gcal_{11}(1)$ is the subset of graphs such that that no unlabelled vertex has one incoming edge, one outgoing edge, and no unlabelled edge.
\end{df}
Any graph in $\Gcal_{11}(1)$ can be obtained by adding vertices along directed edges of a graph in $\Gcal_{12}(1)$. In this way, the formula can be reexpressed using the interacting advanced propagator,
\[
R_{H,\la V}(F) = \sum_{\gamma\in\Gcal_{12}(1)} \frac{(-i)^{v(\gamma)+d(\gamma)}\hbar^{e(\gamma)-v(\gamma)}}{\abs{\Aut(\gamma)}} \acts{\gamma}(F) 
\]
where directed edges represent $\Delta^{\mathrm A}_{S}$, undirected edges represent $\Delta^{\mathrm F}_{S_0}$, and unlabelled vertices represent derivatives of $\la V$.

Finally, consider the graphs in $\Gcal_{12}(1)$ without trees branching off of them. These are characterized by the lack of univalent vertices, i.e., leaves.
\begin{df}
$\Gcal_{13}(1)$ is the set of isomorphism classes of graphs with directed and undirected edges and a labelled vertex 1, such that
\begin{itemize}
\item
Every unlabelled vertex has at least one incoming edge and one other edge;
\item
no unlabelled vertex has only one incoming and one outgoing edge;
\item
1 is a source (has no incoming edges);
\item
there are no directed cycles;
\item
there are no loops.
\end{itemize} 
We define an operator
\[
\Omega_{\la V}(F) := \sum_{\gamma\in\Gcal_{13}(1)} \frac{(-i)^{v(\gamma)+d(\gamma)}\hbar^{e(\gamma)-v(\gamma)}}{\abs{\Aut(\gamma)}} \acts{\gamma}(F) \,.
\]
\end{df}

Note that any graph giving a term of order $\hbar^m$ has at most $4m$ vertices and $5m$ edges. There are finitely many such graphs, so $\lambda$ does not need to be a formal parameter in this formula.

Finally, observe that any graph in $\Gcal_{12}(1)$ can be obtained as an extension of a tree in $\Gcal_9(1)$ by a graph in $\Gcal_{13}(1)$. Consequently, the composition $r_{\la V}\circ\Omega_{\la V}$ is a sum over $\Gcal_{12}(1)$ with the same coefficients. Therefore
\[
R_{H,\la V} = r_{\la V}\circ \Omega_{\la V}
\]
which is a nonperturbative formula for $R_{H,\la V}$.

Unfortunately, in order to apply the standard methods of Epstein-Glaser renormalization \cite{DF04}, this formula needs to be expanded in $\lambda$ again, to show cancellation of the lightcone divergences. However, we hope that the nonperturbative formula can nevertheless lead to a well defined object,  if we use a different renormalization method (e.g., through some regularization scheme). This will be investigated in our future work.

\section{Perturbative Agreement}\label{PPA}
Consider the case that $V$ is quadratic, so that $V^{(2)}(\ph)$ is independent of $\ph$, and higher derivatives vanish. In such a situation one can treat the interacting theory exactly and a natural question to ask is how this compares with perturbative treatment. This issue has been discussed in the literature \cite{HW05,DHP} under the name \textit{perturbative agreement}. One way to look at it is to compare the interacting star product obtained by means of quantum M{\o}ller operators with the star product constructed directly from the advanced Green function for the quadratic action $S_0+\la V$. In this case, Theorem~\ref{Reduced star V} becomes:
\begin{cor}
If $V\in\Fcal_\reg$ is quadratic, then $\starbint$ is the exponential product defined by $-i\Delta^{\mathrm A}_{S_0+\la V}$.
\end{cor}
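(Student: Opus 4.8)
The plan is to read the corollary straight off Theorem~\ref{Reduced star V} by observing that, in the quadratic case, nearly every graph in $\Gcal_5(2)$ contributes zero. First I would note that if $S = S_0 + \la V$ with $S_0$ and $V$ both quadratic in $\ph$, then $S^{(r)} \equiv 0$ for every $r \geq 3$. Since by definition every unlabelled vertex of a graph $\gamma \in \Gcal_5(2)$ has valency at least $3$ and represents the corresponding variational derivative of $S$, the operator $\acts\gamma$ vanishes identically as soon as $\gamma$ has any unlabelled vertex. Hence in eq.~\eqref{reduced star V} only the terms with $v(\gamma)=0$ survive.

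Next I would enumerate the graphs in $\Gcal_5(2)$ with no unlabelled vertices. Such a graph has only the labelled vertices $1$ and $2$; the ban on directed cycles excludes loops, and the ban on a directed path from $1$ to $2$ excludes any edge oriented from $1$ to $2$, so every edge points from $2$ to $1$. For each $n\geq 0$ there is exactly one such graph $\gamma_n$, namely $n$ parallel $2\to 1$ edges, with $e(\gamma_n)=n$, $v(\gamma_n)=0$ and $\Aut(\gamma_n)=S_n$, so $\abs{\Aut\gamma_n}=n!$. With the same index conventions as in eq.~\eqref{graph free product}, and using eq.~\eqref{biderivation}, the associated operator is $\acts{\gamma_n}(F,G)(\ph)=\bigl\langle (\Delta^{\mathrm A}_S)^{\otimes n}, F^{(n)}(\ph)\otimes G^{(n)}(\ph)\bigr\rangle$, where $\Delta^{\mathrm A}_S=\Delta^{\mathrm A}_{S_0+\la V}$ is still well defined (for instance through the series eq.~\eqref{regular advanced}, since $V^{(2)}$ is a constant compactly supported kernel).

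Substituting $v(\gamma_n)=0$, $e(\gamma_n)=n$, $\abs{\Aut\gamma_n}=n!$ into eq.~\eqref{reduced star V} then gives
\[
F\starbint G = \sum_{n=0}^{\infty}\frac{(-i\hbar)^n}{n!}\,\bigl\langle (\Delta^{\mathrm A}_{S_0+\la V})^{\otimes n}, F^{(n)}(\ph)\otimes G^{(n)}(\ph)\bigr\rangle = m\circ e^{-i\hbar\Dcal}(F\otimes G),
\]
where $\Dcal$ is the biderivation determined by $\Delta^{\mathrm A}_{S_0+\la V}$. By Definition~\ref{def:exponential prod} this is exactly the exponential product with $K=-i\Delta^{\mathrm A}_{S_0+\la V}$, which is the assertion; it is the evident analogue of the relation between $\starb$ and $-i\Delta^{\mathrm A}_{S_0}$ recorded in the table, with the free action replaced by the full quadratic action $S_0+\la V$.

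I do not expect a genuine obstacle here: the only point requiring any care is the combinatorial enumeration of the surviving graphs and the check that their automorphism groups reproduce the factorials $n!$. Everything else is immediate once Theorem~\ref{Reduced star V} is in hand, so this is essentially a one-line specialization presented as a short argument rather than an independent proof.
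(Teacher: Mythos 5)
Your proposal is correct and follows exactly the paper's argument: the quadraticity of $S$ kills every graph with an unlabelled vertex (since those vertices carry $S^{(r)}$ with $r\geq 3$), and the surviving graphs are precisely the $\Gcal_1(2)$-type graphs of $n$ parallel $2\to1$ edges with $\abs{\Aut\gamma_n}=n!$, which resums to the exponential product for $-i\Delta^{\mathrm A}_{S_0+\la V}$. The only difference is that you spell out the enumeration and automorphism count that the paper leaves implicit by referring back to eq.~\eqref{graph free product}.
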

\begin{proof}
An unlabelled vertex of $\gamma\in\Gcal_5(2)$ has valency $r\geq3$ and represents the derivative $S^{(r)}$, which vanishes because $S$ is quadratic, therefore $\acts\gamma=0$ unless $\gamma$ has no unlabelled vertices.
This means that eq.~\eqref{reduced star V} simplifies to something like eq.~\eqref{graph free product}:
\[
F\starbint G = \sum_{\gamma\in\G_1(2)} \frac{(-i\hbar)^{e(\gamma)}}{\abs{\Aut \gamma}} \acts\gamma(F,G) \,. \qedhere
\]
\end{proof}
In \cite{HW05,DHP} the principle of perturbative agreement (PPA) is expressed as a compatibility condition for time-ordered products corresponding to $S_0$ and $S_0+\la V$. In order to prove it in our current setting we need to pass from the $\Qcal_\TT$-identification to the $\Qcal_\Weyl$-identification.

For the the remainder of this section, we \emph{do not} need $V$ to be regular.
\begin{lemma}
If $V\in\Fcal_\mc$ is quadratic then 
\be
\label{quadratic Rbar}
R_{\TT,\la V} = \alpha_{\la\delta}\circ r_{\la V}\,,
\ee
and 
\be
\label{quadratic Moller}
R_{0,\la V} = \TT\circ\alpha_{\la\delta}\circ r_{\la V}\circ\TT^{-1} 
\ee
where $\alpha$ is defined in Proposition~\ref{Star equivalence}, $r_{\la V}$ is the classical M{\o}ller operator, and 
\[
\delta = i\Delta^{\mathrm R}_{S_0} V^{(2)} \Delta^{\mathrm A}_{S_0} 
= i \DashedLineOneTwo \,.
\]
\end{lemma}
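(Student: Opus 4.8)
The plan is to reduce both identities to the single claim \eqref{quadratic Rbar} for $R_{\TT,\la V}$ and then to conjugate. Since $\Qcal_\TT = \Qcal_{\mathrm{Weyl}}\circ\TT$, the same argument that relates $R_{H,\la V}$ to $R_{\TT,\la V}$ gives $R_{0,\la V} = \TT\circ R_{\TT,\la V}\circ\TT^{-1}$, so \eqref{quadratic Moller} follows from \eqref{quadratic Rbar} by conjugating with $\TT$. Everything therefore comes down to computing $R^{-1}_{\TT,\la V}$ for quadratic $V$.

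For that, I would Taylor-expand $V$ about $\ph$ in the generating function \eqref{generating function}. Because $V$ is quadratic the expansion terminates: writing $V(\ph+\psi)-V(\ph) = \langle V^{(1)}(\ph),\psi\rangle + \tfrac12\langle V^{(2)}\psi,\psi\rangle$ with $V^{(2)}$ constant, and putting $\psi = -i\hbar\Delta^{\mathrm A}_{S_0}w$, one gets $J = J_0\,J_1$, where $J_0(\ph;w) = \exp\bigl(\la\langle V^{(1)}(\ph),\Delta^{\mathrm A}_{S_0}w\rangle\bigr)$ is exactly the generating function of Proposition~\ref{Classical Moller} and $J_1(w) = \exp\bigl(\tfrac{-i\hbar\la}{2}\langle V^{(2)}\Delta^{\mathrm A}_{S_0}w,\Delta^{\mathrm A}_{S_0}w\rangle\bigr)$ is the quadratic specialization of the $J_1$ in \eqref{R factorization}; crucially it no longer depends on $\ph$. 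Exactly as in the proof of \eqref{R factorization} this gives $(R^{-1}_{\TT,\la V}F)(\ph) = \bigl(J_0(\ph)\triangleright(J_1\triangleright F)\bigr)(\ph)$, and since $J_1\triangleright$ is now a genuine ($\ph$-independent) bidifferential operator while $(J_0(\ph)\triangleright G)(\ph) = (r^{-1}_{\la V}G)(\ph)$ by Proposition~\ref{Classical Moller}, we obtain $R^{-1}_{\TT,\la V} = r^{-1}_{\la V}\circ(J_1\triangleright)$ as an operator identity.

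It then remains to identify $J_1\triangleright = \alpha_{\la\delta}^{-1}$. Rewriting the quadratic form in $J_1$ using $(\Delta^{\mathrm A}_{S_0})^{\mathsf T} = \Delta^{\mathrm R}_{S_0}$ (eq.~\eqref{AandR}) puts its kernel in the form $\Delta^{\mathrm R}_{S_0}V^{(2)}\Delta^{\mathrm A}_{S_0} = -i\delta$; combining this with the standard fact that the $\triangleright$-action of a Gaussian generating function $\exp\bigl(\tfrac{c}{2}\langle Cw,w\rangle\bigr)$ is the operator $\exp\bigl(\tfrac{c}{2}\Dcal_C\bigr)$ (with $\Dcal_C$ as in Definition~\ref{Alpha}) yields $J_1\triangleright = \exp\bigl(\tfrac{-i\hbar\la}{2}\Dcal_{-i\delta}\bigr) = \exp\bigl(-\tfrac{\hbar}{2}\Dcal_{\la\delta}\bigr) = \alpha_{\la\delta}^{-1}$, keeping track of $(-i)(-i)=-1$. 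This gives $R_{\TT,\la V} = \alpha_{\la\delta}\circ r_{\la V}$, and then \eqref{quadratic Moller} by the conjugation above; the diagrammatic form $\delta = i\,\DashedLineOneTwo$ is simply the reading of the kernel $i\Delta^{\mathrm R}_{S_0}V^{(2)}\Delta^{\mathrm A}_{S_0}$.

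The factors of $i$, $\hbar$, $\la$ and the transpose are the only routine bookkeeping. The one point requiring care is that $V$ is allowed to be non-regular, so Propositions~\ref{RfromJ} and~\ref{Classical Moller} must be reread in that setting; but since $V$ is quadratic every series above truncates at each order in $\hbar$ and $\la$, so all the manipulations remain valid as identities of formal power series as soon as $\Delta^{\mathrm R}_{S_0}V^{(2)}\Delta^{\mathrm A}_{S_0}$ is a well-defined bidistribution, which it is. I do not expect any genuine obstacle here.
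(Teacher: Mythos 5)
Your proposal is correct and follows essentially the same route as the paper: split $J=J_0J_1$ using the terminating Taylor expansion of the quadratic $V$, identify $(J_0(\ph)\triangleright\,\cdot\,)(\ph)$ with $r_{\la V}^{-1}$ via Proposition~\ref{Classical Moller}, recognize the now $\ph$-independent Gaussian $J_1\triangleright$ as $\alpha_{-\la\delta}=\alpha_{\la\delta}^{-1}$ after transposing with $(\Delta^{\mathrm A}_{S_0})^{\mathsf T}=\Delta^{\mathrm R}_{S_0}$, and obtain \eqref{quadratic Moller} from \eqref{quadratic Rbar} by conjugation with $\TT$. The sign and factor bookkeeping matches the paper's.
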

\begin{proof}
Because $V$ is quadratic,
\begin{align*}
V(\ph-i\hbar \Delta_{S_0}^{\mathrm A}w) &= V(\ph) - i\hbar \left< V^{(1)}(\ph),\Delta^{\mathrm A}_{S_0}w\right> - \frac{\hbar^2}{2} \left<V^{(2)},(\Delta^{\mathrm A}_{S_0}w)^{\otimes 2}\right> \\
&= V(\ph) - i\hbar \left< V^{(1)}(\ph),\Delta^{\mathrm A}_{S_0}w\right> + \frac{i}2\hbar^2 \left<\delta,w^{\otimes 2}\right> \,.
\end{align*}
In this way, eq.~\eqref{generating function} simplifies to 
\begin{align*}
J(\ph;w) &= \exp\left(\left<\la  V^{(1)}(\ph),\Delta^{\mathrm A}_{S_0}w\right> - \frac{\hbar\la}2 \left<\delta,w^{\otimes 2}\right>\right) \\
&= J_0(\ph;w) \exp\left( - \frac{\hbar\la}2 \left<\delta,w^{\otimes 2}\right>\right)
\,,
\end{align*}
i.e., $J_1(\ph;w)=\exp\left( - \frac{\hbar\la}2 \left<\delta,w^{\otimes 2}\right>\right)$.

A consequence of these simplifications is that when any of these acts (via $\triangleright$) on a regular functional, then the result is still regular. Because of this, the proof of Proposition~\ref{RfromJ} still works.

Now, $(J_0(\ph)\triangleright G)(\ph) = (r_{\la V}^{-1}G)(\ph)$, and $e^{-\hbar\la\delta/2} \triangleright G = \alpha_{-\la \delta}G$. This shows that $R_{\TT,\la V}^{-1}(G) = r_{\la V}^{-1}\circ \alpha_{-\la\delta} (G)$.
Inverting this gives eq.~\eqref{quadratic Rbar}.

Equation \eqref{quadratic Moller} just follows from the relationship between $\Qcal_\TT$ and $\Qcal_\Weyl$.
\end{proof}
\begin{rem}
Equation \eqref{quadratic Rbar} is very similar to eq.~\eqref{R factorization}. The difference is that $R_{\TT,\la V}$ is factored in a different order into classical and ``purely quantum'' part, since eq.~\eqref{R factorization} implies $R_{\TT,\la V} =  r_{\la V}\circ \Upsilon_{\la V}^{-1}$.
\end{rem}
\begin{rem}
	Note that our result holds  for a quadratic $V$ that is local and compactly supported, and that removing the IR regularization is a non-trivial step.
\end{rem}
 After transforming the quantum M{\o}ller map to the $\Qcal_\Weyl$-identification, \eqref{quadratic Rbar} allows us to compute the map $\beta\doteq r_{\la V}^{-1} R_{0,\la V}$ of \cite{DHP}. Note the close resemblance between $\beta$ and $\Upsilon_{\la V}^{-1}$, already pointed out in Remark \ref{compare:w:beta}. We are now ready to provide a streamlined version of the proof of Theorem 5.3 of \cite{DHP}.

\begin{thm}
For $V$ quadratic, the quantum M{\o}ller operator can be expressed in terms of the classical M{\o}ller operator as \cite{DHP}
\be
\label{Italian R}
R_{0,\la V} = r_{\la V}\circ \alpha_{i(\Delta_{S}^{\mathrm D}-\Delta_{S_0}^{\mathrm D})} \,.
\ee
\end{thm}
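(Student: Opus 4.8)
The plan is to start from eq.~\eqref{quadratic Moller} of the preceding lemma, $R_{0,\la V} = \TT\circ\alpha_{\la\delta}\circ r_{\la V}\circ\TT^{-1}$ with $\TT = \alpha_{i\Delta^{\mathrm D}_{S_0}}$, and to push the classical M\o ller operator $r_{\la V}$ through the $\alpha$-factors to the far left, collecting a single residual $\alpha$-factor that will be identified with $\alpha_{i(\Delta^{\mathrm D}_S-\Delta^{\mathrm D}_{S_0})}$. Two elementary observations set this up. First, each $\alpha_Y=e^{\frac\hbar2\Dcal_Y}$ is the exponential of a second-order operator whose kernel $Y$ does not depend on $\ph$, and these operators all commute, so $\alpha_{Y_1}\circ\alpha_{Y_2}=\alpha_{Y_1+Y_2}$; in particular $\TT\circ\alpha_{\la\delta}=\alpha_{i\Delta^{\mathrm D}_{S_0}+\la\delta}$, where $i\Delta^{\mathrm D}_{S_0}+\la\delta = i\bigl(\Delta^{\mathrm D}_{S_0}+\la\Delta^{\mathrm R}_{S_0}V^{(2)}\Delta^{\mathrm A}_{S_0}\bigr)$ since $\delta = i\Delta^{\mathrm R}_{S_0}V^{(2)}\Delta^{\mathrm A}_{S_0}$. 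Second, because $V$ is quadratic, $V^{(1)}$ is affine, so by eq.~\eqref{inverse Moller} the map $\mathtt r^{-1}_{\la V}$, and hence $\mathtt r_{\la V}$, is an affine transformation of $\Ecal$ with constant derivative $\rho^{-1}$, where $\rho=\id+\la\Delta^{\mathrm R}_{S_0}V^{(2)}$ as in eq.~\eqref{Moller derivative}.

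Next I would work out how $r_{\la V}$ conjugates an exponential product. Since $\mathtt r_{\la V}$ is affine, the chain rule has no higher-order corrections: $(r_{\la V}G)^{(n)}(\ph) = \bigl((\rho^{-1})^{\mathsf T}\bigr)^{\otimes n} G^{(n)}(\mathtt r_{\la V}\ph)$. Substituting into $\Dcal_K$ and transporting the transposes onto the kernel gives $\Dcal_K\circ r_{\la V} = r_{\la V}\circ\Dcal_{\rho^{-1}K(\rho^{-1})^{\mathsf T}}$, hence, after exponentiating, $r_{\la V}^{-1}\circ\alpha_K\circ r_{\la V} = \alpha_{\rho^{-1}K(\rho^{-1})^{\mathsf T}}$. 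Applying this with $K=i\Delta^{\mathrm D}_{S_0}+\la\delta$ yields
\[
R_{0,\la V} = \alpha_{i\Delta^{\mathrm D}_{S_0}+\la\delta}\circ r_{\la V}\circ\TT^{-1} = r_{\la V}\circ\alpha_{\rho^{-1}(i\Delta^{\mathrm D}_{S_0}+\la\delta)(\rho^{-1})^{\mathsf T}}\circ\TT^{-1}\,,
\]
so the whole statement reduces to the kernel identity $\rho^{-1}\bigl(\Delta^{\mathrm D}_{S_0}+\la\Delta^{\mathrm R}_{S_0}V^{(2)}\Delta^{\mathrm A}_{S_0}\bigr)(\rho^{-1})^{\mathsf T} = \Delta^{\mathrm D}_S$, i.e.\ $\rho^{-1}(i\Delta^{\mathrm D}_{S_0}+\la\delta)(\rho^{-1})^{\mathsf T} = i\Delta^{\mathrm D}_S$.

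To prove that identity I would invoke Lemma~\ref{Propagator Moller}. Read for the action $S=S_0+\la V$, eq.~\eqref{propagator Moller} states $\rho\circ\Delta^{\mathrm R}_S\circ\rho^{\mathsf T}\big|_{\Ecal_c} = \Delta^{\mathrm R}_{S_0}+\la\Delta^{\mathrm R}_{S_0}V^{(2)}\Delta^{\mathrm A}_{S_0}$, equivalently $\rho^{-1}\bigl(\Delta^{\mathrm R}_{S_0}+\la\Delta^{\mathrm R}_{S_0}V^{(2)}\Delta^{\mathrm A}_{S_0}\bigr)(\rho^{-1})^{\mathsf T}=\Delta^{\mathrm R}_S$. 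Taking the transpose of this equation and using $(\Delta^{\mathrm R})^{\mathsf T}=\Delta^{\mathrm A}$ together with the symmetry of $V^{(2)}$ (so that $\Delta^{\mathrm R}_{S_0}V^{(2)}\Delta^{\mathrm A}_{S_0}$ is symmetric) gives the advanced counterpart $\rho^{-1}\bigl(\Delta^{\mathrm A}_{S_0}+\la\Delta^{\mathrm R}_{S_0}V^{(2)}\Delta^{\mathrm A}_{S_0}\bigr)(\rho^{-1})^{\mathsf T}=\Delta^{\mathrm A}_S$. Averaging the two and recalling $\Delta^{\mathrm D}=\tfrac12(\Delta^{\mathrm R}+\Delta^{\mathrm A})$ produces exactly the required identity. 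Substituting it back, $R_{0,\la V}=r_{\la V}\circ\alpha_{i\Delta^{\mathrm D}_S}\circ\TT^{-1}=r_{\la V}\circ\alpha_{i\Delta^{\mathrm D}_S}\circ\alpha_{-i\Delta^{\mathrm D}_{S_0}}=r_{\la V}\circ\alpha_{i(\Delta^{\mathrm D}_S-\Delta^{\mathrm D}_{S_0})}$, which is eq.~\eqref{Italian R}.

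The step I expect to be the main obstacle is the conjugation identity $r_{\la V}^{-1}\circ\alpha_K\circ r_{\la V} = \alpha_{\rho^{-1}K(\rho^{-1})^{\mathsf T}}$: one must track carefully which transpose acts on which leg when transporting the bidistribution $K$ through the affine change of variables, and keep the domains straight (the $\big|_{\Ecal_c}$ restriction in eq.~\eqref{propagator Moller} and the fact that $\operatorname{Im}\Delta^{\mathrm R}_S=\operatorname{Im}\Delta^{\mathrm R}_{S_0}$, as used in the proof of Lemma~\ref{Propagator Moller}). Fortunately every kernel that actually occurs --- $\Delta^{\mathrm D}_{S_0}$, $\delta$, and their $\rho$-conjugates --- is symmetric, so the left/right placement of the transposes is immaterial, and the affine chain rule can, if one wishes, be verified order by order in $\hbar$.
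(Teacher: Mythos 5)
Your proposal is correct and follows essentially the same route as the paper's proof: both reduce the claim to the kernel identity $\rho\circ\Delta^{\mathrm D}_S\circ\rho^{\mathsf T}=\Delta^{\mathrm D}_{S_0}-i\la\delta$, obtained by transposing and averaging eq.~\eqref{propagator Moller}, and then use the conjugation of $\alpha$-maps by the (now affine) classical M{\o}ller map to rewrite eq.~\eqref{quadratic Moller}. The only cosmetic difference is that you conjugate by $r_{\la V}$ rather than $r_{\la V}^{-1}$ and spell out the affine chain-rule derivation of $r_{\la V}^{-1}\circ\alpha_K\circ r_{\la V}=\alpha_{\rho^{-1}K(\rho^{-1})^{\mathsf T}}$, which the paper uses implicitly.
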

\begin{proof}
Because $S$ is quadratic, $\mathtt r^{-1}_{\la V}$ is a linear map. Its derivative is constant. Using the notation $\rho$ for this derivative again,
\[
\rho = \left(\mathtt r^{-1}_{\la V}\right)^{(1)}(\ph) = \mathtt r^{-1}_{\la V} = \id+\la\Delta^{\mathrm R}_{S_0} V^{(2)} \,.
\]

Equation \eqref{propagator Moller} simplifies to
\begin{align*}
\rho\circ \Delta^{\mathrm R}_S \circ \rho^{\mathsf T} &= \Delta^{\mathrm R}_{S_0} + \la \Delta^{\mathrm R}_{S_0} V^{(2)} \Delta^{\mathrm A}_{S_0} \\
&= \Delta^{\mathrm R}_{S_0} - i\la\delta \,.
\end{align*}
Taking the transpose gives the same identity for the advanced propagators, and adding them shows that
\[
\rho\circ \Delta^{\mathrm D}_S \circ \rho^{\mathsf T} = \Delta^{\mathrm D}_{S_0} - i\la\delta \,.
\]

%


The point of this is that
\[
\alpha_{i\Delta^{\mathrm D}_S}\circ r_{\la V}^{-1} =  r_{\la V}^{-1} \circ \alpha_{i\rho\circ\Delta^{\mathrm D}_S\circ \rho^{\mathsf T}} 
= r_{\la V}^{-1} \circ \alpha_{i\Delta^{\mathrm D}_{S_0}+\la\delta}
\,.
\]
Remembering that $\TT = \alpha_{i \Delta_{S_0}^{\mathrm D}}$, this gives
\[
r_{\la V}\circ \alpha_{i\Delta_{S}^{\mathrm D}} = \TT\circ\alpha_{\la \delta} \circ r_{\la V} \,.
\]
Equation \eqref{quadratic Moller} immediately becomes
\[
R_{0,\la V} = r_{\la V}\circ \alpha_{i\Delta_{S}^{\mathrm D}} \circ \TT^{-1}
\]
and using the definition of $\TT$ again gives eq.~\eqref{Italian R}
\end{proof}
Relation \eqref{Italian R} is in fact the condition of perturbative agreement, since  $\alpha_{i(\Delta_{S}^{\mathrm D}-\Delta_{S_0}^{\mathrm D})}$ is the map that intertwines between the time-ordered product corresponding to $S_0$ and the one corresponding to $S$. Another way of expressing this relation is
\[
R_{0,\la V}\circ \Tcal=r_{\la V}\circ \Tcal_S\,,
\]
where $\Tcal$ is the time-ordering map corresponding to $S_0$ and $ \Tcal_S$ is the analogous map corresponding to $S$.
\section{Conclusions}
In this paper we have derived an explicit formula for the deformation quantization of a general class of infinite dimensional Poisson manifolds. We have also investigated the relation between our formula and the Kontsevich formula. Although the latter hasn't been generalized to the infinite dimensional setting (as for now), one can check if our formula could in principle be derived from one that involves only the propagator (in our case, the advanced propagator) and its derivatives. By direct inspection of the graphs that appear in the third order of our expansion of the interacting star product, we have shown that our expressions cannot be derived by only using the Kontsevich-type graphs. The extra information that we need (apart from the knowledge of the propagator and its derivatives) is the action $S$. This is, however, always provided in the models we are working with, since they arise from classical field theory formulated in a Lagrangian setting (the Poisson structure is constructed using the action $S$). It would be interesting to understand if it is even possible to construct a star product in the infinite dimensional setting without using some additional structure; otherwise, our formula may be the best analogue of Kontsevich's formula in this setting. We want to investigate this problem in our future work.

 Our results hold for a restricted class of functions on the manifold in question and in order to generalize these results, one needs to perform renormalization. 
 We want to investigate this in future research. There are two possible ways forward. One is to use regularization and perform computations in concrete examples, to see how the divergences can be removed. This is expected to work for the quantum M{\o}ller operator itself (since it was constructed by perturbative methods, e.g., by \cite{DF04}) and could also shed some light on the singularities of the interacting product. For the latter, another strategy is to try to correct $\starHint$ by modifying the normal ordering quantization map $\Qcal_H$.
 
 We also want to see how the present results are compatible with the algebraic adiabatic limit \cite{BDF,JMPReview}  and the general framework proposed in \cite{Eli16}.
 
\section*{Acknowledgements}
This research was partially supported by KR's EPSRC grant \verb|EP/P021204/1|.
\bibliographystyle{alpha}

\begin{thebibliography}{99}	
				
	\bibitem{BR16}
	D.~Bahns and K.~Rejzner, \emph{The quantum {S}ine {G}ordon model in
		perturbative {AQFT}}, Commun.\ Math.\ Phys.\ \textbf{357} (2018) 421--446.\\
	\url{https://doi.org/10.1007/s00220-017-2944-4}.
	
	\bibitem{GreenBear}
	C.~B{\"a}r, \emph{Green-hyperbolic operators on globally hyperbolic spacetimes},
	Commun. Math. Phys. {\bf 333} (2015),
		no.~3, 1585-1615.
			
	\bibitem{Bas64}
	A.~Bastiani, \emph{Applications diff{\'e}rentiables et vari{\'e}t{\'e}s
		diff{\'e}rentiables de dimension infinie}, Journal d'Analyse math{\'e}matique
	\textbf{13} (1964), no.~1, 1--114.
	
	\bibitem{BFFLS1}
		F.~Bayen, M.~Flato, C.~Fronsdal, A.~Lichnerowicz, and D.~Sternheimer, \emph{Deformation theory and quantization. I. Deformations of symplectic structures},
		Annals of Physics \textbf{111} (1987), no.~1, 61--110.
		
	\bibitem{BFFLS2}
	F.~Bayen, M.~Flato, C.~Fronsdal, A.~Lichnerowicz, and D.~Sternheimer, \emph{Deformation theory and quantization. I. Deformations of symplectic structures},
	Annals of Physics \textbf{111} (1987), no.~1, 111--151.
	
	\bibitem{BreDue}
	F.~Brennecke and M.~D{\"u}tsch, \emph{Removal of violations of the master
		{W}ard identity in perturbative {QFT}}, Reviews in Mathematical Physics
	\textbf{20} (2008), no.~02, 119--51.
	
	\bibitem{BDGR}
	C.~Brouder, N.~V.~Dang, C.~Laurent-Gengoux, and K.~Rejzner, \emph{Properties of field functionals and characterization of local functionals}, Journal of Mathematical Physics \textbf{59} (2018), no.~2, 023508.
	
	\bibitem{BDF}
	R.~Brunetti, M.~D{\"u}tsch, and K.~Fredenhagen, \emph{Perturbative algebraic
		quantum field theory and the renormalization groups}, Adv. Theor. Math. Phys.
	\textbf{13} (2009), no.~5, 1541--1599.
	
	\bibitem{BF0}
	R.~Brunetti and K.~Fredenhagen, \emph{Microlocal analysis and interacting
		quantum field theories}, Commun. Math. Phys. \textbf{208} (2000), no.~3,
	623--661.

	\bibitem{BFRej13}
	R.~Brunetti, K.~Fredenhagen, and K.~Rejzner, \emph{Quantum gravity from the
		point of view of locally covariant quantum field theory}, Communications in
	Mathematical Physics \textbf{345} (2016), no.~3, 741--779.
	
	\bibitem{BFR}
	R.~Brunetti, K.~Fredenhagen, and P.~L. Ribeiro, \emph{Algebraic structure of
		classical field theory {I}: Kinematics and linearized dynamics for real
		scalar fields}, Commun.\ Math.\ Phys.\ \textbf{368} (2019) 519--584, [arXiv:math-ph/1209.2148v2].
	
	\bibitem{Collini}
	G.~Collini, \emph{Fedosov Quantization and Perturbative Quantum Field Theory}, Ph.D. thesis, Leipzig (2016), [arXiv:math-ph/1603.09626].
	
	\bibitem{Dit90}
		J.~Dito, \emph{Star-product approach to quantum field theory: the free scalar field}, Letters in Mathematical Physics \textbf{20} (1990),	no.~2, 125--134.

	\bibitem{Dit93}
	J.~Dito, \emph{An Example of Cancellation of Infinities in the	Star-Quantization of Fields}, Letters in Mathematical Physics \textbf{27} (1993),	73--80.
	
	\bibitem{Dit15}
	J.~Dito, \emph{The necessity of wheels in universal quantization formulas}, Commun. Math. Phys. {\bf 338} (2015), 523--532.
	
	\bibitem{Due19}
	M.~D{\"u}tsch, \emph{From Classical Field Theory to Perturbative Quantum Field Theory}, Progress in Mathematical Physics,
	Birkh{\"a}user 2019.
	
	\bibitem{DFloop}
	M.~D{\"u}tsch and K.~Fredenhagen, \emph{Algebraic quantum field theory,
		perturbation theory, and the loop expansion}, Commun. Math. Phys.
	\textbf{219} (2001), no.~1, 5--30.
	
	\bibitem{DF}
	M.~D{\"u}tsch and K.~Fredenhagen, \emph{Perturbative algebraic field theory, and deformation
		quantization}, Mathematical Physics in Mathematics and Physics: Quantum and
	Operator Algebraic Aspects \textbf{30} (2001), 1--10.
	
	\bibitem{DF02}
	M.~D{\"u}tsch and K.~Fredenhagen, \emph{The {M}aster {W}ard identity and generalized {S}chwinger-{D}yson
		equation in classical field theory}, Communications in Mathematical Physics
	\textbf{243} (2003), no.~2, 275--314.
	
	\bibitem{DF04}
	M.~D{\"u}tsch and K.~Fredenhagen, \emph{Causal perturbation theory in terms of retarded products, and a
		proof of the {A}ction {W}ard {I}dentity}, Reviews in Mathematical Physics
	\textbf{16} (2004), no.~10, 1291--1348.
	
	\bibitem{DF05}
	M.~D{\"u}tsch and K.~Fredenhagen, \emph{Action ward identity and the {S}t{\"u}ckelberg-{P}etermann
		renormalization group}, Progress in Mathematics, vol. 251, pp.~113--123,
	Birkh{\"a}user Verlag, Basel, 2007.
	
	\bibitem{DHP}
	N. Drago, T.-P. Hack, and N.~Pinamonti, \emph{The generalised principle of
		perturbative agreement and the thermal mass}, Annales Henri Poincar{\'e}
	\textbf{18} (2016), no.~3, 807--868.
	
	\bibitem{EG}
	H.~Epstein and V.~Glaser, \emph{The role of locality in perturbation theory},
	AHP \textbf{19} (1973), no.~3, 211--295.
	
	\bibitem{Fedosov}
	 B.~V.~Fedosov, \emph{A simple geometrical construction of deformation quantization}, Journal of Differential Geometry \textbf{40} (1994), no.~2, 213--238.
	
\bibitem{FV}
C.~Fewster and R.~Verch, \emph{Nonlocal Green Hyperbolic Operators}, to appear.
	
	\bibitem{r21796}
	K.~Fredenhagen and F.~Lindner, \emph{Construction of {KMS} states in
		perturbative {QFT} and renormalized {H}amiltonian dynamics}, Commun. Math.
	Phys. \textbf{332} (2014), no.~3, 895--932.
	
		\bibitem{FR}
	K.~Fredenhagen and K.~Rejzner, \emph{{B}atalin-{V}ilkovisky formalism in the functional approach to
		classical field theory}, Communications in Mathematical Physics \textbf{314}
	(2012), no.~1, 93--127.
	
	\bibitem{FR3}
	K.~Fredenhagen and K.~Rejzner, \emph{{B}atalin-{V}ilkovisky formalism in
		perturbative algebraic quantum field theory}, Communications in Mathematical
	Physics \textbf{317} (2012), no.~3, 697--725.
	
	
	\bibitem{Fredenhagen2015}
	K.~Fredenhagen and K.~Rejzner, \emph{Perturbative construction of models of algebraic quantum field
		theory}, pp.~31--74, Springer International Publishing, 2015.
	
	\bibitem{JMPReview}
	K.~Fredenhagen and K.~Rejzner, \emph{Quantum field theory on curved spacetimes: {A}xiomatic framework
		and examples}, Journal of Mathematical Physics \textbf{57} (2016), no.~3.
	
	\bibitem{Gerst3}
	M.~Gerstenhaber, \emph{On the deformation of rings and algebras {III}}, Annals
	of Mathematics (1968), 1--34.
	
	\bibitem{Haag}
	R.~Haag, \emph{Local quantum physics}, Springer-Verlag, Berlin, 1993.
	
	\bibitem{Ham}
	R.~S. Hamilton, \emph{The inverse function theorem of {N}ash and {M}oser},
	Bull. Amer. Math. Soc \textbf{7} (1982).
	
	\bibitem{Eli16}
	E.~Hawkins, \emph{A cohomological perspective on algebraic quantum field
		theory}, Comm.\ Math.\ Phys.\ \textbf{360} (2018) 439--479. [arXiv:math-ph/1612.05161].
	
	\bibitem{HK}
	R.~Haag and D.~Kastler, \emph{An algebraic approach to quantum field theory},
	Journal of Mathematical Physics \textbf{5} (1964), no.~7, 848--861.
	
	\bibitem{HW02}
	S.~Hollands and R.~M. Wald, \emph{On the renormalization group in curved
		spacetime}, Commun. Math. Phys. \textbf{237} (2002), 123--160.
	
	\bibitem{HW01}
	S.~Hollands, R.~M.~Wald,
		\emph{Existence of Local Covariant Time Ordered Products of Quantum Fields in Curved Spacetime},
		Commun. Math. Phys. \textbf{231} (2002), no.~2,
		309--345.
	
	\bibitem{HW05}
	S.~Hollands and R.~M. Wald, \emph{Conservation of the stress tensor in perturbative interacting
		quantum field theory in curved spacetimes}, Reviews in Mathematical Physics
	\textbf{17} (2005), no.~03, 227--311.
	
	\bibitem{Hoer1}
	{H{\"o}rmander, L.}, \emph{The Analysis of the Linear Partial Differential Operators I: Distribution Theory and Fourier Analysis}, Classics in Mathematics, Springer 2003.
	
	\bibitem{Kai}
	K.~J. Keller, \emph{Dimensional regularization in position space and a forest
		formula for regularized epstein-glaser renormalization}, Ph.D. thesis,
	University of Hamburg (2010), [arXiv:math-ph/1006.2148].
	
	\bibitem{Kon}
	M.~Kontsevich, \emph{Deformation quantization of {P}oisson manifolds, {I}},
	Lett. Math. Phys. \textbf{66} (2003), 157--216.
	
	\bibitem{MacLane}
	S.~MacLane, \emph{Categories for the working mathematician}, 2'nd ed., Graduate
	Texts in Mathematics, vol.~5, Springer, 1998.
	
	\bibitem{Mil}
	J.~Milnor, \emph{Remarks on infinite-dimensional {L}ie groups}, 1984.
	
	\bibitem{MMOY07}
	Y.~Maeda, N.~Miyazaki, H.~Omori, and A.~Yoshioka, \emph{Star exponential
		functions as two-valued elements}, vol. 232, pp.~483--492, Springer, 2007.
	
	\bibitem{Neeb}
	K.~H. Neeb, \emph{Monastir summer school. {I}nfinite-dimensional {L}ie groups},
	TU Darmstadt Preprint \textbf{2433} (2006).
	
	\bibitem{OS73}
	K.~Osterwalder and R.~Schrader, \emph{Axioms for {E}uclidean {G}reen's
		functions}, Communications in mathematical physics \textbf{31} (1973), no.~2,
	83--112.
	
	\bibitem{Pei}
	R.~E. Peierls, \emph{The commutation laws of relativistic field theory},
	Proceedings of the Royal Society of London. Series A. Mathematical and
	Physical Sciences \textbf{214} (1952), no.~1117, 143--157.
	
	\bibitem{Pin01}
	G.~Pinter, \emph{Finite renormalizations in the {E}pstein {G}laser framework
		and renormalization of the {S}-matrix of 4-theory}, Annalen Phys. \textbf{10}
	(2001), 333--363.
	
	\bibitem{SP82}
	G.~Popineau and R.~Stora, \emph{A pedagogical remark on the main theorem of
		perturbative renormalization theory}, Nuclear Physics B \textbf{912} (2016),
	70--78.
	
	\bibitem{Rad}
	M.~J. Radzikowski, \emph{Micro-local approach to the {H}adamard condition in
		quantum field theory on curved space-time}, Commun. Math. Phys. \textbf{179}
	(1996), 529--553.
	
	\bibitem{Rej11a}
	K.~Rejzner, \emph{Fermionic fields in the functional approach to classical
		field theory}, Reviews in Mathematical Physics \textbf{23} (2011), no.~9,
	1009--1033.
	
	\bibitem{Book}
	K.~Rejzner, \emph{Perturbative {A}lgebraic {Q}uantum {F}ield {T}heory. {A}n
		introduction for {M}athematicians}, Mathematical Physics Studies, Springer,
	2016.
	
	\bibitem{StreaterWightman}
	R.~F. Streater and A.~S. Wightman, \emph{{PCT}, spin and statistics, and all
		that}, Princeton University Press, 2000.
	
	\bibitem{Waldmann}
	S.~Waldmann, \emph{{P}oisson-{G}eometrie und {D}eformationsquantisierung:
		{E}ine {E}inf{\"u}hrung}, Springer-Verlag, 2007.
	
\end{thebibliography}

\end{document}